\newtheorem{proposition}{Proposition} [section]
\newtheorem{corollary}[proposition]{Corollary}
\newtheorem{lemma}[proposition]{Lemma}
\theoremstyle{definition}
\newtheorem {definition}[proposition]{Definition}
\title{Coalition and Group Announcement Logic\footnote{
This is a corrected version of \cite{galimullin17}. The previous version considered \textbf{CoGAL}, a combination of \textbf{CAL}
and \textbf{GAL} without relativised operators. There is a gap in the completeness proof of \textbf{CoGAL} given in 
\cite{galimullin17}. Specifically, the proof of the Lindenbaum Lemma (Proposition 2.19) fails to demonstrate that when adding a 
witness $\psi_G$ for $\neg \eta_i([ \! \langle G \rangle \! ] \varphi_i)$, we also have all the corresponding formulas with 
$\chi_{A \setminus G}$ (which is required by the semantics). Completeness of \textbf{CoGAL} is hence an open question. 
In this corrected version we consider relativised group announcement operators instead of \textbf{GAL} operators. This allows us 
to give a sound and complete axiomatisation of \textbf{CoRGAL}. We omit Propositions 2.11 and 2.12 of \cite{galimullin17} that 
also have errors in proofs.
We would like to acknowledge discussions with Hans van Ditmarsch and Tim French that helped us to identify and correct the errors.
} \\ (Corrected Version) }
\author{Rustam Galimullin \qquad\qquad Natasha Alechina
\institute{School of Computer Science\\
University of Nottingham\\
Nottingham, UK}
\email{\{rustam.galimullin, natasha.alechina\}@nottingham.ac.uk}
}
\begin{document}
\maketitle


\begin{abstract}
Dynamic epistemic logics which model abilities of agents to make various announcements and influence each other's knowledge have been studied extensively in recent years. Two notable examples of such logics are Group Announcement Logic and Coalition Announcement Logic. They allow us to reason about what groups of agents can achieve through joint announcements in non-competitive and competitive environments. In this paper, we consider a combination of these logics -- Coalition and Relativised Group Announcement Logic and provide its complete axiomatisation. Moreover, we partially answer the question of how group and coalition announcement operators interact, and settle some other open problems.
\end{abstract}

\section{Introduction}

To introduce the logics we will be working with in this paper, we start with an example loosely based on the one from \cite{renne09}.
Let us imagine that Ann, Bob, and Cath are travelling by train from Nottingham to Liverpool through Manchester. Cath was sound asleep all the way, and she has just woken up. She does not know whether the train passed Manchester, but Ann and Bob know that it has not. Now, if the train driver announces that the train is approaching Manchester, then Cath, as well as Ann and Bob, knows that they have not passed the city yet. To reason about changes in agents' knowledge after public announcements, we can use Public Announcement Logic (\(\mathbf{PAL}\)) \cite{plaza07}. Returning to the example, let us assume that the train driver does not announce anything, so that Cath is not aware of her whereabouts. Ann and Bob may tell her whether they passed Manchester.
In other words, Ann and Bob have an announcement that can influence Cath's knowledge. An extension of \(\mathbf{PAL}\), Group Announcement Logic (\(\mathbf{GAL}\)) \cite{agotnes10}, deals with the \emph{existence} of announcements by groups of agents that can achieve certain results. Now, let us assume that Ann does not want to disclose to Cath their whereabouts and Bob does, i.e. Ann and Bob have different goals. Then, it is clear that no matter what Ann says, the coalition of Bob and Cath can achieve the goal of Cath knowing that the train has not passed Manchester, that is, Bob can communicate this information to Cath. On the other hand, if Ann and Bob work together, then they have an announcement (for example, a tautology `It either rains in Liverpool or it doesn't') such that whatever Cath says, she remains unaware of her whereabouts. For this type of strategic behaviour, another extension of \(\mathbf{PAL}\) -- Coalition Announcement Logic (\(\mathbf{CAL}\)) -- has been introduced in \cite{agotnes08}.

$\mathbf{CAL}$ joins two logical traditions: Dynamic Epistemic Logic, of which \(\mathbf{PAL}\) is a representative, and Coalition Logic (\(\mathbf{CL}\)) \cite{pauly02}. The latter allows us to reason about whether a coalition of agents has a strategy to achieve some goal, no matter what the agents outside of the coalition do. \(\mathbf{CL}\) essentially talks about concurrent games, and the actions that the agents execute are arbitrary actions (strategies in one-shot games). So, from this perspective, \(\mathbf{CAL}\) is a coalition logic with available actions restricted to public announcements. 

To the best of our knowledge, there is no complete axiomatisation of \(\mathbf{CAL}\) \cite{agotnes08, vanditmarsch12, agotnes14, agotnes16} or any other logic with coalition announcement operators. In this paper, we consider Coalition and Relativised Group Announcement Logic (\(\mathbf{CoRGAL}\)), a combination of an extension of \(\mathbf{GAL}\) and \(\mathbf{CAL}\), which includes operators for both group and coalition announcements. The main result of this paper is a sound and complete axiomatisation of  \(\mathbf{CoRGAL}\). As part of this result, we study the interplay between group and coalition announcement operators, and partially settle the question on their interaction that was stated as an open problem in \cite{vanditmarsch12, agotnes16}. 

\section{Coalition and Relativised Group Announcement Logic} \label{sec:CoRGAL}
\subsection {Syntax and Semantics}

Throughout the paper, let a finite set of agents \(A\), and a countable set of propositional variables \(P\) be given. The language of the logic is comprised of the language of classical propositional logic with added operators for agents' knowledge \(K_a \varphi\) (reads `agent \(a\) knows \(\varphi\)'), and public announcement \([\psi] \varphi\) (reads `after public announcement that \(\psi\), \(\varphi\) holds'), relativised group announcement \([G, \chi] \varphi\) (`given some announcement $\chi$, whatever agents from \(G\) announce at the same time, \(\varphi\) holds afterwards'), and coalition announcements \([ \! \langle G \rangle \! ] \varphi\) (`for every public announcement by coalition of agents \(G\) there is an announcement by other agents \(A \setminus G\), such that \(\varphi\) holds after joint simultaneous announcement').

\begin{definition} (Language)
    The \emph{language of coalition and} \emph{relativised group announcement} \emph{logic} \(\mathcal{L}_{CoRGAL}\) is as follows:
    \begin{center}
        \(\varphi,\psi ::= p \mid \neg \varphi \mid (\varphi \wedge \psi) \mid K_a \varphi \mid [\varphi]\psi \mid [G, \psi]\varphi \mid [ \! \langle G \rangle \! ] \varphi\),
    \end{center}
    where \(p \in P\), \(a \in A\), \(G \subseteq A\), and all the usual abbreviations of propositional logic (such as \(\vee, \rightarrow, \leftrightarrow\)) and conventions for deleting parentheses hold. The dual operators are defined as follows: \(\widehat {K}_a \varphi \leftrightarrow \neg K_a \neg \varphi\), \(\langle \varphi \rangle \psi \leftrightarrow \neg [\varphi] \neg \psi\), \(\langle G, \psi \rangle \varphi \leftrightarrow \neg [ G, \psi ] \neg \varphi\), and \( \langle \! [ G ] \! \rangle \varphi \leftrightarrow \neg [ \! \langle G \rangle \! ] \neg \varphi\).  
Observe that  $\langle G, \psi \rangle \varphi$ means that $G$ has an announcement 
such that after announcing it in conjunction with $\psi$, $\varphi$ holds, and $\langle \! [ G ] \! \rangle \varphi$ means 
that $G$ has an announcement such that after it is made simultaneously with 
any announcement by $A\setminus G$, $\varphi$ holds. The latter corresponds
to the Coalition Logic operator, but for announcements instead of arbitrary 
actions.

We define \(\mathcal{L}_{RGAL}\) as the language without the operator \([ \! \langle G \rangle \! ]\), \(\mathcal{L}_{PAL}\) the language without \([G,\psi]\) as well, and \(\mathcal{L}_{EL}\) the purely epistemic language which in addition does not contain announcement operators \([\varphi]\).

\end{definition}

Next definition is needed for technical reasons in the formulation of infinite rules of inference in Definition \ref{def::axiomatisation}. We want the rules to work for a class of different types of premises. Ultimately, we require premises to be expressions of  depth \(n\) of the type \(\varphi_1 \rightarrow \square_1 (\varphi_2 \rightarrow \mathellipsis (\varphi_n \rightarrow \square_n \sharp) \mathellipsis)\), where \(\square_i\) is either \(K_a\) or \([\psi]\) for some \(a \in A\) and \(\psi \in \mathcal{L}_{CoRGAL}\), atom \(\sharp\) denotes a placement of a formula to which a derivation is applied,  and some \(\varphi\)'s and \(\square\)'s can be omitted. This condition is captured succinctly by necessity forms originally introduced by Goldblatt in \cite{goldblatt}.

\begin{definition} (Necessity forms)
    Let \(\varphi \in \mathcal{L}_{CoRGAL}\), then \emph{necessity forms} \cite{goldblatt} are inductively defined as follows:
    \begin{center}
        \(\eta ::= \sharp \mid \varphi \rightarrow \eta (\sharp) \mid K_a \eta (\sharp) \mid [\varphi] \eta (\sharp)\).
    \end{center}
    The atom \(\sharp\) has a unique occurrence in each necessity form. The result of the replacement of \(\sharp\) with \(\varphi\) in some \(\eta (\sharp)\) is denoted as \(\eta (\varphi)\). 
\end{definition}

Whereas formulas of coalition logic \cite{pauly02} are interpreted in game structures, formulas of \(\mathbf{CoRGAL}\) are interpreted in epistemic models. Let us consider an example of such a model first. 
\begin{figure}[h]
    \centering
    \includegraphics {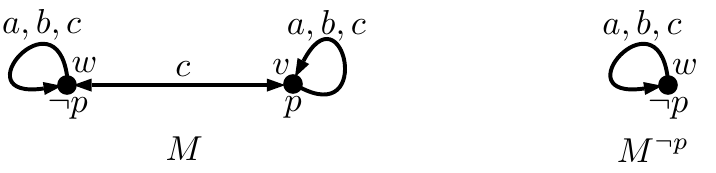}
    \caption{Train example}
    \label{fig::tark1}
\end{figure}
In Figure \ref{fig::tark1} there are three agents: \(a\) (Ann), \(b\) (Bob), and \(c\) (Cath). Let \(p\) denote the proposition that `The train has passed Manchester.' There are two states in the model \(M\): a state $w$ where
\(\neg p\) is true, and a state $v$ where \(p\) is true; and only one state in model \(M^{\neg p}\) which denotes $M$ updated by the announcement $\neg p$ (the process of updating the model is described below). 
Let the $w$ be the actual state. Edges connect states that an agent cannot distinguish. In the actual state $w$ of $M$, Cath (agent \(c\)) does not know whether $p$ is true. Ann and Bob, on the contrary, know that $p$ is false. Now suppose that Bob announces that $\neg p$. This truthful public announcement `deletes' all the states where $p$ is true, and the corresponding epistemic indistinguishability relations; in this example, $v$ is `deleted,' and the resulting model is \(M^{\neg p}\). After this announcement Cath knows \(\neg p\), or, formally, \([\neg p] K_c \neg p\). In this paper, within group and coalition announcements, we only quantify over announcements of formulas of the type \(K_a \varphi\). If a group consists only of Cath, who does not know $\neg p$ and hence cannot announce $K_c \neg p$, the following holds in state $w$ of \(M\): \([c, \top] (\neg K_c \neg p \wedge \neg K_c p)\), i.e. whatever \(c\) announces, she still does not know whether \(p\) after the announcement \footnote{For readability, we use $[c]$ rather than $[\{c\}]$ for singleton coalitions.}.  Also, Ann and Bob can remain silent (or announce a tautology \(\top\)) and preclude Cath from knowing that $\neg p$. In other words, there is announcement by their group such that after it is made, agent \(c\) does not know the value of \(p\): \(\langle \{a, b\}, \top \rangle (\neg K_c \neg p \wedge \neg K_c p)\). Moreover, this holds whatever Cath announces at the same time: \(\langle \! [ \{a, b\} ] \! \rangle (\neg K_c \neg p \wedge \neg K_c p)\). On the other hand, a coalition consisting of Ann and Cath does not have such a power, since Bob can always announce that $\neg p$:
\(\neg \langle \! [ \{a, c\} ] \! \rangle (\neg K_c \neg p \wedge \neg K_c p)\), or, equally, \([ \! \langle \{a, c\} \rangle \! ] (K_c \neg p \vee K_c p)\).

Now, we provide formal definitions.

\begin{definition} (Epistemic model)
An \emph{epistemic model} is a triple \(M = (W, \sim, V)\), where
\begin{itemize}
    \item \(W\) is a non-empty set of states;
    \item \(\sim:A \rightarrow \mathcal{P}(W \times W)\) assigns an equivalence relation to each agent; we will denote relation assigned to agent $a \in A$ by $\sim_a$;
    \item \(V:P \rightarrow \mathcal{P}(W)\) assigns a set of states to each propositional variable. 
\end{itemize}
A pair \((W,\sim)\) is called an \emph{epistemic frame}, and a pair \((M,w)\) with \(w \in W\) is called a \emph{pointed model}. An announcement in a pointed model \((M,w)\) results in an \textit{updated pointed model}  
\((M^\varphi, w)\).
Here \(M^\varphi = (W^\varphi, \sim^\varphi, V^\varphi)\), and \(W^\varphi = \llbracket \varphi \rrbracket_M\), \(\sim^\varphi_a = \sim_a \cap\) \((\llbracket \varphi \rrbracket_M \times \llbracket \varphi \rrbracket_M)\), and \(V^\varphi (p) = V(p) \cap \llbracket \varphi \rrbracket_M\). Generally speaking, an updated pointed model \((M^\varphi,w)\) is a restriction of the original one to the states where \(\varphi\) holds. 
\end{definition}

Let \(\mathcal{L}_{EL}^G\) denote the set of formulas of the type \(\bigwedge_{i \in G} K_i \varphi_i\), where for every \(i \in G\) it holds that \(\varphi_i \in \mathcal{L}_{EL}\). We denote elements of  \(\mathcal{L}_{EL}^G\) as $\psi_G$. These are the formulas we will be quantifying
over in modalities of the form $[G, \chi]$ and $[ \! \langle G \rangle \! ]$. 

\begin{definition} (Semantics)
\label{def:corgalsemantics}
Let a pointed model \((M,w)\) with \(M = (W\), \(\sim, V)\), \(a \in A\), and \(\varphi\), \(\psi \in \mathcal{L}_{CoRGAL}\) be given.
\begin{center}
\(
\begin{array} {lcl}
(M,w) \models p  &\textrm{iff} &w \in V(p)\\

(M,w) \models \neg \varphi &\textrm{iff} &(M,w) \not \models \varphi\\

(M,w) \models \varphi \wedge \psi &\textrm{iff} &(M,w) \models \varphi \textrm{ and } (M,w) \models \psi\\

(M,w) \models K_a\varphi &\textrm{iff} &\forall v \in W: w \sim_a v \textrm{ implies } (M,v) \models \varphi\\

(M,w) \models [\varphi]\psi &\textrm{iff} &(M,w) \models \varphi \textrm{ implies } (M^\varphi,w) \models \psi\\ 

(M,w) \models [G, \chi] \varphi &\textrm{iff} &(M,w) \models \chi \textrm { and } \forall \psi_G : (M,w) \models  [ \psi_G \wedge \chi ] \varphi\\

(M,w) \models [ \! \langle G \rangle \! ]\varphi &\textrm{iff} &\forall \psi_G \exists \chi_{A \setminus G}:  (M,w) \models \psi_G \rightarrow \langle \psi_G \wedge \chi_{A \setminus G} \rangle \varphi\\

\end{array}\)
\end{center}

\end{definition}
\noindent
 Formula \(\varphi\) is called \textit{valid} if for any pointed model \((M,w)\) it holds that \((M,w) \models \varphi\).  

The semantics for the `diamond' versions of knowledge and public announcement operators (\(\widehat {K}_a \varphi\) and \(\langle \varphi \rangle \psi\)) respectively) are obtained by changing \(\forall\) to \(\exists\) and `implies' to `and' in the corresponding lines. The semantics for duals of relativised group announcements and coalition announcements is as follows:

\begin{center}
\(
\begin{array}{lcl}
(M,w) \models \langle G, \chi \rangle \varphi &\textrm{iff} &(M,w) \models \chi \textrm { implies } \exists \psi_G: (M,w) \models  \langle \psi_G \wedge \chi \rangle \varphi\\

(M,w) \models \langle \! [ G ] \! \rangle \varphi &\textrm{iff} &\exists \psi_G \forall \chi_{A\setminus G}: (M,w) \models \psi_G \wedge [ \psi_G \wedge \chi_{A \setminus G} ] \varphi,\\

\end{array}\)
\end{center}
The existential version of the coalition announcement operator is read as `there is an announcement by agents from \(G\), such that whatever other agents \(A \setminus G\) announce at the same time, \(\varphi\) holds.'

Note that semantics of coalition announcement operators are given in a `classic' way. An equivalent definition is possible using relativised group announcements.

\begin{center}
\(
\begin{array} {lcl}

(M,w) \models [ \! \langle G \rangle \! ] \varphi &\textrm{iff} &\forall \psi_G: (M,w) \models \langle A \setminus G, \psi_G \rangle \varphi \\  

(M,w) \models \langle \! [ G ] \! \rangle \varphi &\textrm{iff} &\exists \psi_G: (M,w) \models [A \setminus G, \psi_G] \varphi

\end{array}\)
\end{center}

We can use relativised group announcements to define classic group announcements: $[G] \varphi \leftrightarrow [G, \top] \varphi$ and $\langle G \rangle \varphi \leftrightarrow \langle G, \top \rangle \varphi$. 

Following \cite{balbiani07, balbiani08, agotnes10, agotnes08, agotnes16, balbiani15, vanditmarsch12, agotnes14} we restrict formulas which agents in a group or coalition can announce to formulas of \( \mathcal{L}_{EL}\).
This allows us to avoid circularity in the definition.

\subsection{Axiomatisation and Some Logical Properties}
In this section we present an axiomatisation of \(\mathbf{CoRGAL}\) and show its soundness. It is based on the axiom systems for \textbf{PAL}, and have two additional axioms and four additional rules of inference. 

\begin{definition}
\label{def::axiomatisation}
The \emph{axiom system for} \textbf{CoRGAL} is an extension of \textbf{PAL} with a relativised version of \textbf{GAL} and interaction axioms.
\begin{center}
$  \begin{array}[t]{ll}
    (A0) &\textrm{propositional tautologies},  \\

    (A1) &K_a(\varphi \rightarrow \psi) \rightarrow (K_a \varphi \rightarrow K_a \psi), \\      

    (A2) &K_a\varphi \rightarrow \varphi, \\

        (A3) &K_a\varphi \rightarrow K_a K_a \varphi,  \\  

        (A4) &\neg K_a \varphi \rightarrow K_a \neg K_a \varphi,  \\

        (A5) &[\varphi]p \leftrightarrow (\varphi \rightarrow p),  \\

        (A6) &[\varphi] \neg \psi \leftrightarrow (\varphi \rightarrow \neg [\varphi] \psi),  \\

        (A7) &[\varphi](\psi \wedge \chi) \leftrightarrow ([\varphi]\psi \wedge [\varphi]\chi),\\

        (A8) &[\varphi]K_a\psi \leftrightarrow (\varphi \rightarrow K_a[\varphi]\psi), \\

        (A9) &[\varphi][\psi]\chi \leftrightarrow [\varphi \wedge [\varphi]\psi] \chi, \\       
    
  	(A10) &[G, \chi]\varphi \rightarrow \chi \wedge [\psi_G \wedge \chi]\varphi \textrm{ for any } \psi_G,\\

 	(A11) &[ \! \langle G \rangle \! ] \varphi \rightarrow \langle A \setminus G, \psi_G \rangle \varphi \textrm{ for any } \psi_G, \\

 	(R0) &\textrm{If } \vdash \varphi \textrm{ and } \vdash \varphi \rightarrow \psi, \textrm{ then } \vdash \psi, \\

	(R1) &\textrm{If }\vdash \varphi, \textrm{ then } \vdash K_a \varphi, \\

	(R2) &\textrm{If }\vdash \varphi, \textrm{ then } \vdash [\psi] \varphi, \\

	(R3) &\textrm{If }\vdash \varphi, \textrm{ then } \vdash [G, \chi] \varphi, \\

	(R4) &\textrm{If }\vdash \varphi, \textrm{ then } \vdash [ \! \langle G \rangle \! ] \varphi, \\

	(R5) &\textrm{If } \forall \psi_G: \vdash \eta (\chi \wedge [\psi_G \wedge \chi] \varphi), \textrm{ then } \vdash \eta ([G, \chi ]\varphi), \\

 	(R6) &\textrm{If } \forall \psi_G:  \vdash \eta (\langle A \setminus G, \psi_G \rangle \varphi), \textrm{ then } \vdash \eta ([ \! \langle G \rangle \! ]\varphi).\\

\end{array}$
\end{center}

\end{definition}

So, \(\mathbf{CoRGAL}\) is the smallest subset of \(\mathcal{L}_{CoRGAL}\) that contains all the axioms \(A0\) -- \(A11\) and is closed under rules of inference \(R0\) -- \(R6\). Elements of \(\mathbf{CoRGAL}\) are called \textit{theorems}. Note that \(R5\) and \(R6\) are infinitary rules: they require an infinite number of premises. Finding finite axiomatisations of any of \(\mathbf{APAL}\), \(\mathbf{GAL}\), or \(\mathbf{CAL}\) is an open problem. Note also that \(\mathbf{CoRGAL}\) includes coalition logic \cite{pauly02}, that is all the axioms of the latter are validities of \(\mathbf{CoRGAL}\) and a rule of inference preserves validity (see Appendix A).

\begin{definition} (Soundness and completeness)
	An axiomatisation is \textit{sound}, if for any formula \(\varphi\) of the language, it holds that \(\varphi \in \mathbf{CoRGAL}\) implies \(\varphi\) is valid. And vice versa for \textit{completeness}.
\end{definition}

Soundness of \(A0\)--\(A4\), \(R0\), and \(R1\) is due to soundness of \(\mathbf{S5}\). Axioms \(A5\)--\(A9\) and rule of inference \(R3\) are sound, since \(\mathbf{PAL}\) is sound \cite{del}. We show soundness of $R3$--$R6$ in Proposition \ref{prop::soundness}, and validity of \(A10\) and $A11$ in Proposition \ref{prop:validity}.

\begin{proposition}
\label{prop:validity}
Axioms $A10$ and $A11$ are valid.
\end{proposition}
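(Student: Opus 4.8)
The plan is to prove both validities by a direct appeal to the truth conditions in Definition~\ref{def:corgalsemantics}, since in each case the consequent of the implication is simply an instance of a universally quantified clause occurring in the truth condition of the antecedent. No induction or model construction is needed; the argument is essentially universal instantiation.

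For $A10$, I would fix an arbitrary pointed model $(M,w)$ and an arbitrary $\psi_G \in \mathcal{L}_{EL}^G$, and assume $(M,w) \models [G,\chi]\varphi$. By the semantic clause for $[G,\chi]$ this means $(M,w) \models \chi$ and, for every $\psi_G' \in \mathcal{L}_{EL}^G$, $(M,w) \models [\psi_G' \wedge \chi]\varphi$. Instantiating the universal quantifier at the chosen $\psi_G$ yields $(M,w) \models [\psi_G \wedge \chi]\varphi$, and combining this with $(M,w) \models \chi$ gives $(M,w) \models \chi \wedge [\psi_G \wedge \chi]\varphi$, as required.

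For $A11$, I would again fix $(M,w)$ and an arbitrary $\psi_G$, and assume $(M,w) \models [ \! \langle G \rangle \! ]\varphi$. Here it is cleanest to use the equivalent formulation of the coalition operator via relativised group announcements, namely that $(M,w) \models [ \! \langle G \rangle \! ]\varphi$ holds iff $(M,w) \models \langle A \setminus G, \psi_G' \rangle \varphi$ for every $\psi_G'$; instantiating at the chosen $\psi_G$ immediately gives $(M,w) \models \langle A \setminus G, \psi_G \rangle \varphi$. Alternatively, working directly from the primary clause, from $(M,w) \models [ \! \langle G \rangle \! ]\varphi$ one obtains, for the chosen $\psi_G$, some $\chi_{A \setminus G}$ with $(M,w) \models \psi_G \rightarrow \langle \psi_G \wedge \chi_{A \setminus G} \rangle \varphi$; unfolding the truth condition of $\langle A \setminus G, \psi_G \rangle \varphi$ then splits into the case $(M,w) \models \psi_G$, where this $\chi_{A \setminus G}$ serves as the required witness, and the case $(M,w) \not\models \psi_G$, where the condition holds vacuously.

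I do not expect a genuine obstacle. The only point requiring care is the interaction between the leading implication $\psi_G \rightarrow \cdots$ and the existential witness $\chi_{A \setminus G}$ in the coalition clause of $A11$: one must confirm that the two stated formulations of $[ \! \langle G \rangle \! ]$ really coincide (they do, precisely because when $\psi_G$ fails at $w$ the implication is vacuously satisfied by any $\chi_{A \setminus G}$), so that the instantiation is legitimate. Matching the bound variable $\psi_G$ of each semantic clause with the free parameter $\psi_G$ of the axiom schema is what makes both validities go through.
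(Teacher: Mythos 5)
Your proof is correct and takes essentially the same approach as the paper's: a direct appeal to the semantic clauses of Definition~\ref{def:corgalsemantics}, instantiating the universal quantifier over $\psi_G$ at the parameter of the axiom schema (the paper, like you, treats $A10$ as immediate and spells out only $A11$). Your explicit case split for $A11$ on whether $(M,w) \models \psi_G$ holds merely unfolds the quantifier--implication equivalence that the paper dispatches with ``by semantics,'' so there is no substantive difference.
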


\begin{proof}
Follows directly from the definition of semantics (Definition \ref{def:corgalsemantics}). We just show validity of $(A11)$.

Assume that for some arbitrary pointed model $(M,w)$ it holds that $(M,w) \models [ \! \langle G \rangle \! ] \varphi$. By semantics this is equivalent to $\forall \psi_G, \exists \chi_{A \setminus G}$: $(M,w) \models \psi_G \rightarrow \langle \psi_G \wedge \chi_{A \setminus G} \rangle \varphi$. Since $\psi_G$ quantifies over all epistemic formulas known to $G$, we can choose any particular $\psi_G$. Hence, we have that $\exists \chi_{A \setminus G}$: $(M,w) \models \psi_G \rightarrow \langle \psi_G \wedge \chi_{A \setminus G} \rangle \varphi$, which is equivalent to $(M,w) \models \langle A \setminus G, \psi_G \rangle \varphi$ by semantics.
\end{proof}

\begin{proposition}
\label{prop::soundness}
    $R3$, $R4$, $R5$, and $R6$ are sound, that is, they preserve validity.
\end{proposition}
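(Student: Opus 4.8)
The plan is to show that each of the four rules preserves validity by arguing directly over the Kripke semantics of Definition~\ref{def:corgalsemantics}, splitting the rules into two kinds. The two necessitation-style rules $R3$ and $R4$ I would treat by a short direct argument, whereas the two infinitary rules $R5$ and $R6$ require a preliminary lemma about necessity forms, which is where the real work lies. Two semantic facts underlie everything: (i) a valid formula remains true in every updated model $(M^\psi,w)$ whenever the update is defined, since validity means truth in every pointed model, including restrictions; and (ii) the equivalent reading of the coalition operator $(M,w)\models[ \! \langle G \rangle \! ]\varphi$ iff $\forall\psi_G:(M,w)\models\langle A\setminus G,\psi_G\rangle\varphi$, together with the unfolding of $[G,\chi]\varphi$ as ``$(M,w)\models\chi$ and $\forall\psi_G:(M,w)\models[\psi_G\wedge\chi]\varphi$''.

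For $R4$, assume $\varphi$ is valid and fix an arbitrary $(M,w)$; I must show that for every $\psi_G$ there is some $\chi_{A\setminus G}$ with $(M,w)\models\psi_G\rightarrow\langle\psi_G\wedge\chi_{A\setminus G}\rangle\varphi$. The witness I would always choose is the trivial counter-announcement $\chi_{A\setminus G}=\bigwedge_{j\in A\setminus G}K_j\top$, which is itself valid; if $(M,w)\models\psi_G$ then the update by $\psi_G\wedge\chi_{A\setminus G}$ keeps $w$, and fact~(i) gives $(M^{\psi_G\wedge\chi_{A\setminus G}},w)\models\varphi$, establishing the diamond. The argument for $R3$ runs in parallel: once $\varphi$ is valid, fact~(i) makes the conjunct $\forall\psi_G:(M,w)\models[\psi_G\wedge\chi]\varphi$ hold automatically, so the only remaining content is the relativising conjunct $\chi$ appearing in the semantics of $[G,\chi]$. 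This $\chi$-conjunct is the one place in the two necessitation rules where I would check the bookkeeping most carefully, since it is not forced by validity of $\varphi$ alone.

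For the infinitary rules the main obstacle is a lemma on necessity forms which I would establish first: for every necessity form $\eta$ and pointed model $(M,w)$ there is a pointed model $(M^\ast,w^\ast)$, depending only on $\eta$ and $(M,w)$ and not on what is substituted for $\sharp$, such that $(M,w)\not\models\eta(\alpha)$ iff $(M^\ast,w^\ast)\not\models\alpha$; this is a routine induction on $\eta$, peeling off the prefixes $\varphi\rightarrow(\cdot)$, $K_a(\cdot)$, $[\varphi](\cdot)$ and, in the latter two cases, passing to an $a$-successor or to the updated model. Granting this, $R5$ goes by contradiction: if $\eta([G,\chi]\varphi)$ fails at some $(M,w)$, descend to $(M^\ast,w^\ast)\not\models[G,\chi]\varphi$, so by the unfolding of $[G,\chi]$ either $(M^\ast,w^\ast)\not\models\chi$ or some $\psi_G$ gives $(M^\ast,w^\ast)\not\models[\psi_G\wedge\chi]\varphi$; in either case there is a $\psi_G$ with $(M^\ast,w^\ast)\not\models\chi\wedge[\psi_G\wedge\chi]\varphi$, and since $(M^\ast,w^\ast)$ is independent of the placeholder the lemma lifts this to $(M,w)\not\models\eta(\chi\wedge[\psi_G\wedge\chi]\varphi)$, contradicting validity of that premise. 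Rule $R6$ is identical in shape, using fact~(ii): a failure of $\eta([ \! \langle G \rangle \! ]\varphi)$ yields $(M^\ast,w^\ast)\not\models[ \! \langle G \rangle \! ]\varphi$, hence a single $\psi_G$ with $(M^\ast,w^\ast)\not\models\langle A\setminus G,\psi_G\rangle\varphi$, which lifts to a violated premise $\eta(\langle A\setminus G,\psi_G\rangle\varphi)$. The crux throughout is that the descendant $(M^\ast,w^\ast)$ does not depend on the substituted formula, so the very same model simultaneously witnesses the failure of the conclusion and of one instance of the premise.
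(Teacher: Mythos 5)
Your handling of $R4$ coincides with the paper's (Appendix B): validity is preserved under announcements, and the trivial witness $\bigwedge_{j \in A \setminus G} K_j \top$ discharges the existential quantifier over $\chi_{A \setminus G}$. For $R5$ and $R6$, however, you take a genuinely different route. The paper fixes a pointed model $(M,w)$, assumes all premises true \emph{there}, and derives the conclusion \emph{there} by induction on the necessity form $\eta$ (cases $\tau \rightarrow \eta'$, $K_a \eta'$, $[\tau]\eta'$: unfold the semantics, apply the induction hypothesis, refold); your argument is the contrapositive of this, packaged as a countermodel-descent lemma. The idea works, but your lemma is overstated: as a biconditional with $(M^\ast,w^\ast)$ independent of what is substituted for $\sharp$, it is false. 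For $\eta = \varphi \rightarrow \sharp$ with $\varphi$ false at $w$, $\eta(\alpha)$ holds for every $\alpha$, so no $(M^\ast,w^\ast)$ can falsify exactly the right formulas; and for $\eta = K_a \sharp$ with $w \sim_a v_1$, $w \sim_a v_2$, $(M,v_1) \models p \wedge \neg q$ and $(M,v_2) \models q \wedge \neg p$, both $\eta(p)$ and $\eta(q)$ fail at $w$ while $\eta(p \vee q)$ holds, yet no single pointed model falsifies $p$ and $q$ without falsifying $p \vee q$ --- different substituted formulas need different witnesses. What your induction actually constructs, and all your application uses, is the weaker statement: \emph{if} $(M,w) \not \models \eta(\alpha_0)$, \emph{then} there is $(M^\ast,w^\ast)$, built from that particular failure, with $(M^\ast,w^\ast) \not \models \alpha_0$ and such that $(M^\ast,w^\ast) \not \models \beta$ implies $(M,w) \not \models \eta(\beta)$ for every $\beta$. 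Restated this way, your proofs of $R5$ and $R6$ go through and yield the same pointwise truth-preservation that the paper's induction gives.

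Your hesitation about $R3$ is not mere bookkeeping; the step you flagged genuinely fails. Under the semantics of Definition \ref{def:corgalsemantics}, $[G,\chi]\varphi$ entails $\chi$, so validity of $\varphi$ cannot yield validity of $[G,\chi]\varphi$: we have $\models \top$, but $[G,p]\top$ fails at every state where $p$ fails, so $R3$ read as ``if $\models \varphi$ then $\models [G,\chi]\varphi$'' for arbitrary $\chi$ does not preserve validity, and no amount of care can close that gap (indeed, combining $R3$ with $A10$ would let one derive $\vdash p$). Note that the paper's own appendix proof has the same hole: it declares $R3$ ``similar'' to $R4$ but proves only $R4$, and the analogy breaks exactly at the conjunct $\chi$, since in the coalition semantics the corresponding conjunct $\psi_G$ sits harmlessly under an implication ($\psi_G \rightarrow \langle \psi_G \wedge \chi_{A \setminus G} \rangle \varphi$) rather than being asserted outright. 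So for $R3$ the defect lies in the proposition as stated (a sound variant would be, e.g., ``if $\vdash \varphi$ then $\vdash \chi \rightarrow [G,\chi]\varphi$''), not in your argument; for $R4$, $R5$, and $R6$ your proposal is sound once the descent lemma is weakened as above.
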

\begin{proof}
A proof is given in Appendix B (Proposition \ref{prop::app1}).
 \end{proof}
 
 Next, we show some properties of \textbf{CAL} and \textbf{GAL}.
 
Whether \textbf{CAL} operators can be expressed in \textbf{GAL} is an open question. The most probable definition of coalition announcements in terms of group announcements is $\langle \! [ G ] \! \rangle \varphi \leftrightarrow \langle G \rangle [A \setminus G] \varphi$. Validity of this formula was stated to be an open question in \cite{vanditmarsch12,agotnes16}. We partially settle this problem by proving one direction.

Consider the left-to-right direction of the formula. In the antecedent all agents make a simultaneous announcement, whereas in the consequent agents from $A \setminus G$ know the announcement $\psi_G$ made by $G$. Thus, in the updated model $(M^{\psi_G},w)$ the agents in $A \setminus G$ may have learned some \textit{new} epistemic formulas $\chi_{A \setminus G}$ that they did not know before the announcement. However, since $\psi_G$ holds in the initial model, and $\chi_{A \setminus G}$ holds in the updated one, agents from $A \setminus G$ can always make an announcement in the initial model that they know that after announcement of $\psi_G$, $\chi_{A \setminus G}$ is true. 

 Returning to our example (Figure \ref{fig::tark1}), whichever formulae \(\psi_1\) and \(\psi_2\) Ann and Bob announce, and whichever formula \(\varphi\) Cath learns afterwards, she can always announce \([\psi_1 \wedge \psi_2] K_c \varphi\) simultaneously with them in the initial situation. Informally, if after Bob's announcement of $\neg p$, 
Cath learns that $\neg p$, she can announce: `If you say that \(\neg p\) holds, then I will know it,' or $[\neg p] K_c \neg p$. We use this idea to prove that if the agents in $A \setminus G$ can prevent $\varphi$ after the announcement by $G$, then they could have prevented it before. 

Due to restriction of announcements to formulas of epistemic logic, we cannot directly employ public announcement operators in agents' `utterances.' In order to avoid this, we use the standard translation of $\mathbf{PAL}$ into epistemic logic.
\begin{definition}
\label{def:translation}
\emph{Translation function} \(t:\mathcal{L}_{PAL} \rightarrow \mathcal{L}_{EL}\) \cite{del} is defined as follows:
\begin{center}
\(
\begin{array}[t] {lcl}
t(p) &= &p, \\

t(\neg \varphi) &= &\neg t(\varphi), \\

t(\varphi \wedge \psi) &= &t(\varphi) \wedge t(\psi), \\

t(K_a \varphi) &= &K_at(\varphi), \\

t([\varphi]p) &= &t(\varphi \rightarrow p),\\

\end{array}\)
\( \begin{array}[t]{lcl}

t([\varphi] \neg \psi) &= &t(\varphi \rightarrow \neg [\varphi]\psi),\\

t([\varphi](\psi \wedge \chi)) &= &t([\varphi] \psi \wedge [\varphi] \chi),\\

t([\varphi]K_a \psi) &= &t(\varphi \rightarrow K_a [\varphi]\psi),\\

t([\varphi][\psi]\chi) &= &t([\varphi \wedge [\varphi]\psi]\chi).\\
\end {array} \)
\end{center}

Every \(\varphi \in \mathcal{L}_{PAL}\) is equivalent to \(t(\varphi) \in \mathcal{L}_{EL}\).
\end{definition}

\begin{proposition}
\label{prop:axiom}
    $\langle \! [ G ] \! \rangle  \varphi \rightarrow \langle G \rangle [ A \setminus G ] \varphi$ is valid.
\end{proposition}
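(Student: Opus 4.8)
The plan is to prove the validity of $\langle \! [ G ] \! \rangle \varphi \rightarrow \langle G \rangle [A \setminus G] \varphi$ semantically, by unfolding both sides and exhibiting, for the consequent, a concrete witness announcement built from the witness guaranteed by the antecedent together with the translation function $t$ of Definition \ref{def:translation}. Fix an arbitrary pointed model $(M,w)$ and assume $(M,w) \models \langle \! [ G ] \! \rangle \varphi$. By the semantics, this gives a witness $\psi_G \in \mathcal{L}_{EL}^G$ such that for \emph{every} $\chi_{A \setminus G}$ we have $(M,w) \models \psi_G \wedge [\psi_G \wedge \chi_{A \setminus G}] \varphi$. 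My goal is to show $(M,w) \models \langle G \rangle [A \setminus G] \varphi$, i.e. (recalling $\langle G \rangle = \langle G, \top \rangle$) that there is some announcement $\psi_G'$ by $G$ such that after announcing it, $[A \setminus G] \varphi$ holds; unfolding the relativised group box, this means that in $(M^{\psi_G'}, w)$, for every $\chi'_{A \setminus G}$ we have $(M^{\psi_G'}, w) \models [\chi'_{A \setminus G}] \varphi$.

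The natural choice is to reuse the same $\psi_G$ as the $G$-announcement in the consequent, so I would take $\psi_G' = \psi_G$; since $(M,w) \models \psi_G$, the announcement is truthful and $(M^{\psi_G}, w)$ is well defined. It then remains to show that for an arbitrary $\chi'_{A \setminus G}$ announceable by $A \setminus G$ in the \emph{updated} model $M^{\psi_G}$, we have $(M^{\psi_G}, w) \models [\chi'_{A \setminus G}] \varphi$. Here is the crux of the argument hinted at in the surrounding prose: $A \setminus G$ may have \emph{learned} new facts after $\psi_G$ was announced, so $\chi'_{A \setminus G}$ need not be announceable in the original $M$. The repair is to ``pull back'' $\chi'_{A \setminus G}$ to the original model. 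Writing $\chi'_{A \setminus G} = \bigwedge_{i \in A \setminus G} K_i \alpha_i$ with each $\alpha_i \in \mathcal{L}_{EL}$, I would define the pulled-back announcement $\chi_{A \setminus G} := \bigwedge_{i \in A \setminus G} K_i\, t([\psi_G] \alpha_i)$, which is a legitimate element of $\mathcal{L}_{EL}^{A \setminus G}$ because $t$ maps $\mathcal{L}_{PAL}$ into $\mathcal{L}_{EL}$ and $[\psi_G]\alpha_i \in \mathcal{L}_{PAL}$. Intuitively this is the formal version of ``if you announce $\psi_G$, then I will know $\alpha_i$.''

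The main technical step is then to verify that announcing $\psi_G \wedge \chi_{A \setminus G}$ in $M$ yields the same surviving states (from $w$'s perspective) as announcing $\psi_G$ and then $\chi'_{A \setminus G}$, so that the two updated models agree on $\varphi$. Concretely, using the reduction axiom $A9$ together with the equivalence $\varphi \equiv t(\varphi)$ from Definition \ref{def:translation}, one shows $[\psi_G \wedge \chi_{A \setminus G}]\varphi$ is equivalent (at $w$) to $[\psi_G][\chi'_{A \setminus G}]\varphi$; the point is that $[\psi_G] K_i \alpha_i \leftrightarrow (\psi_G \rightarrow K_i [\psi_G]\alpha_i)$ by $A8$, and modulo the truthful announcement of $\psi_G$ this matches the pulled-back conjunct $K_i t([\psi_G]\alpha_i)$. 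Feeding this $\chi_{A \setminus G}$ into the antecedent's universal quantifier gives $(M,w) \models [\psi_G \wedge \chi_{A \setminus G}] \varphi$, and transporting this equivalence across yields $(M^{\psi_G}, w) \models [\chi'_{A \setminus G}] \varphi$, as required.

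I expect the hard part to be precisely this equivalence between the ``simultaneous'' announcement $\psi_G \wedge \chi_{A \setminus G}$ in $M$ and the ``sequential'' announcement $[\psi_G][\chi'_{A \setminus G}]$ in $M^{\psi_G}$: one must check that the translation $t([\psi_G]\alpha_i)$ correctly captures what agent $i$ comes to know after $\psi_G$, and that the state $w$ survives both updates under the same conditions. The epistemic operator $K_i$ interacting with the update is the delicate point, since I must argue that $i$'s indistinguishability classes restricted to $\llbracket \psi_G \rrbracket_M$ make $K_i t([\psi_G]\alpha_i)$ in $M$ behave exactly like $K_i \alpha_i$ in $M^{\psi_G}$; this is essentially a careful application of the $\mathbf{PAL}$ reduction axiom $A8$ for knowledge under announcement, combined with the correctness of $t$.
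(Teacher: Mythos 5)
Your proposal is correct and takes essentially the same approach as the paper: both reuse the antecedent's witness $\psi_G$ for the consequent, instantiate the antecedent's universal quantifier with the pulled-back announcement $\bigwedge_{i \in A \setminus G} K_i\, t([\psi_G]\alpha_i)$ (the paper writes it as $t(K_{A\setminus G}[\psi_G]\chi'_{A\setminus G})$, the same formula since $t$ commutes with $K_i$), and use axioms $A8$, $A9$ together with the correctness of the translation $t$ to identify the simultaneous update $[\psi_G \wedge \chi_{A\setminus G}]$ with the sequential one $[\psi_G][\chi'_{A\setminus G}]$. The only difference is presentational: you argue backward from an arbitrary $\chi'_{A\setminus G}$ in the updated model, while the paper transforms the antecedent forward into the semantic clause for $\langle G \rangle [A \setminus G]\varphi$.
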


\begin{proof}

Assume that for some pointed model $(M,w)$ it holds that $(M,w) \models \langle \! [ G ] \! \rangle \varphi$. By the semantics of \textbf{CAL} this is equivalent to \[\exists \psi_G, \forall \chi_{A \setminus G}: (M,w) \models \psi_G \wedge [\psi_G \wedge \chi_{A \setminus G}] \varphi.\] Since $\chi_{A \setminus G}$ quantifies over all possible announcements by $A \setminus G$, it also quantifies over a specific subset of these announcements --- $K_{A \setminus G} [\psi_G] \chi^\prime_{A \setminus G} := \bigwedge_{a \in A \setminus G} K_a [\psi_G] \chi^\prime_a$ for some $\psi_G$ and for all $\chi^\prime_{a} \in \mathcal{L}_{EL}$.

Hence $\exists \psi_G, \forall \chi_{A \setminus G}$: $(M,w) \models \psi_G \wedge [\psi_G \wedge \chi_{A \setminus G}] \varphi$ implies
\[\exists \psi_G, \forall \chi^\prime_{A \setminus G}: (M,w) \models \psi_G \wedge [\psi_G \wedge K_{A \setminus G} [\psi_G] \chi^\prime_{A \setminus G}] \varphi.\]

Note that $K_{A \setminus G} [\psi_G] \chi^\prime_{A \setminus G}$ is not an epistemic formula \emph{per se}. It is equivalent, however, to an epistemic formula of type $K_{A \setminus G} \chi_{A \setminus G}$, where $\chi_{A \setminus G} \in \mathcal{L}_{EL}$, via translation $t(K_{A \setminus G} [\psi_G] \chi^\prime_{A \setminus G})$ (Definition \ref{def:translation}). Thus we have that 
\[\exists \psi_G, \forall \chi^\prime_{A \setminus G}: (M,w) \models \psi_G \wedge [\psi_G \wedge t(K_{A \setminus G} [\psi_G] \chi^\prime_{A \setminus G})] \varphi.\]

Let us consider announcement $\psi_G \wedge t(K_{A \setminus G} [\psi_G] \chi^\prime_{A \setminus G})$. By propositional reasoning it is equivalent to $\psi_G \wedge (\psi_G \rightarrow t(K_{A \setminus G} [\psi_G] \chi^\prime_{A \setminus G}))$. Since $\psi_G$ is an epistemic formula, the latter is equivalent to $\psi_G \wedge t(\psi_G \rightarrow K_{A \setminus G} [\psi_G] \chi^\prime_{A \setminus G})$. Applying the \textbf{PAL} axiom $[\psi] K_a \varphi \leftrightarrow (\psi \rightarrow K_a [\psi] \varphi)$, we get $\psi_G \wedge t([\psi_G] K_{A \setminus G} \chi^\prime_{A \setminus G})$, which is equivalent to $\psi_G \wedge [\psi_G] K_{A \setminus G} \chi^\prime_{A \setminus G}$. Finally, we have that
\[\exists \psi_G, \forall \chi^\prime_{A \setminus G}: (M,w) \models \psi_G \wedge [\psi_G \wedge [\psi_G] K_{A \setminus G} \chi^\prime_{A \setminus G}] \varphi.\]

Using the axiom $[\psi][\chi]\varphi \leftrightarrow [\psi \wedge [\psi] \chi] \varphi$, we get 
\[\exists \psi_G, \forall \chi^\prime_{A \setminus G}: (M,w) \models \psi_G \wedge [\psi_G] [K_{A \setminus G} \chi^\prime_{A \setminus G}] \varphi,\]
where $\chi^\prime_{A \setminus G} \in \mathcal{L}_{EL}$.
The latter is equivalent $(M,w) \models \langle G \rangle [A \setminus G] \varphi$ due to validity $\models \psi \wedge [\psi] \varphi \leftrightarrow \langle \psi \rangle \varphi$ and by the semantics of \textbf{GAL}.
\end{proof}

Intuition suggests that various groups and coalitions of agents, when united, can do no worse than if they were acting on their own. In the remaining part of this section we show that this intuition is indeed true.

We start with a somewhat obvious statement: if some configuration of a model can be achieved by a coalition, then the configuration can be achieved by a superset of the coalition.

\begin{proposition}
\label{prop:one}
$\langle \! [ G ] \! \rangle \varphi \rightarrow \langle \! [ G \cup H ] \! \rangle \varphi $, where $G,H \subseteq A$, is valid.
\end{proposition}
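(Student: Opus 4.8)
The plan is to prove validity semantically by starting from an arbitrary pointed model $(M,w)$ satisfying the antecedent $\langle \! [ G ] \! \rangle \varphi$ and showing it satisfies $\langle \! [ G \cup H ] \! \rangle \varphi$. By the semantics of the coalition diamond, the hypothesis gives us a witness $\psi_G \in \mathcal{L}_{EL}^G$ such that for \emph{every} $\chi_{A \setminus G}$ we have $(M,w) \models \psi_G \wedge [\psi_G \wedge \chi_{A \setminus G}] \varphi$. The goal, again by semantics, is to exhibit some $\psi_{G \cup H} \in \mathcal{L}_{EL}^{G \cup H}$ such that for every $\chi_{A \setminus (G \cup H)}$ we have $(M,w) \models \psi_{G \cup H} \wedge [\psi_{G \cup H} \wedge \chi_{A \setminus (G \cup H)}] \varphi$.

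The natural idea is to let the larger coalition reuse the smaller coalition's winning announcement. Concretely, I would try to take $\psi_{G \cup H}$ to be the witness $\psi_G$ itself (legitimately viewable as an element of $\mathcal{L}_{EL}^{G \cup H}$, since every conjunct is of the form $K_i \varphi_i$ with $i \in G \subseteq G \cup H$; the extra agents in $H \setminus G$ simply contribute nothing, or one pads with $K_h \top$). The key observation is that the outsiders of the larger coalition form a \emph{smaller} set: $A \setminus (G \cup H) \subseteq A \setminus G$. So any announcement $\chi_{A \setminus (G \cup H)}$ the larger coalition must defend against is in particular an admissible choice of $\chi_{A \setminus G}$ against which $\psi_G$ was already known to win. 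Thus the universally quantified premise on $\chi_{A \setminus G}$ instantiates to exactly the $\chi_{A \setminus (G \cup H)}$ we are handed, yielding $(M,w) \models \psi_G \wedge [\psi_G \wedge \chi_{A \setminus (G \cup H)}] \varphi$, which is what we need.

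The one point requiring care — and the step I expect to be the main obstacle — is reconciling the quantifier ranges across the two coalitions. For the smaller coalition the outsiders range over $A \setminus G$, while for the larger coalition they range over the strictly smaller $A \setminus (G \cup H)$; I must check that $\chi_{A \setminus (G \cup H)} \in \mathcal{L}_{EL}^{A \setminus (G \cup H)}$ genuinely qualifies as a $\chi_{A \setminus G} \in \mathcal{L}_{EL}^{A \setminus G}$, i.e. that dropping conjuncts $K_j \psi_j$ for the agents $j \in (A \setminus G) \cap (G \cup H) = H \setminus G$ is legitimate. This is handled by noting that $\mathcal{L}_{EL}^{S'}$ formulas can always be regarded as $\mathcal{L}_{EL}^{S}$ formulas when $S' \subseteq S$ by conjoining trivial knowledge conjuncts $K_j \top$ for the remaining agents, which are valid and hence do not change truth or the updated model. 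With that embedding in place the instantiation goes through and the validity follows directly from Definition \ref{def:corgalsemantics}.
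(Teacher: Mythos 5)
Your proof is correct and takes essentially the same route as the paper's: both reuse the winning announcement $\psi_G$, padded with trivial conjuncts $K_a \top$ for the agents in $H \setminus G$, and both bridge the quantifier mismatch by instantiating the premise's universal over $\chi_{A \setminus G}$ at formulas of the form $\chi_{A \setminus (G \cup H)} \wedge \bigwedge_{a \in H \setminus G} K_a \top$. The paper phrases this as ``unpacking'' $\chi_{A \setminus G}$ into $\chi_{H \setminus G} \wedge \chi_{A \setminus (G \cup H)}$ and setting $\chi_{H \setminus G} := \bigwedge_{a \in H \setminus G} K_a \top$ before merging it into $\psi_{G \cup H}$, but this is the same manipulation as your embedding.
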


\begin{proof}
Appendix, Proposition \ref{prop:oneapp}.
\end{proof}

It was shown in \cite{agotnes10} that $\langle G \rangle \varphi \leftrightarrow \langle G \rangle \langle G \rangle \varphi$. This property demonstrates that within the framework of \textbf{GAL} a multiple-step strategy of a group can be executed in a single step.  Whether this is true for \textbf{CAL} is an open question. We show, however, that if truth of some $\varphi$ can be achieved by two consecutive coalition announcements by $G$, then whatever agents from $A \setminus G$ announce, they cannot preclude $G$ from making $\varphi$ true.

\begin{proposition}
\label{prop:two}
$\langle \! [ G ] \! \rangle \langle \! [ G ] \! \rangle \varphi \rightarrow [ \! \langle A \setminus G \rangle \! ] \varphi$ is valid.
\end{proposition}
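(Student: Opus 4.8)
The plan is to prove validity semantically by unfolding both coalition operators and then, in the crucial step, compressing two consecutive announcements by $G$ into a single one, reusing the translation trick of Proposition~\ref{prop:axiom}. Write $B = A \setminus G$. The goal $(M,w)\models [ \! \langle B \rangle \! ]\varphi$ unfolds to: for every $\psi_B\in\mathcal{L}_{EL}^B$ there is $\chi_G\in\mathcal{L}_{EL}^G$ with $(M,w)\models\psi_B\rightarrow\langle\psi_B\wedge\chi_G\rangle\varphi$. So I would fix an arbitrary $\psi_B$ and assume $(M,w)\models\psi_B$, the other case being trivially witnessed by $\chi_G=\bigwedge_{g\in G}K_g\top$.

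From the hypothesis $(M,w)\models\langle \! [ G ] \! \rangle\langle \! [ G ] \! \rangle\varphi$ I obtain a first-step announcement $\psi_G\in\mathcal{L}_{EL}^G$, true at $w$, such that for every $\chi_B$ the inner formula $\langle \! [ G ] \! \rangle\varphi$ holds in $M^{\psi_G\wedge\chi_B}$. I would instantiate this universal $\chi_B$ with the given $\psi_B$ itself; since $\psi_G\wedge\psi_B$ is true at $w$, this places us in $M^{\psi_G\wedge\psi_B}$, where unfolding the inner operator yields a second-step announcement $\psi'_G\in\mathcal{L}_{EL}^G$, true at $w$, that forces $\varphi$ against every $\chi'_B$. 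Choosing $\chi'_B$ to be a $B$-tautology and applying the composition axiom $A9$, I get $(M^\beta,w)\models\varphi$ for $\beta := (\psi_G\wedge\psi_B)\wedge[\psi_G\wedge\psi_B]\psi'_G$.

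The main obstacle is that $\beta$ realises the two-step strategy but is \emph{not} of the form $\psi_B\wedge\chi_G$ with $\chi_G$ a pure $G$-formula, since $\psi_B$ occurs inside the modality $[\psi_G\wedge\psi_B]$. This is exactly the difficulty handled in Proposition~\ref{prop:axiom}, and I would resolve it the same way, crucially exploiting that the quantifier order $\forall\psi_B\,\exists\chi_G$ lets $\chi_G$ depend on $\psi_B$. Writing $\psi_G=\bigwedge_{g\in G}K_g\psi_g$ and $\psi'_G=\bigwedge_{g\in G}K_g\psi'_g$, I would set
\[\chi_G := \bigwedge_{g\in G}K_g(\psi_g \wedge t([\psi_G\wedge\psi_B]\psi'_g)),\]
which lies in $\mathcal{L}_{EL}^G$ thanks to the translation $t$ (Definition~\ref{def:translation}). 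Then, using $A7$, $A8$ and the equivalence of $t$ with its argument, I would prove the reduction $\psi_B\wedge\chi_G\leftrightarrow\beta$: abbreviating $\gamma:=\psi_G\wedge\psi_B$, each conjunct collapses via $\gamma\wedge K_g[\gamma]\psi'_g\leftrightarrow\gamma\wedge[\gamma]K_g\psi'_g$ (an instance of $A8$), and conjoining over $g\in G$ and pushing the announcement through $A7$ gives $\gamma\wedge\bigwedge_{g\in G}K_g[\gamma]\psi'_g\leftrightarrow\gamma\wedge[\gamma]\psi'_G=\beta$; since $\psi_B\wedge\chi_G$ equals $\gamma\wedge\bigwedge_{g\in G}K_g[\gamma]\psi'_g$ after replacing $t$ by its equivalent, the reduction follows.

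Finally, because $\psi_B\wedge\chi_G$ and $\beta$ are logically equivalent they define the same updated model, so $(M^{\psi_B\wedge\chi_G},w)\models\varphi$; as $\beta$ (hence $\psi_B\wedge\chi_G$) is true at $w$, I conclude $(M,w)\models\langle\psi_B\wedge\chi_G\rangle\varphi$, whence $(M,w)\models\psi_B\rightarrow\langle\psi_B\wedge\chi_G\rangle\varphi$. Since $\psi_B$ was arbitrary, $(M,w)\models [ \! \langle A \setminus G \rangle \! ]\varphi$. I expect the reduction step to be the only delicate part; everything else is bookkeeping with the semantics and the reduction axioms.
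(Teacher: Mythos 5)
Your proof is correct and follows essentially the same route as the paper's: unfold both coalition operators, neutralise the inner opponent move with a tautology, compose the two announcements via $A9$, and compress $\psi_G \wedge [\psi_G \wedge \psi_B]\psi'_G$ into a single $\mathcal{L}_{EL}^G$-announcement using the translation $t$. In fact, where the paper merely says it will ``morph'' the two-step announcement into a single announcement by $G$, you spell out the witness $\chi_G := \bigwedge_{g\in G}K_g(\psi_g \wedge t([\psi_G\wedge\psi_B]\psi'_g))$ and verify the equivalence via $A7$/$A8$, which makes that step more explicit than in the paper.
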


\begin{proof}
Appendix, Proposition \ref{prop:twoapp}.
\end{proof}

Whether $\langle \! [ G ] \! \rangle \langle \! [ G ] \! \rangle \varphi \rightarrow \langle \! [ G ] \! \rangle \varphi$ is valid is an open question. We conjecture that the property is not valid. Consider $\langle \! [ G ] \! \rangle \langle \! [ G ] \! \rangle \varphi$: after initial announcement, coalition $G$ has a consecutive announcement to make $\varphi$ true. This announcement, however, depends on the choice of $A \setminus G$ in the first operator. In other words, consecutive announcement by $G$ may vary depending on the initial announcement by $A \setminus G$. Hence, it seems highly counterintuitive that $G$ has a single announcement that can incorporate all possible simultaneous announcements by $A \setminus G$ in a general (infinite) case.  

Formula $\langle G \rangle \langle H \rangle \varphi \rightarrow \langle G \cup H \rangle \varphi$ is a validity of \textbf{GAL} \cite{agotnes10}. Again, it is unknown whether the same property holds for coalition operators, and, for the same reasons as for Proposition \ref{prop:two}, we conjecture that the corresponding formula is not valid in \textbf{CAL}.

\begin{proposition}
\label{prop::app2}
    $\langle \! [ G ] \! \rangle \langle \! [ H ] \! \rangle \varphi  \rightarrow [ \! \langle A \setminus (G \cup H) \rangle \! ] \varphi\) is valid.
\end{proposition}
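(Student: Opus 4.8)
The plan is to mirror the proof of Proposition \ref{prop:two}, treating $\langle \! [ G ] \! \rangle \langle \! [ H ] \! \rangle \varphi$ as two sequential coalition moves and compressing them into a single announcement by the combined coalition $G \cup H$. Write $C := A \setminus (G \cup H)$ and note $A \setminus C = G \cup H$, so by the semantics of Definition \ref{def:corgalsemantics} the goal $(M,w) \models [ \! \langle C \rangle \! ] \varphi$ amounts to: for every $\psi_C$ there is some $\chi_{G \cup H}$ with $(M,w) \models \psi_C \rightarrow \langle \psi_C \wedge \chi_{G \cup H} \rangle \varphi$. So I would fix an arbitrary $\psi_C \in \mathcal{L}_{EL}^C$ and construct a matching response by $G \cup H$, using $\psi_C$ itself as the opposition in the first coalition move.

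Unfolding the hypothesis via the semantics of $\langle \! [ G ] \! \rangle$ yields a fixed $\psi_G$ with $(M,w) \models \psi_G$ and $(M,w) \models [\psi_G \wedge \chi_{A \setminus G}] \langle \! [ H ] \! \rangle \varphi$ for every $\chi_{A \setminus G}$. The key move is to instantiate this universally quantified opposition with the outsiders' formula: since $A \setminus G = (H \setminus G) \sqcup C$, I take $\chi_{A \setminus G} := \psi_C \wedge \bigwedge_{a \in H \setminus G} K_a \top$, which is equivalent to $\psi_C$, giving $(M,w) \models [\psi_G \wedge \psi_C] \langle \! [ H ] \! \rangle \varphi$. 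If $(M,w) \not\models \psi_C$ the required implication is vacuous and any $\chi_{G \cup H}$ works; otherwise $(M,w) \models \Psi$, where $\Psi := \psi_G \wedge \psi_C$, hence $(M^\Psi, w) \models \langle \! [ H ] \! \rangle \varphi$. By the semantics of $\langle \! [ H ] \! \rangle$ this provides a fixed $\rho_H = \bigwedge_{a \in H} K_a \rho_a$ such that, taking the trivial opposition $\sigma_{A \setminus H} = \top$, we have $(M^\Psi, w) \models \rho_H \wedge [\rho_H] \varphi$.

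It then remains to package $\psi_G$ and $\rho_H$ into one epistemic announcement by $G \cup H$. Since $\rho_H$ only lives in the updated model $M^\Psi$, I pull it back exactly as in Proposition \ref{prop:axiom}, setting $\chi_{G \cup H} := \psi_G \wedge t\big(\bigwedge_{a \in H} K_a [\Psi] \rho_a\big)$ using the translation of Definition \ref{def:translation}; grouping conjuncts by agent (with each $a \in G \cap H$ announcing the conjunction of both contributions) shows $\chi_{G \cup H} \in \mathcal{L}_{EL}^{G \cup H}$. Running the same propositional and \textbf{PAL}-axiom rewriting as in Proposition \ref{prop:axiom} (axiom $A8$, together with $t(\Psi) = \Psi$) gives $\psi_C \wedge \chi_{G \cup H} \equiv \Psi \wedge [\Psi] \rho_H$, which by $\models \psi \wedge [\psi] \varphi \leftrightarrow \langle \psi \rangle \varphi$ and the composition axiom $A9$ means that announcing $\psi_C \wedge \chi_{G \cup H}$ in $M$ produces exactly the model $(M^\Psi)^{\rho_H}$. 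Since $(M^\Psi, w) \models \rho_H \wedge [\rho_H] \varphi$ we obtain $((M^\Psi)^{\rho_H}, w) \models \varphi$, and $(M,w) \models \psi_C \wedge \chi_{G \cup H}$ from $(M,w) \models \Psi$ and $(M^\Psi, w) \models \rho_H$; hence $(M,w) \models \langle \psi_C \wedge \chi_{G \cup H} \rangle \varphi$, as required.

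I expect the main obstacle to be precisely this packaging step. The second-round announcement $\rho_H$ is available only in $M^\Psi$, so it cannot be uttered verbatim by $G \cup H$ in the initial model, and the coalition must instead announce the translated `anticipated knowledge' formula $t(\bigwedge_{a \in H} K_a [\Psi] \rho_a)$. Verifying that this combined formula genuinely lies in $\mathcal{L}_{EL}^{G \cup H}$ after merging the $G$- and $H$-contributions on the overlap $G \cap H$, and that it reduces under the $A8$/$A9$ rewriting to the composed update $(M^\Psi)^{\rho_H}$, is the delicate part; the quantifier bookkeeping (feeding $\psi_C$ as opposition to $G$ and using only trivial opposition to $H$) and the vacuous case are then routine.
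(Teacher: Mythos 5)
Your proof is correct, but it takes a genuinely different route from the paper's. The paper disposes of this proposition in two lines: Proposition \ref{prop:one} is applied twice to obtain $\langle \! [ G \cup H ] \! \rangle \langle \! [ G \cup H ] \! \rangle \varphi$ from $\langle \! [ G ] \! \rangle \langle \! [ H ] \! \rangle \varphi$, and then Proposition \ref{prop:two} yields $[ \! \langle A \setminus (G \cup H) \rangle \! ] \varphi$. You instead give a direct, self-contained semantic construction: fix the opposition $\psi_C$ of $C = A \setminus (G \cup H)$, feed it (padded with $K_a \top$ for $a \in H \setminus G$) as the opposition to $G$'s winning announcement, extract $H$'s response $\rho_H$ in the updated model against trivial opposition, and pull $\rho_H$ back to the initial model via the translation trick of Proposition \ref{prop:axiom}, merging the $G$- and $H$-contributions (including on the overlap $G \cap H$, using $K_a \alpha \wedge K_a \beta \leftrightarrow K_a(\alpha \wedge \beta)$) into a single announcement in $\mathcal{L}_{EL}^{G \cup H}$; the quantifier pattern ($\exists \chi_{G \cup H}$ after $\forall \psi_C$, with $\chi_{G \cup H}$ depending on both $\psi_C$ and $\rho_H$) matches the semantics of $[ \! \langle C \rangle \! ]$, and the $A8$/$A9$ rewriting that identifies the update by $\psi_C \wedge \chi_{G \cup H}$ with $(M^{\Psi})^{\rho_H}$ goes through. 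Each approach buys something. The paper's proof is short and modular, reusing Propositions \ref{prop:one} and \ref{prop:two} as black boxes; but note that its second application of Proposition \ref{prop:one} occurs \emph{inside} the scope of $\langle \! [ G \cup H ] \! \rangle$, which tacitly requires a monotonicity principle (if $\alpha \rightarrow \beta$ is valid, then $\langle \! [ K ] \! \rangle \alpha \rightarrow \langle \! [ K ] \! \rangle \beta$ is valid) that the published version never states or proves. Your argument needs no such lemma, exhibits the witnessing announcement explicitly, and in effect re-proves Proposition \ref{prop:two} generalised from the $G$/$G$ case to the $G$/$H$ case --- which is exactly where the overlap bookkeeping you flag as the delicate point becomes necessary.
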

\begin{proof}
Let $(M,w) \models \langle \! [ G ] \! \rangle \langle \! [ H ] \! \rangle \varphi$. By Proposition \ref{prop:one} applied twice, we have $(M,w) \models \langle \! [ G \cup H ] \! \rangle \langle \! [ G \cup H ] \! \rangle \varphi$, and by Proposition \ref{prop:two}, the latter implies $(M,w) \models [ \! \langle A \setminus (G \cup H) \rangle \! ] \varphi$.
\end{proof}

Next, we show that splitting an announcement by a unified coalition into consecutive announcements of sub-coalitions may decrease their power to force certain outcomes.  Whether  \(\langle \! [ G \cup H ] \! \rangle \varphi \rightarrow \langle \! [ G ] \! \rangle \langle \! [ H ] \! \rangle \varphi\) is valid was mentioned as an open question in \cite{agotnes16}. We settle this problem by presenting a counterexample.

\begin{proposition}
\label{prop:calcounter}
    \(\langle \! [ G \cup H ] \! \rangle \varphi \rightarrow \langle \! [ G ] \! \rangle \langle \! [ H ] \! \rangle \varphi\) is not valid.
\end{proposition}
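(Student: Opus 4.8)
The plan is to exhibit a concrete pointed model together with a choice of coalitions $G$, $H$ (with $G \cup H \neq A$, so that the complementary agents genuinely have a countermove) and a formula $\varphi$ witnessing $(M,w) \models \langle \! [ G \cup H ] \! \rangle \varphi$ but $(M,w) \not\models \langle \! [ G ] \! \rangle \langle \! [ H ] \! \rangle \varphi$. The intuition driving the choice of model is exactly the one flagged in the discussion preceding Proposition~\ref{prop:two}: when the unified coalition $G \cup H$ announces in one shot, its members may pool their knowledge into a single joint announcement $\psi_{G \cup H}$ that is robust against \emph{every} simultaneous reply of $A \setminus (G \cup H)$. But if we instead force $G$ to move first and $H$ to move second (with the complement of each sub-coalition getting to announce against it at each stage), then the adversaries in $A \setminus G$ — which now include $H$'s complement \emph{and}, crucially, agents that $G$ alone cannot coordinate with — get an extra interleaved reply, and this extra reply can destroy the information $G$ would have needed to hand off to $H$.

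The key steps, in order, are as follows. First I would fix a small finite set of agents and a two- or three-state epistemic model rich enough that (i) the pooled knowledge of $G \cup H$ pins down the actual world against all outside announcements, yet (ii) neither $G$ alone nor $H$ alone can do so. Second, I would choose $\varphi$ to be an epistemic formula describing the target ``configuration'' (typically a conjunction of who-knows-what, or simply $\varphi$ asserting that $A \setminus(G\cup H)$ now knows the value of the relevant proposition), so that the left-hand side holds by letting $G \cup H$ announce the conjunction of their knowledge. Third, I would verify $(M,w) \models \langle \! [ G \cup H ] \! \rangle \varphi$ directly from the existential semantics of the coalition operator, producing the explicit witness $\psi_{G\cup H}$ and checking $(M,w) \models \psi_{G\cup H} \wedge [\psi_{G\cup H} \wedge \chi_{A\setminus(G\cup H)}]\varphi$ for \emph{every} $\chi_{A\setminus(G\cup H)}$. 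Fourth, and this is the substantive half, I would show $(M,w) \not\models \langle \! [ G ] \! \rangle \langle \! [ H ] \! \rangle \varphi$ by unfolding the nested semantics: for every candidate announcement $\psi_G$ of the first coalition, exhibit a reply $\chi_{A \setminus G}$ by its complement such that in the resulting updated model $H$ has no announcement surviving all replies of $A \setminus H$, i.e. $(M^{\psi_G \wedge \chi_{A\setminus G}}, w) \not\models \langle \! [ H ] \! \rangle \varphi$.

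The main obstacle is the fourth step, because its outermost quantifier is universal over the infinite set of epistemic announcements $\psi_G \in \mathcal{L}_{EL}^G$, so I cannot simply inspect finitely many cases. The standard device to handle this is to exploit the finiteness of the model: on a fixed finite epistemic model there are only finitely many distinguishable submodels (equivalently, finitely many definable subsets of states), so every announcement $\psi_G$ has the same update effect as one drawn from a finite representative list, and the seemingly infinite universal quantifier collapses to a finite check. Concretely, I would argue that any $\psi_G$ either fails to be true at $w$ (and is thus inadmissible), or induces one of a few possible restrictions of $M$; in each admissible case I then name the complementary reply $\chi_{A\setminus G}$ that collapses the two worlds $G$ cannot separate, and confirm that in the remaining model $H$'s best announcement is still defeatable by $A \setminus H$. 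Closing this case analysis cleanly — making sure the reply $\chi_{A\setminus G}$ is a genuine element of $\mathcal{L}_{EL}^{A\setminus G}$ and that it indeed nullifies $H$'s second move — is where the real work lies, but it is a finite, verifiable computation once the model is fixed.
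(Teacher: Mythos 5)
There is a genuine gap. Your text is a strategy outline rather than a proof: the entire content of the proposition is the exhibition of a specific model, specific coalitions, and a specific formula $\varphi$, and you supply none of these --- the two steps you yourself call ``the substantive half'' and ``where the real work lies'' are exactly the steps left undone. Worse, the guidance you give for filling them in points away from a working counterexample. You propose a \emph{positive} goal formula (``$\varphi$ asserting that $A\setminus(G\cup H)$ now knows the value of the relevant proposition''), and you locate the obstruction in an extra reply that ``destroys the information $G$ would have needed to hand off to $H$.'' The paper's counterexample works for the opposite reason. It takes $G=\{a\}$, $H=\{b\}$, a four-state model over $\{a,b,c\}$, and, crucially, a goal with \emph{ignorance} conjuncts: $\varphi := K_b(p\wedge q\wedge r)\wedge\neg K_a(p\wedge q\wedge r)\wedge\neg K_c(p\wedge q\wedge r)$. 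The defeating reply to an \emph{arbitrary} first move $\psi_a$ is one uniform announcement $K_b p\wedge K_c\top$ --- made chiefly by $b\in H$ itself (which lies in $A\setminus G$ and so gets to reply to $G$), while the outside agent $c$ only says $\top$ --- and it defeats the goal by \emph{revealing too much}: it forces $a$ to learn $p\wedge q\wedge r$, and since knowledge of a true propositional fact, once acquired, survives every further restriction of the model, the conjunct $\neg K_a(p\wedge q\wedge r)$ is false in every continuation, so $\langle \! [ \{b\} ] \! \rangle \varphi$ fails in the updated model no matter what $b$ announces next.

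Two consequences for your plan. First, your device for taming the universal quantifier over $\psi_G$ (finitely many induced submodels, then a case analysis) is sound on a finite model, but it solves the wrong difficulty: with the right $\varphi$ no case analysis is needed, because the same reply defeats every $\psi_a$ at once via the irreversibility just described. Second, and more seriously, with a purely positive goal of the kind you suggest this mechanism is unavailable --- positive knowledge also persists under further announcements, which helps $H$ rather than its adversaries --- and indeed on the paper's model the implication \emph{holds} for the goal $K_b(p\wedge q\wedge r)$ alone ($a$ can announce $K_a q$, after which $K_b(p\wedge q\wedge r)$ is true and persists); only the ignorance conjuncts make it fail. Your plan offers no alternative mechanism by which $\varphi$ becomes permanently unachievable for $H$ at the second stage, so your fourth step could not be completed as described. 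Relatedly, your insistence on $G\cup H\neq A$ ``so that the complementary agents genuinely have a countermove'' is inessential: the counterexample goes through even with $A=G\cup H$, because the dangerous adversary is $H$ itself, not the outsiders.
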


\begin{proof}
    Let $G = \{a\}, H = \{b\}$, and $\varphi:= K_b(p \wedge q \wedge r) \wedge \neg K_a(p \wedge q \wedge r) \wedge \neg K_c(p \wedge q \wedge r)$. Formula $\varphi$ says that agent $b$ knows that the given propositional variables are true, and agents $a$ and $c$ do not. Consider model $(M,pqr)$ in Figure \ref{fig:counter} (reflexive and transitive arrows are omitted for convenience). Names of the states in the model show values of propositional variables; for example, $(M,p \overline{q} r) \models p \wedge \neg q \wedge r$.
    
    \begin{figure} [h]
\centering
\begin{tikzpicture}

\node (3) at (0,0) {\fbox{$pqr$}};
\node (4) at (0,-2) {$pq \overline{r}$};
\node (5) at (-2,-2) {$\overline{p} qr$};
\node (6) at (2,-2) {$p \overline{q} r$};

\draw (3) -- node[right] {$c$} (4);
\draw (3) -- node[left] {$a,c$} (5);
\draw (4) -- node[below] {$c$} (5); 
\draw (3) -- node[right] {$b$} (6);
\end{tikzpicture}
\caption{Counterexample}
\label{fig:counter}
\end{figure}
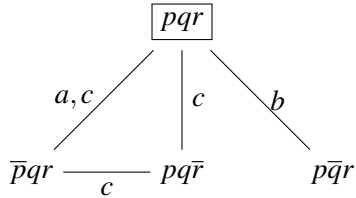
    
    By the semantics $(M, pqr) \models \langle \! [ \{ a, b \} ] \! \rangle \varphi$ if and only if $\exists \psi_a, \exists \psi_b, \forall \chi_c: (M, pqr) \models \psi_a \wedge \psi_b \wedge [\psi_a \wedge \psi_b \wedge \chi_c] \varphi$. Let $\psi_a$ be $K_a q$, and $\psi_b$ be $K_b \top$. 
    Observe that $(M, pqr) \models K_a q \wedge K_b \top$. This announcement leads to $b$ learning that $q$.
    Moreover, \(c\) does not know any formula that she can announce to avoid \(\varphi\). An informal argument is as follows. 
    By announcing $K_a q$ agent $a$ `chooses' a union of $a$-equivalence classes $\{pqr, \overline{p}qr, pq\overline{r}\}$ (and $b$ `chooses' the whole model). Any simultaneous `choice' of $c$ includes $\{pqr, \overline{p}qr, pq\overline{r}\}$ as a subset. Thus, intersection of $\{pqr, \overline{p}qr, pq\overline{r}\}$ and any of unions of $c$-equivalence classes is $\{pqr, \overline{p}qr, pq\overline{r}\}$, and $\varphi$ is true in such a restriction of the model.
    
   Let us show that $(M, pqr) \not \models \langle \! [ \{ a \} ] \! \rangle \langle \! [ \{ b \} ] \! \rangle \varphi$, or, equivalently, $(M, pqr) \models [ \! \langle \{ a \} \rangle \! ] [ \! \langle \{ b \} \rangle \! ] \neg \varphi$. According to the semantics, $\forall \psi_a, \exists \chi_b, \exists \chi_c$: $(M, pqr) \models \psi_{a} \rightarrow \langle \psi_a \wedge \chi_b \wedge \chi_c \rangle [ \! \langle \{ b \} \rangle \! ] \neg \varphi$.
  Assume that for an arbitrary $\psi_a$, announcements by $b$ and $c$ are $K_b p$ and $K_c \top$ correspondingly. Then $(M, pqr) \models \psi_a \wedge [ \psi_a \wedge K_b p \wedge K_c \top ] [ \! \langle \{ b \} \rangle \! ] \neg \varphi$.
    Note that no matter what $a$ announces, $K_b p$ `forces' her to learn that $p \wedge q \wedge r$, and whatever is announced in the updated model $(M^{\psi_a \wedge K_b p \wedge K_c \top }, pqr)$, $a$'s knowledge of $p \wedge q \wedge r$ and, hence, falsity of $\varphi$ remains. Thus we reached a contradiction.
\end{proof}

The same counterexample can be used to demonstrate that $[ \! \langle A \setminus (G \cup H) \rangle \! ] \varphi \rightarrow \langle \! [ G ] \! \rangle \langle \! [ H ] \! \rangle \varphi$ is not valid, where $A \setminus (G \cup H) = \{c\}$. In the proof the Proposition \ref{prop:calcounter}
we show that $(M, pqr) \models \langle \! [ \{ a, b \} ] \! \rangle \varphi$. Using validity $\langle \! [ G ] \! \rangle \varphi \rightarrow [ \! \langle A \setminus G \rangle \! ] \varphi$ we obtain $(M, pqr) \models [ \! \langle c \rangle \! ] \varphi$. The rest of the proof remains the same.

\begin{corollary}
$[ \! \langle A \setminus (G \cup H) \rangle \! ] \varphi \rightarrow \langle \! [ G ] \! \rangle \langle \! [ H ] \! \rangle \varphi$ is not valid.
\end{corollary}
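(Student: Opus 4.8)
The plan is to reuse the counterexample from Proposition~\ref{prop:calcounter} essentially verbatim, exploiting the remark immediately preceding the corollary. The key observation is that the author has already established in that proof the two facts we need: that $(M, pqr) \models \langle \! [ \{a,b\} ] \! \rangle \varphi$ and that $(M, pqr) \not\models \langle \! [ \{a\} ] \! \rangle \langle \! [ \{b\} ] \! \rangle \varphi$, where $G = \{a\}$, $H = \{b\}$, $A \setminus (G \cup H) = \{c\}$, and $\varphi$ is the same formula. So I would not reconstruct the model or re-argue those two claims; instead I would cite them and only bridge the small gap between $\langle \! [ \{a,b\} ] \! \rangle \varphi$ and $[ \! \langle c \rangle \! ] \varphi$.

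First I would invoke the validity $\langle \! [ G ] \! \rangle \varphi \rightarrow [ \! \langle A \setminus G \rangle \! ] \varphi$, which is stated as a known validity earlier in the excerpt (it is the converse direction to $A11$, used in the paragraph just above this corollary). Instantiating $G := \{a,b\}$ gives $A \setminus G = \{c\}$, so from $(M, pqr) \models \langle \! [ \{a,b\} ] \! \rangle \varphi$ we immediately obtain $(M, pqr) \models [ \! \langle c \rangle \! ] \varphi$. This is exactly the antecedent of the formula in the corollary statement (with $G = \{a\}$, $H = \{b\}$, so $A \setminus (G \cup H) = \{c\}$).

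Next I would recall that the proof of Proposition~\ref{prop:calcounter} already demonstrated $(M, pqr) \not\models \langle \! [ \{a\} ] \! \rangle \langle \! [ \{b\} ] \! \rangle \varphi$; that argument is untouched by the change of antecedent, since it concerns only the consequent $\langle \! [ G ] \! \rangle \langle \! [ H ] \! \rangle \varphi$, which is identical in both statements. Combining the two, at the pointed model $(M, pqr)$ the antecedent $[ \! \langle c \rangle \! ] \varphi$ holds while the consequent $\langle \! [ \{a\} ] \! \rangle \langle \! [ \{b\} ] \! \rangle \varphi$ fails, so the implication is false there and hence not valid.

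I expect essentially no obstacle here, since the heavy lifting is entirely inherited from Proposition~\ref{prop:calcounter}. The only point needing care is making sure the cited validity $\langle \! [ G ] \! \rangle \varphi \rightarrow [ \! \langle A \setminus G \rangle \! ] \varphi$ is legitimately available; this is the statement that the coalition operator is stronger than the corresponding dual-complement operator, and it follows directly from the semantics by choosing, for each announcement of the outer complementary group, the witnessing inner announcement guaranteed by $\langle \! [ G ] \! \rangle \varphi$. If I wanted the corollary fully self-contained I would spell out that one-line semantic argument, but given the author's convention of citing it as an established validity in the surrounding text, a brief pointer suffices.
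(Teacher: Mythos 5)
Your proposal is correct and is essentially the paper's own proof: the paper likewise keeps the counterexample of Proposition~\ref{prop:calcounter} intact, applies the validity $\langle \! [ G ] \! \rangle \varphi \rightarrow [ \! \langle A \setminus G \rangle \! ] \varphi$ with $G = \{a,b\}$ to convert $(M,pqr) \models \langle \! [ \{a,b\} ] \! \rangle \varphi$ into $(M,pqr) \models [ \! \langle c \rangle \! ] \varphi$, and notes that the refutation of the consequent carries over unchanged. The only nitpick is your parenthetical label for that validity (it is not literally the converse of $A11$, but an independent semantic fact), though your one-line justification of it, reusing the witnessing $\psi_G$ against every announcement of the complement, is exactly right.
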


\subsection{Completeness}

In order to prove completeness of \(\mathbf{CoRGAL}\), we expand and modify the completeness proof for \(\mathbf{APAL}\) \cite{balbiani08, balbiani15, balbiani15a}. Although the proof is partially based upon the classic canonical model approach, we have to ensure that construction of maximal consistent theories (Proposition \ref{prop:lindenbaum}) allows us to include infinite amount of formulas for cases of coalition announcements. This is possible due to axioms $A10$, $A11$ and rules of inference $R5,R6$. After that we use induction on complexity of \textbf{CoRGAL} formulas to prove the Truth Lemma. 

First, we prove a useful auxiliary lemma.

\begin{lemma}
\label{lemma::2}
	Let \(\varphi, \psi \in \mathcal{L}_{CoRGAL}\). If \(\varphi \rightarrow \psi\) is a theorem, then \(\eta (\varphi) \rightarrow \eta (\psi)\) is a theorem as well.
\end{lemma}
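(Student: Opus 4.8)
The plan is to prove the statement by structural induction on the necessity form \(\eta\), following the grammar \(\eta ::= \sharp \mid \chi \rightarrow \eta(\sharp) \mid K_a \eta(\sharp) \mid [\chi] \eta(\sharp)\). The base case \(\eta = \sharp\) is immediate: there \(\eta(\varphi) = \varphi\) and \(\eta(\psi) = \psi\), so \(\eta(\varphi) \rightarrow \eta(\psi)\) is exactly the assumed theorem \(\varphi \rightarrow \psi\). In each of the three remaining cases the necessity form has the shape ``a connective wrapped around a smaller necessity form \(\eta'\)'', the unique occurrence of \(\sharp\) sitting inside \(\eta'\); the induction hypothesis then supplies \(\vdash \eta'(\varphi) \rightarrow \eta'(\psi)\), and the task is to push this implication through the outermost operator.

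For \(\eta = \chi \rightarrow \eta'(\sharp)\) the step is purely propositional: from \(\vdash \eta'(\varphi) \rightarrow \eta'(\psi)\) and the tautology \(A0\), namely \((\eta'(\varphi) \rightarrow \eta'(\psi)) \rightarrow ((\chi \rightarrow \eta'(\varphi)) \rightarrow (\chi \rightarrow \eta'(\psi)))\), modus ponens \(R0\) yields \(\vdash (\chi \rightarrow \eta'(\varphi)) \rightarrow (\chi \rightarrow \eta'(\psi))\), which is \(\eta(\varphi) \rightarrow \eta(\psi)\). For \(\eta = K_a \eta'(\sharp)\) I would exploit that \(K_a\) is a normal modality: apply necessitation \(R1\) to the induction hypothesis to get \(\vdash K_a(\eta'(\varphi) \rightarrow \eta'(\psi))\), then combine the distribution axiom \(A1\) with \(R0\) to obtain \(\vdash K_a \eta'(\varphi) \rightarrow K_a \eta'(\psi)\). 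For \(\eta = [\chi] \eta'(\sharp)\) I would proceed analogously, first applying announcement necessitation \(R2\) to reach \(\vdash [\chi](\eta'(\varphi) \rightarrow \eta'(\psi))\), and then invoking monotonicity of the announcement operator to conclude \(\vdash [\chi]\eta'(\varphi) \rightarrow [\chi]\eta'(\psi)\).

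The main obstacle is precisely this last step. Unlike \(K_a\), the operator \([\chi]\) has no distribution axiom among \(A5\)--\(A9\), so its monotonicity, i.e. the derived principle that \(\vdash \alpha \rightarrow \beta\) entails \(\vdash [\chi]\alpha \rightarrow [\chi]\beta\), must be recovered from the reduction axioms. This is the standard fact that \([\chi]\) is a normal, hence monotone, modality in \(\mathbf{PAL}\) \cite{del}. A self-contained derivation uses \(A6\) and \(A7\), which together give \(\vdash [\chi]\neg\gamma \leftrightarrow (\chi \rightarrow \neg[\chi]\gamma)\) and the distribution of \([\chi]\) over conjunction; these let one derive \(\vdash \chi \rightarrow ([\chi]\alpha \rightarrow [\chi]\beta)\), after which the conditional on \(\chi\) is discharged using the vacuity schema \(\vdash \neg\chi \rightarrow [\chi]\beta\). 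I expect establishing this vacuity schema uniformly for an arbitrary \(\beta\), including a \(\beta\) containing coalition operators where no reduction is available, to be the only genuinely delicate point; everything else is routine propositional and normal-modal bookkeeping.
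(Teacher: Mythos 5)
Your proof is structurally the paper's proof: the same induction on necessity forms, the same tautology-plus-$R0$ step for $\eta = \chi \rightarrow \eta'(\sharp)$, and the same $R1$/$A1$/$R0$ step for $\eta = K_a \eta'(\sharp)$. The only divergence is the announcement case. There the paper simply invokes $[\tau](\alpha \rightarrow \beta) \rightarrow ([\tau]\alpha \rightarrow [\tau]\beta)$ as a theorem of \textbf{PAL} (citing \cite[Chapter 4]{del}) ``and hence of \textbf{CoRGAL}'', combined with the induction hypothesis, (implicitly) $R2$, and $R0$ --- which is exactly your plan, except that you attempt to reconstruct the distribution principle from $A6$, $A7$ and the vacuity schema $\vdash \neg\chi \rightarrow [\chi]\beta$, and you leave that schema unproven for $\beta$ containing relativised group or coalition modalities.

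Your reservation at that point is well placed, and it is in fact a gloss in the paper rather than in your argument: the cited fact is established in \cite{del} for $\mathcal{L}_{PAL}$ only, and in a reduction-style system like the paper's (which has no primitive K axiom for $[\chi]$) both distribution and vacuity are obtained by induction on the formula behind the announcement operator, so the citation does not by itself cover $\mathcal{L}_{CoRGAL}$. The gap is closable with the paper's own machinery, so your plan does go through. Prove $\vdash \neg\chi \rightarrow [\chi]\beta$ for all $\chi$ by induction on $\beta$ along the well-founded order $<^{Size}_{[,], [ \! \langle \! \rangle \! ]}$: the case $\beta = p$ is $A5$ plus propositional reasoning; the cases $\beta = \neg\gamma$ and $\beta = K_a\gamma$ need no induction hypothesis, since $A6$ and $A8$ rewrite $[\chi]\beta$ as an implication with antecedent $\chi$; conjunction uses $A7$; the case $\beta = [\sigma]\gamma$ uses $A9$ and the hypothesis for $\gamma$ under the announcement $\chi \wedge [\chi]\sigma$; for $\beta = [G, \sigma]\gamma$, observe that $\neg\chi \rightarrow [\chi]\sharp$ is a necessity form, so $R5$ reduces the goal to $\vdash \neg\chi \rightarrow [\chi](\sigma \wedge [\psi_G \wedge \sigma]\gamma)$ for every $\psi_G$, which follows from $A7$, $A9$ and the hypotheses for the smaller formulas $\sigma$ and $\gamma$; and for $\beta = [ \! \langle G \rangle \! ]\gamma$ use $R6$ analogously --- here no induction hypothesis is needed at all, because each premise $\langle A \setminus G, \psi_G \rangle \gamma$ is by definition a negated formula, so $A6$ applies directly. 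With vacuity in hand, your $A6$/$A7$ computation yields $\vdash \chi \rightarrow ([\chi]\alpha \rightarrow [\chi]\beta)$, propositional reasoning discharges the case split on $\chi$, and the remainder of your argument coincides with the paper's.
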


\begin{proof}
    Appendix, Lemma \ref{lemma::2app}.
\end{proof}

Now, the first part of the proof up to Proposition \ref{prop:lindenbaum} is based on \cite{balbiani08}. Here we introduce theories and prove the Lindenbaum Lemma.

\begin{definition}
\label{def:theory}
    A set of formulas \(x\) is called a \emph{theory} if and only if it contains \(\mathbf{CoRGAL}\), and is closed under \(R0, R5,\) and \(R6\). A theory \(x\) is consistent if and only if \(\bot \not \in x\), and is maximal if and only if for all \(\varphi \in \mathcal{L}_{CoRGAL}\) it holds that either \(\varphi \in x\) or \(\neg \varphi \in x\). 
\end{definition}

Note that theories are not closed under necessitation rules. The reason for this is that while these rules preserve validity, they do not preserve truth, whereas $R0, R5$, and $R6$ preserve both validity and truth.

\begin{proposition}
\label{prop::mct}
    Let \(x\) be a theory, \(\varphi, \psi \in \mathcal{L}_{CoRGAL}\), and \(a \in A\). The following are theories: \(x + \varphi = \{\psi: \varphi \rightarrow \psi \in x\}, K_a x = \{\varphi: K_a \varphi \in x\}\), and \([\varphi]x = \{\psi: [\varphi]\psi \in x\}\).
\end{proposition}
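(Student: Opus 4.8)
The plan is to verify, for each of the three sets, exactly the three defining closure conditions of a theory (Definition \ref{def:theory}): containing $\mathbf{CoRGAL}$, and being closed under $R0$, $R5$, and $R6$. I do not need closure under the necessitation rules $R1$--$R4$, since theories are not required to be closed under them. All three constructions fit a common template in which the relevant axiom and necessitation rule of the ambient logic carry the argument. For containment of theorems, take any $\alpha \in \mathbf{CoRGAL}$: for $x+\varphi$, the tautology $\alpha \rightarrow (\varphi \rightarrow \alpha)$ (an instance of $A0$) together with $R0$ gives $\varphi \rightarrow \alpha \in \mathbf{CoRGAL} \subseteq x$, so $\alpha \in x+\varphi$; for $K_a x$, rule $R1$ gives $K_a\alpha \in \mathbf{CoRGAL} \subseteq x$, so $\alpha \in K_a x$; and for $[\varphi]x$, rule $R2$ gives $[\varphi]\alpha \in \mathbf{CoRGAL} \subseteq x$, so $\alpha \in [\varphi]x$.

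For closure under $R0$ (modus ponens), consider $K_a x$ first: from $\alpha, \alpha\rightarrow\beta \in K_a x$ we have $K_a\alpha \in x$ and $K_a(\alpha\rightarrow\beta)\in x$, and axiom $A1$ together with two applications of $R0$ (using that $x$ contains $A1$ and is closed under $R0$) yields $K_a\beta \in x$, i.e.\ $\beta \in K_a x$. The case of $x+\varphi$ is the identical computation with the propositional tautology $(\varphi\rightarrow\alpha)\rightarrow((\varphi\rightarrow(\alpha\rightarrow\beta))\rightarrow(\varphi\rightarrow\beta))$ in place of $A1$. For $[\varphi]x$ I use that the announcement modality is a normal operator: the $K$-schema $\vdash [\varphi](\alpha\rightarrow\beta)\rightarrow([\varphi]\alpha\rightarrow[\varphi]\beta)$ is derivable from the reduction axioms $A6$ and $A7$ by standard $\mathbf{PAL}$ reasoning, after which the argument is word-for-word that of the $K_a$ case.

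The crux is closure under $R5$ and $R6$, and here I exploit that each prefixing operation sends a necessity form to a necessity form: if $\eta(\sharp)$ is a necessity form, then so are $\varphi\rightarrow\eta(\sharp)$, $K_a\eta(\sharp)$, and $[\varphi]\eta(\sharp)$, directly by the grammar in the definition of necessity forms. Take $R5$ and the set $[\varphi]x$ as representative. Suppose $\eta(\chi \wedge [\psi_G\wedge\chi]\alpha) \in [\varphi]x$ for every $\psi_G$; unfolding the definition, $[\varphi]\,\eta(\chi \wedge [\psi_G\wedge\chi]\alpha) \in x$ for every $\psi_G$. Writing $\eta'(\sharp) := [\varphi]\eta(\sharp)$, which is again a necessity form, this says $\eta'(\chi \wedge [\psi_G\wedge\chi]\alpha) \in x$ for every $\psi_G$. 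Since $x$ is closed under $R5$, we obtain $\eta'([G,\chi]\alpha) \in x$, that is $[\varphi]\eta([G,\chi]\alpha) \in x$, i.e.\ $\eta([G,\chi]\alpha) \in [\varphi]x$, as required. The same move handles $R6$, and also $K_a x$ (taking $\eta'(\sharp) := K_a\eta(\sharp)$) and $x+\varphi$ (taking $\eta'(\sharp) := \varphi\rightarrow\eta(\sharp)$).

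The main obstacle is recognising this necessity-form trick in the last step: the infinitary rules $R5, R6$ transfer to the derived sets precisely because, once the defining prefix is absorbed, the premises are again instances of the same rule applied to a larger necessity form — which is exactly the design purpose of necessity forms flagged before their definition. A secondary technical point, cited rather than rederived, is that $[\varphi]$ satisfies necessitation and the $K$-schema, so that the modus-ponens step for $[\varphi]x$ goes through as it does for $K_a x$.
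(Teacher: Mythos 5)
Your proof is correct and rests on the same key idea as the paper's: prefixing a necessity form with $\varphi \rightarrow$, $K_a$, or $[\varphi]$ yields again a necessity form, so closure of $x$ under the infinitary rules $R5$ and $R6$ transfers directly to $x+\varphi$, $K_a x$, and $[\varphi]x$. The only difference is one of completeness rather than method: the paper explicitly verifies only the $R5$/$R6$ closure and defers containment of $\mathbf{CoRGAL}$ and closure under $R0$ to \cite{balbiani08}, whereas you spell these routine parts out correctly, including the derivable K-schema for $[\varphi]$ needed in the modus-ponens step.
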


\begin{proof}
Appendix, Proposition \ref{prop::mctapp}.
\end{proof}

\begin{proposition}
\label{prop::consistency}
Let \(\varphi \in \mathcal{L}_{CoRGAL}\). Then \(\mathbf{CoRGAL} + \varphi\) is consistent iff \(\neg \varphi \not \in \mathbf{CoRGAL}\).
\end{proposition}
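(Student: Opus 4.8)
The plan is to unfold the definitions and reduce the whole statement to elementary propositional reasoning inside $\mathbf{CoRGAL}$. First I would note that $\mathbf{CoRGAL}$ is itself a theory in the sense of Definition \ref{def:theory}: it trivially contains $\mathbf{CoRGAL}$ and, being the smallest set containing $A0$--$A11$ and closed under $R0$--$R6$, is in particular closed under $R0, R5,$ and $R6$. Hence by Proposition \ref{prop::mct} the set $\mathbf{CoRGAL} + \varphi = \{\psi : \varphi \rightarrow \psi \in \mathbf{CoRGAL}\}$ is again a theory, so the notion of consistency from Definition \ref{def:theory} legitimately applies to it.

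Next I would translate consistency into a membership statement. By Definition \ref{def:theory}, $\mathbf{CoRGAL} + \varphi$ is consistent iff $\bot \notin \mathbf{CoRGAL} + \varphi$, and by the very definition of the operation $x + \varphi$ this holds iff $\varphi \rightarrow \bot \notin \mathbf{CoRGAL}$. So it suffices to establish the single equivalence $\varphi \rightarrow \bot \in \mathbf{CoRGAL}$ iff $\neg \varphi \in \mathbf{CoRGAL}$, after which the desired biconditional follows by contraposition.

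Both directions of that equivalence are immediate from $A0$ and $R0$. For one direction, $(\varphi \rightarrow \bot) \rightarrow \neg \varphi$ is a propositional tautology, hence an instance of $A0$ and so a member of $\mathbf{CoRGAL}$; applying $R0$ (modus ponens) to it together with $\varphi \rightarrow \bot$ yields $\neg \varphi \in \mathbf{CoRGAL}$. The converse is symmetric, using the tautology $\neg \varphi \rightarrow (\varphi \rightarrow \bot)$ from $A0$ and a further application of $R0$. I expect no genuine obstacle here; the only point requiring care is the bookkeeping in the second step, namely correctly pushing the $\bot$-membership criterion for consistency through the $+\varphi$ construction and checking that the two auxiliary implications are indeed propositional tautologies covered by $A0$ and closed under $R0$.
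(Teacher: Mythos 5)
Your proof is correct and follows essentially the same route as the paper's: both reduce consistency of $\mathbf{CoRGAL} + \varphi$ to the membership fact $\bot \in \mathbf{CoRGAL} + \varphi$ iff $\varphi \rightarrow \bot \in \mathbf{CoRGAL}$ (via the definition of the $+\varphi$ operation, cited in the paper as Proposition \ref{prop::mct}), and then connect $\varphi \rightarrow \bot$ with $\neg\varphi$ by propositional tautologies and $R0$. The only difference is presentational: you organize it as a single chain of equivalences, whereas the paper argues the two directions separately (one by contradiction, one by contraposition).
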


\begin{proof}
Appendix, Proposition \ref{prop::consistencyapp}
\end{proof} 

The following proposition is a variation of the Lindenbaum Lemma. In order to prove it, we rely heavily on rules of inference $R5$ and $R6$.

\begin{lemma}[Lindenbaum]
\label{prop:lindenbaum}
    Every consistent theory \(x\) can be extended to a maximal consistent theory \(y\).
\end{lemma}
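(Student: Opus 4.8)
The plan is to build $y$ as the union of an increasing chain of consistent theories $x = x_0 \subseteq x_1 \subseteq \cdots$, obtained by running through a fixed enumeration $\varphi_0, \varphi_1, \ldots$ of the countable language $\mathcal{L}_{CoRGAL}$ and deciding each formula in turn, while inserting suitable witnesses for the two kinds of ``box'' formulas governed by the infinitary rules. At stage $n+1$ I first observe that at least one of $x_n + \varphi_n$ and $x_n + \neg\varphi_n$ is consistent: if both were inconsistent, then by the definition of $+$ we would have $\neg\varphi_n \in x_n$ and $\varphi_n \in x_n$, so $\bot \in x_n$ by closure under $R0$, contradicting the consistency of $x_n$. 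If only the negation survives I set $x_{n+1} = x_n + \neg\varphi_n$. If $x_n + \varphi_n$ is consistent I take it as a provisional theory (a theory by Proposition \ref{prop::mct}) and then branch on the shape of $\varphi_n$.

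The witness step is the heart of the argument. When $\varphi_n$ has the form $\neg\eta([G,\chi]\theta)$ I claim there is some $\psi_G \in \mathcal{L}_{EL}^G$ for which $x_n + \varphi_n + \neg\eta(\chi \wedge [\psi_G \wedge \chi]\theta)$ is still consistent, and I adjoin this formula; symmetrically, when $\varphi_n$ has the form $\neg\eta([ \! \langle G \rangle \! ]\theta)$ I adjoin a witness $\neg\eta(\langle A\setminus G,\psi_G\rangle\theta)$. To see such a witness exists, suppose toward a contradiction that for \emph{every} $\psi_G$ the corresponding extension were inconsistent. Unwinding the definition of $+$, this means $\eta(\chi \wedge [\psi_G \wedge \chi]\theta) \in x_n + \varphi_n$ for all $\psi_G$ (membership and derivability coincide here, since by Proposition \ref{prop::mct} this set is a theory). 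Closure of $x_n + \varphi_n$ under $R5$ then forces $\eta([G,\chi]\theta)$ into it, contradicting the presence of $\varphi_n = \neg\eta([G,\chi]\theta)$ and the assumed consistency of $x_n + \varphi_n$; the $R6$ case is identical with $\langle A\setminus G,\psi_G\rangle\theta$ in place of $\chi\wedge[\psi_G\wedge\chi]\theta$. This is precisely where relativisation repairs the gap of the earlier $\mathbf{CoGAL}$ argument: the single witness $\neg\eta(\langle A\setminus G,\psi_G\rangle\theta)$ already internalises the dependence on $\psi_G$, so no separately matched $\chi_{A\setminus G}$ is required.

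Setting $y = \bigcup_{n} x_n$, I then verify that $y$ is a maximal consistent theory. Maximality is immediate from the enumeration: every $\varphi$ equals some $\varphi_n$, and at stage $n+1$ either $\varphi_n$ or $\neg\varphi_n$ entered $x_{n+1} \subseteq y$. Closure under $R0$ is the usual finitary argument, since any two premises lie together in some $x_m$, which is a theory. Closure under the infinitary rules $R5$ and $R6$ comes from maximality together with the witnesses: if all premises $\eta(\chi\wedge[\psi_G\wedge\chi]\theta)$ were in $y$ but the conclusion $\eta([G,\chi]\theta)$ were not, then $\neg\eta([G,\chi]\theta) \in y$ by maximality, so its witness $\neg\eta(\chi\wedge[\psi^{*}_G\wedge\chi]\theta)$ was added at the corresponding stage, placing a formula and its negation together in some $x_m$ and contradicting the consistency of $x_m$. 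Consequently the rule-conclusions are already present, so $y$ equals $\bigcup_n x_n$ literally, and $\bot \in y$ would put $\bot \in x_m$ for some $m$, against the consistency maintained at every stage.

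The step I expect to be the main obstacle is exactly this witness-existence claim, hand in hand with confirming that each provisional extension genuinely is a theory so that Proposition \ref{prop::mct} and closure under $R5$ and $R6$ may be invoked. All the remaining bookkeeping --- maximality, $R0$-closure, and reading off consistency of the union --- is routine once the witnesses are in place.
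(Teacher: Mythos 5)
Your proof is correct and follows essentially the same strategy as the paper's: an enumeration-driven increasing chain of consistent theories, with witnesses for negated conclusions of $R5$/$R6$ whose existence is derived exactly as in the paper (via Proposition \ref{prop::mct} and closure of theories under the infinitary rules), including the key observation that in the coalition case the single witness $\neg\eta(\langle A\setminus G,\psi_G\rangle\theta)$ suffices because the dependence on $\chi_{A\setminus G}$ is internalised by the relativised operator. The only difference is bookkeeping: the paper inserts witnesses at the stage where a conclusion $\eta([G,\chi]\varphi)$ or $\eta([\!\langle G\rangle\!]\varphi)$ fails to be consistently addable (running through all such subformulas), whereas you insert them when a negated conclusion is consistently added --- a mirror image of the same argument.
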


\begin{proof}
    Let \(\psi_0, \psi_1, \mathellipsis\) be an enumeration of formulas of the language, and let \(y_0 = x\). 
    Suppose that for some \(n \geq 0\), \(y_n\) is a consistent theory, and \(x \subseteq y_n\). 
    If \(y_n + \psi_n\) is consistent, then \(y_{n+1} = y_n + \psi_n\). 
    Otherwise, if \(\psi_n\) is not a conclusion of either $R5$ or $R6$, \(y_{n+1} = y\). 

    If \(\psi_n\) is a conclusion of $R5$, we enumerate all the subformulas of \(\psi_n\) which contain relativised group announcement modalities \([G, \chi]\). 
    Let $\eta_1 ([G, \chi]\varphi_1),$ $\mathellipsis,$ $\eta_k ([G, \chi]\varphi_k)$ be all these subformulas. 
    Then \(y_n^0, \mathellipsis, y_n^k\) is a sequence of consistent theories, where \(y_n^0 = y_n\), and for some \(i < k\), \(y_n^i\) is a consistent theory containing \(y_n\) and \(\neg \eta_i ([G, \chi]\varphi_i)\). 
    Since \(y_n^i\) is closed under $R5$, there exists $\psi_G$ such that \(\eta_i (\chi \wedge [\psi_G \wedge \chi]\varphi_i) \not \in y_n^i\). 
    Hence, \(y_n^{i+1} = y_n^i + \neg \eta_i (\chi \wedge [\psi_G \wedge \chi]\varphi_i)\), and \(y_{n+1} = y_n^k\). Note that adding such a witness $\psi_G$ corresponds to the semantics of relativised group announcements, i.e. for formula $\eta_i \{\langle G, \chi \rangle \neg \varphi_i \}$ we have $\psi_G$ such that $\eta_i \{\chi \rightarrow \langle \psi_G \wedge \chi \rangle \neg \varphi_i\}$. 
    
    Now we consider the case when \(\psi_n\) is a conclusion of $R6$. We enumerate all the subformulas of \(\psi_n\) which contain coalition announcement modalities \([ \! \langle G \rangle \! ]\). 
    Let \(\eta_1 ([ \! \langle G \rangle \! ]\varphi_1)\), \(\mathellipsis\), \(\eta_k ([ \! \langle G \rangle \! ]\varphi_k)\) be all these subformulas. 
    Then \(y_n^0, \mathellipsis, y_n^k\) is a sequence of consistent theories, where \(y_n^0 = y_n\), and for some \(i < k\), \(y_n^i\) is a consistent theory containing \(y_n\) and \(\neg \eta_i ([ \! \langle G \rangle \! ]\varphi_i)\).
    Since \(y_n^i\) is closed under $R6$, there exists $\psi_{G}$ such that \(\eta_i (\langle A \setminus G, \psi_G \rangle \varphi_i)\) \(\not \in y_n^i\).   
    Hence, \(y_n^{i+1} = y_n^i + \neg \eta_i (\langle A \setminus G, \psi_G \rangle  \varphi_i)\), and \(y_{n+1} = y_n^k\).
    Note that since for all $\chi_{A \setminus G}$: \(\eta ([A \setminus G, \psi_G] \varphi) \rightarrow \eta (\psi_G \wedge [\psi_G \wedge \chi_{A \setminus G}] \varphi)\) are theorems, these formulas and their contrapositions (due to Proposition \ref{lemma::2}) are already in \(y_n^i\) (because \(y_n^i\) is a theory). Thus, adding \(\neg \eta_i ( \langle A \setminus G, \psi_G \rangle \varphi_i)\) to \(y^i_n\) adds all the \(\neg \eta_i (\psi_G \rightarrow \langle \psi_G \wedge \chi_{A \setminus G} \rangle \varphi_i)\) for $\chi_{A \setminus G}$ as well. This satisfies the semantics of coalition announcements, i.e. for formula $\eta_i \{\langle \! [ G ] \! \rangle \neg \varphi_i\}$ we have some $\psi_G$ such that for all $\chi_{A \setminus G}$: $\eta_i \{\psi_G \wedge [\psi_G \wedge \chi_{A \setminus G}] \neg \varphi_i\}$.
    
    Finally, \(y\) is a maximal consistent theory, and \(x \subseteq y\).
\end{proof} 

The rest of the proof is an expansion of the one from \cite{balbiani15}. It employs induction on complexity of formulae to prove the Truth Lemma (Proposition \ref{prop::truth}) and, ultimately, completeness (Proposition \ref{prop::completeness}) of \(\mathbf{CoRGAL}\).

\begin{definition}
    The \emph{size} of some formula \(\varphi \in \mathcal{L}_{CoRGAL}\) is defined as follows:
    \begin{enumerate}
        \item \(Size (p) = 1\),
        \item \(Size (\neg \varphi)\) \(=\) \(Size (K_a \varphi) = Size ([G, \chi] \varphi) =\) \(Size ([ \! \langle G \rangle \! ] \varphi) =\) \(Size (\varphi) + 1\), 
        \item \(Size (\varphi \wedge \psi) = Size (\varphi) + Size (\psi) + 1\),
        \item \(Size ([\psi] \varphi) = Size (\psi) + 3 \cdot Size (\varphi)\).
    \end{enumerate}
    
    The \([,]\)-\emph{depth} is defined as follows:
    \begin{enumerate}
        \item \(d_{[,]} (p) = 0\),
        \item \(d_{[,]} (\neg \varphi) = d_{[,]} (K_a \varphi) = d_{[,]} ([ \! \langle G \rangle \! ] \varphi) = d_{[,]} (\varphi)\),
        \item \(d_{[,]} (\varphi \wedge \psi) = \mathrm{max} \{d_{[,]} (\varphi), d_{[,]} (\psi)\}\),
        \item \(d_{[,]} ([\psi] \varphi) = d_{[,]} (\psi) + d_{[,]} (\varphi)\),
        \item \(d_{[,]} ([G, \chi]\varphi) = d_{[,]} (\chi) + d_{[,]} (\varphi) + 1\).
    \end{enumerate}
    
    The \([ \! \langle \! \rangle \! ]\)-\emph{depth} is the same as \([,]\), with the following exceptions.  
    \begin{enumerate}
    \item \(d_{[ \! \langle \! \rangle \! ]} ([G, \chi]\varphi) = d_{[ \! \langle \! \rangle \! ]} (\chi) + d_{[ \! \langle \! \rangle \! ]} (\varphi)\),
    \item \(d_{[ \! \langle \! \rangle \! ]} ([ \! \langle G \rangle \! ]\varphi) = d_{[ \! \langle \! \rangle \! ]} (\varphi) + 1\).
    \end{enumerate}
\end{definition}

\begin{definition}
	The binary relation \(<^{Size}_{[,], [ \! \langle \! \rangle \! ]}\) between \(\varphi, \psi \in \mathcal{L}_{CoRGAL}\) is defined as follows: \(\varphi <^{Size}_{[,], [ \! \langle \! \rangle \! ]} \psi\) iff \(d_{[ \! \langle \! \rangle \! ]} (\varphi) < d_{[ \! \langle \! \rangle \! ]} (\psi)\), or, otherwise,  \(d_{[ \! \langle \! \rangle \! ]} (\varphi) = d_{[ \! \langle \! \rangle \! ]} (\psi)\), and either \(d_{[,]} (\varphi) < d_{[,]} (\psi)\), or \(d_{[,]} (\varphi) = d_{[,]} (\psi)\) and \(Size (\varphi) < Size (\psi)\). The relation is a well-founded strict partial order between formulae. Note that for all epistemic formulas $\psi$ we have that $d_{[,]} (\psi) = d_{[ \! \langle \! \rangle \! ]} (\psi)  = 0$.
\end{definition}
  
  We need the following proposition the for Truth Lemma.
  \begin{proposition}
  \label{prop:measure}
  Let $\psi_G$, $G \subseteq A$, and $\chi, \varphi, \tau \in \mathbf{CoRGAL}$.
\begin{enumerate}
\item $\chi \wedge [\psi_G \wedge \chi]\varphi <^{Size}_{[,], [ \! \langle \! \rangle \! ]} [G, \chi]\varphi$,

\item $[\tau] (\chi \wedge [\psi_G \wedge \chi]\varphi) <^{Size}_{[,], [ \! \langle \! \rangle \! ]} [\tau][G, \chi]\varphi$,

\item $\langle A \setminus G, \psi_G \rangle \varphi <^{Size}_{[,], [ \! \langle \! \rangle \! ]} [ \! \langle G \rangle \! ] \varphi$,

\item $[\tau] \langle A \setminus G, \psi_G \rangle \varphi <^{Size}_{[,], [ \! \langle \! \rangle \! ]} [\tau] [ \! \langle G \rangle \! ] \varphi$.
\end{enumerate}
\end{proposition}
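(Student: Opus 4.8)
The plan is to establish all four inequalities by direct computation of the three measures from their definitions, exploiting that $<^{Size}_{[,], [ \! \langle \! \rangle \! ]}$ is lexicographic: one compares $d_{[ \! \langle \! \rangle \! ]}$ first, then $d_{[,]}$, then $Size$. The four parts fall into two regimes. In parts (3) and (4) I expect to win outright on the top coordinate $d_{[ \! \langle \! \rangle \! ]}$, since the clause $d_{[ \! \langle \! \rangle \! ]}([ \! \langle G \rangle \! ]\varphi) = d_{[ \! \langle \! \rangle \! ]}(\varphi) + 1$ carries the relevant $+1$. In parts (1) and (2) the $d_{[ \! \langle \! \rangle \! ]}$ coordinates will tie, so the inequality must be witnessed at the second coordinate, where $d_{[,]}([G, \chi]\varphi) = d_{[,]}(\chi) + d_{[,]}(\varphi) + 1$ supplies the needed $+1$ that is deliberately absent from the matching $d_{[ \! \langle \! \rangle \! ]}$ clause.

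The recurring observation is that every $\psi_G \in \mathcal{L}_{EL}^G$ is epistemic, so $d_{[,]}(\psi_G) = d_{[ \! \langle \! \rangle \! ]}(\psi_G) = 0$ by the remark following the ordering definition. Hence in the announcement $[\psi_G \wedge \chi]\varphi$ on the left of (1) and (2), the conjunction clause absorbs the $0$, so both depths of $\psi_G \wedge \chi$ collapse to those of $\chi$ alone. For part (1) this gives $d_{[ \! \langle \! \rangle \! ]}(\chi \wedge [\psi_G \wedge \chi]\varphi) = \max\{d_{[ \! \langle \! \rangle \! ]}(\chi),\, d_{[ \! \langle \! \rangle \! ]}(\chi) + d_{[ \! \langle \! \rangle \! ]}(\varphi)\} = d_{[ \! \langle \! \rangle \! ]}(\chi) + d_{[ \! \langle \! \rangle \! ]}(\varphi)$, which is exactly $d_{[ \! \langle \! \rangle \! ]}([G, \chi]\varphi)$; the analogous $d_{[,]}$ computation yields $d_{[,]}(\chi) + d_{[,]}(\varphi)$ on the left against $d_{[,]}(\chi) + d_{[,]}(\varphi) + 1$ on the right. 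The tie on $d_{[ \! \langle \! \rangle \! ]}$ together with the strict drop on $d_{[,]}$ delivers (1). Part (2) then follows by prefixing with $[\tau]$: since $d_{[ \! \langle \! \rangle \! ]}([\tau]\,\cdot\,) = d_{[ \! \langle \! \rangle \! ]}(\tau) + d_{[ \! \langle \! \rangle \! ]}(\cdot)$ and likewise for $d_{[,]}$, the common summand $d(\tau)$ preserves both the equality on the top coordinate and the strict inequality on the second.

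For parts (3) and (4) the first step is to unfold the dual abbreviation $\langle A \setminus G, \psi_G \rangle \varphi = \neg [A \setminus G, \psi_G]\neg\varphi$, as the measures are defined only on the primitive connectives. Since negation leaves all three measures unchanged and $\psi_G$ is epistemic, I compute $d_{[ \! \langle \! \rangle \! ]}(\langle A \setminus G, \psi_G \rangle \varphi) = d_{[ \! \langle \! \rangle \! ]}([A \setminus G, \psi_G]\neg\varphi) = 0 + d_{[ \! \langle \! \rangle \! ]}(\varphi)$, whereas $d_{[ \! \langle \! \rangle \! ]}([ \! \langle G \rangle \! ]\varphi) = d_{[ \! \langle \! \rangle \! ]}(\varphi) + 1$. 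The strict inequality on the top coordinate settles (3), and (4) follows by prefixing with $[\tau]$ exactly as (2) followed from (1).

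The only genuine difficulty is bookkeeping: one must keep straight that $d_{[,]}$ and $d_{[ \! \langle \! \rangle \! ]}$ agree everywhere except on the two clauses for $[G, \chi]$ and $[ \! \langle G \rangle \! ]$, and that the single $+1$ in each measure is positioned so that precisely the intended coordinate strictly decreases while the higher-priority coordinate ties. Checking that the $\max$ in the conjunction clause never spoils these ties -- which holds because all depths are non-negative and $\psi_G$ contributes $0$ -- is the one spot where a misremembered clause or a stray sign would break the argument, so I would verify each $\max$ explicitly rather than by analogy.
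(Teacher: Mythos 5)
Your proof is correct and takes essentially the same route as the paper's: for parts (1)--(2) the $d_{[ \! \langle \! \rangle \! ]}$-depths tie and the strict drop occurs at $d_{[,]}$, for parts (3)--(4) the strict drop occurs already at $d_{[ \! \langle \! \rangle \! ]}$, and the $[\tau]$-prefixed cases follow by adding the common summand $d(\tau)$ to both sides. (One inessential slip: negation does not preserve all three measures, since $Size(\neg\varphi) = Size(\varphi) + 1$; but this is harmless because your comparisons are always settled at the depth coordinates, which negation does preserve.)
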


\begin{proof}
Appendix, Proposition \ref{prop:measureapp}.
\end{proof}

\begin{definition}
    The \emph{canonical model} is the model \(M^C = (W^C, \sim^C, V^C)\), where
    \begin{itemize}
        \item \(W^C\) is the set of all maximal consistent theories,
        \item \(\sim^C\) is defined as \(x \sim^C_a y\) iff \(K_a x = K_a y\),
        \item \(x \in V^C(p)\) iff \(p \in x\).
    \end{itemize}
    Relation \(\sim^C\) is equivalence due to axioms \(A2\), \(A3\), and \(A4\).
\end{definition}

\begin{definition}
    Let \(\varphi \in \mathcal{L}_{CoRGAL}\). 
    Condition \(P(\varphi)\): for all maximal consistent theories \(x\), \(\varphi \in x\) iff $(M^C, x) \models \varphi$. 
    Condition \(H(\varphi)\): for all \(\psi \in\mathcal{L}_{CoRGAL}\), if \(\psi <^{Size}_{[,], [ \! \langle \! \rangle \! ]} \varphi\), then \(P(\psi)\). 
\end{definition}

\begin{proposition}
\label{prop::ph}
    For all \(\varphi \in \mathcal{L}_{CoRGAL}\), if \(H(\varphi)\), then \(P(\varphi)\).
\end{proposition}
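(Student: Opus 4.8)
The plan is to read Proposition \ref{prop::ph} as the inductive step of the Truth Lemma and prove it by a case analysis on the outermost constructor of \(\varphi\), in each case invoking the hypothesis \(H(\varphi)\) on formulas that are strictly \(<^{Size}_{[,], [ \! \langle \! \rangle \! ]}\)-smaller than \(\varphi\). The Boolean and epistemic cases are standard. For \(\varphi = p\), \(P(\varphi)\) is immediate from the definition of \(V^C\). For \(\varphi = \neg \psi\) and \(\varphi = \psi \wedge \chi\) the immediate subformulas are smaller (the two depths are unchanged while \(Size\) strictly drops), so \(H(\varphi)\) supplies \(P\) for them and maximal consistency closes the argument. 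For \(\varphi = K_a \psi\) I would run the usual \(\mathbf{S5}\) canonical-model argument: Proposition \ref{prop::mct} gives that \(K_a x\) is a theory, and the Lindenbaum Lemma (\ref{prop:lindenbaum}) produces, whenever \(\neg K_a \psi \in x\), a witness \(y\) with \(K_a x = K_a y\) and \(\psi \notin y\); since \(\psi <^{Size}_{[,], [ \! \langle \! \rangle \! ]} K_a\psi\), applying \(P(\psi)\) finishes the case.

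The genuinely new cases are the two quantifier modalities, and they are perfectly parallel. For \(\varphi = [G, \chi]\psi\): left to right, if \([G,\chi]\psi \in x\) then axiom \(A10\) together with \(R0\) puts \(\chi \wedge [\psi_G \wedge \chi]\psi\) in \(x\) for every \(\psi_G\); by Proposition \ref{prop:measure}(1) each such formula is smaller than \([G,\chi]\psi\), so \(H(\varphi)\) gives \(P\) for them, whence \((M^C,x) \models \chi\) and \((M^C,x) \models [\psi_G \wedge \chi]\psi\) for all \(\psi_G\), i.e. \((M^C,x) \models [G,\chi]\psi\) by the semantics. Right to left, if \([G,\chi]\psi \notin x\) then, since \(x\) is closed under \(R5\), the contrapositive of \(R5\) with the trivial necessity form \(\sharp\) yields some \(\psi_G\) with \(\chi \wedge [\psi_G \wedge \chi]\psi \notin x\); again Proposition \ref{prop:measure}(1) and \(H(\varphi)\) give \((M^C,x) \not\models \chi \wedge [\psi_G \wedge \chi]\psi\), which refutes one of the two semantic conjuncts of \([G,\chi]\psi\). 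The case \(\varphi = [ \! \langle G \rangle \! ]\psi\) is identical after replacing \(A10\), \(R5\), Proposition \ref{prop:measure}(1) and the \([G,\chi]\)-semantics by \(A11\), \(R6\), Proposition \ref{prop:measure}(3) and the alternative coalition semantics \((M^C,x) \models [ \! \langle G \rangle \! ]\psi\) iff \(\forall \psi_G : (M^C,x) \models \langle A \setminus G, \psi_G \rangle \psi\).

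The remaining case \(\varphi = [\tau]\psi\) splits on \(\psi\). When \(\psi\) is atomic, Boolean, of the form \(K_a\theta\), or of the form \([\theta_1]\theta_2\), I would use the reduction axioms \(A5\)--\(A9\): each rewrites \([\tau]\psi\) into a provably equivalent formula whose \(Size\) is strictly smaller (this is exactly what the factor \(3\) in \(Size([\psi]\varphi)\) and the depth clauses are arranged to guarantee), so \(H(\varphi)\) applies to the reduct, and the equivalence, being both derivable and valid, transports membership and truth back to \([\tau]\psi\). When \(\psi = [G,\chi]\theta\) there is no reduction axiom, so I reuse the argument of the previous paragraph with the necessity form \(\eta = [\tau]\sharp\): Lemma \ref{lemma::2} applied to \(A10\) gives the theorem \([\tau][G,\chi]\theta \rightarrow [\tau](\chi \wedge [\psi_G \wedge \chi]\theta)\) for the left-to-right direction, the contrapositive of \(R5\) supplies a witness \(\psi_G\) for the right-to-left direction, Proposition \ref{prop:measure}(2) certifies that the formulas \([\tau](\chi \wedge [\psi_G \wedge \chi]\theta)\) are smaller, and unfolding the \([\tau]\)-semantics in the updated canonical model \(((M^C)^\tau, x)\) then yields \(P([\tau][G,\chi]\theta)\). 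The subcase \(\psi = [ \! \langle G \rangle \! ]\theta\) is handled the same way through \(A11\), \(R6\) and Proposition \ref{prop:measure}(4).

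The main obstacle I anticipate is the bookkeeping in this last case: one must verify that every formula to which \(H(\varphi)\) is applied is genuinely \(<^{Size}_{[,], [ \! \langle \! \rangle \! ]}\)-smaller than \(\varphi\) (which is precisely why Proposition \ref{prop:measure} isolates those four inequalities and why \(Size\) is weighted so that the \(\mathbf{PAL}\) reductions decrease it), and that the passage between ``\(\chi' \in x\)'' and ``\((M^C,x) \models \chi'\)'' always routes through a formula for which \(P\) is already available rather than through \(\varphi\) itself. The \(K_a\) existence step is the other place needing care, but it is the standard \(\mathbf{S5}\) construction and poses no conceptual difficulty.
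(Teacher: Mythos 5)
Your proposal is correct and takes essentially the same route as the paper's proof: induction on the well-founded order \(<^{Size}_{[,], [ \! \langle \! \rangle \! ]}\), with the atomic, Boolean, \(K_a\) and \textbf{PAL}-reduction cases handled in the standard way (the paper delegates these to \cite{balbiani15}), and the four new cases \([G,\chi]\psi\), \([ \! \langle G \rangle \! ]\psi\), \([\tau][G,\chi]\psi\), \([\tau][ \! \langle G \rangle \! ]\psi\) settled exactly as in the paper via \(A10\)/\(A11\), closure of theories under \(R5\)/\(R6\), Lemma \ref{lemma::2}, and the inequalities of Proposition \ref{prop:measure}. If anything, you are slightly more explicit than the paper: you spell out the truth-to-membership directions via the contrapositive of \(R5\)/\(R6\)-closure, and you correctly route the forward step of the \([\tau]\)-prefixed cases through Lemma \ref{lemma::2} applied to \(A10\)/\(A11\) (the paper's text attributes that step to \(R5\)/\(R6\)-closure, which strictly speaking gives the converse implication).
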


\begin{proof}
    Suppose \(H(\varphi)\) holds, and let \(x\) be a maximal consistent theory. The proof is by induction on \(<^{Size}_{[,], [ \! \langle \! \rangle \! ]}\)-complexity of formulae. 
    Most of the cases were proved in \cite{balbiani15}.
    We prove here only the remaining instances involving realtivised group and coalition announcements.
    
    \emph{Case} \(\varphi_0 = [G, \chi] \varphi\). Suppose that \([G, \chi] \varphi \in x\). Since $x$ is a theory, by axiom $A10$ we have that $\forall \psi_G$: \(\chi \wedge [\psi_G \wedge \chi] \varphi \in x\). 
    By the fact that \(\chi \wedge [\psi_G \wedge \chi] \varphi <^{Size}_{[,], [ \! \langle \! \rangle \! ]} [G, \chi] \varphi\) and the Induction Hypothesis, we have \((M^C,x) \models \chi \wedge [\psi_G \wedge \chi] \varphi\) for all $\psi_G$. The latter is equivalent to \((M^C,x) \models [G, \chi] \varphi\) by semantics.
    
    \emph{Case} \(\varphi_0 = [\tau] [G, \chi] \varphi\). Assume that $[\tau] [G, \chi] \varphi \in x$. Note that $[\tau] [G, \chi] \varphi$ is a necessity form. Since $x$ is a maximal consistent theory and, hence, closed under $R5$, we conclude that $\forall \psi_G$: $[\tau] (\chi \wedge [\psi_G \wedge \chi] \varphi) \in x$. Next, by Proposition \ref{prop:measure} and the Induction Hypothesis we have that $(M^C,x) \models [\tau] (\chi \wedge [\psi_G \wedge \chi] \varphi)$ for all $\psi_G$. The latter amounts to the fact that $(M^C,x) \models \tau$ implies $(M^C,x)^\tau \models \chi \wedge [\psi_G \wedge \chi] \varphi$ for all $\psi_G$. By the semantics of \textbf{CoRGAL}, we have that $(M^C,x) \models \tau$ implies $(M^C,x)^\tau \models [G, \chi] \varphi$, which is equivalent to $(M^C,x) \models [\tau] [G, \chi] \varphi$.  
     
    \emph{Case} \(\varphi_0 = [ \! \langle G \rangle \! ] \varphi\). Suppose that \([ \! \langle G \rangle \! ] \varphi \in x\). Since \(x\) is a theory and by axiom $A11$ we have that $\forall \psi_G$: $\langle A \setminus G, \psi_G \rangle \varphi \in x$.
    By the fact that \(\langle A \setminus G, \psi_G \rangle \varphi <^{Size}_{[,], [ \! \langle \! \rangle \! ]} [ \! \langle G \rangle \! ] \varphi\) and the Induction Hypothesis, we have $\forall \psi_G$: \((M^C, x) \models \langle A \setminus G, \psi_G \rangle \varphi\). The latter is $\forall \psi_G, \exists \chi_{A \setminus G}$: \((M^C,x) \models \psi _G \rightarrow \langle \psi_G \wedge \chi_{A \setminus G} \rangle \varphi\) by semantics, which is equivalent to $(M^C,x) \models [\! \langle G \rangle \! ] \varphi$.
    
    \emph{Case} \(\varphi_0 = [\tau] [ \! \langle G \rangle \! ] \varphi\). Assume that $[\tau] [\! \langle G \rangle \!] \varphi \in x$. Note that $[\tau] [\! \langle G \rangle \!] \varphi$ is a necessity form. Since $x$ is a maximal consistent theory and, hence, closed under $R6$, we conclude that $\forall \psi_G$: $[\tau] (\langle A \setminus G, \psi_G \rangle \varphi) \in x$. Next, by Proposition \ref{prop:measure} and the Induction Hypothesis we have that $(M^C,x) \models [\tau] (\langle A \setminus G, \psi_G \rangle \varphi)$ for all $\psi_G$. The latter amounts to the fact that $(M^C,x) \models \tau$ implies $(M^C,x)^\tau \models \langle A \setminus G, \psi_G \rangle \varphi$ for all $\psi_G$. By the semantics of \textbf{CoRGAL}, we have that $(M^C,x) \models \tau$ implies $(M^C,x)^\tau \models [ \! \langle G \rangle \! ] \varphi$, which is equivalent to $(M^C,x) \models [\tau] [ \! \langle G \rangle \! ] \varphi$.  
\end{proof}

Proposition \ref{prop::ph} implies the following fact.

\begin{proposition}
\label{prop::truth}
    Let \(\varphi \in \mathcal{L}_{CoRGAL}\), and \(x\) be a maximal consistent theory. Then \(\varphi \in x\) iff \((M^C,x) \models \varphi\). 
\end{proposition}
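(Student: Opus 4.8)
The plan is to derive the Truth Lemma directly from Proposition \ref{prop::ph} by well-founded induction on the order \(<^{Size}_{[,], [ \! \langle \! \rangle \! ]}\). Observe first that the claim to be established — that \(\varphi \in x\) iff \((M^C,x) \models \varphi\) for every maximal consistent theory \(x\) — is, for a fixed \(\varphi\), exactly the condition \(P(\varphi)\). So the Truth Lemma is equivalent to the assertion that \(P(\varphi)\) holds for every \(\varphi \in \mathcal{L}_{CoRGAL}\).

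First I would recall that \(<^{Size}_{[,], [ \! \langle \! \rangle \! ]}\) has already been shown to be a well-founded strict partial order between formulae. This is precisely what licenses the relevant induction principle: to prove that \(P\) holds of every formula it suffices to show, for each \(\varphi\), that \(P(\varphi)\) follows from the assumption that \(P(\psi)\) holds for all \(\psi <^{Size}_{[,], [ \! \langle \! \rangle \! ]} \varphi\).

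The key observation is that this inductive hypothesis for a formula \(\varphi\) — namely that \(P(\psi)\) holds for all \(<^{Size}_{[,], [ \! \langle \! \rangle \! ]}\)-smaller \(\psi\) — is literally the definition of the condition \(H(\varphi)\). Thus the inductive step is discharged by a single appeal to Proposition \ref{prop::ph}, which states exactly that \(H(\varphi)\) implies \(P(\varphi)\). Applying this at every \(\varphi\) closes the well-founded induction and yields \(P(\varphi)\) for all \(\varphi\), which is the claim.

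Since the substantive work — the atomic, Boolean, epistemic, public-announcement, relativised-group-announcement, and coalition-announcement cases, the last two relying on Proposition \ref{prop:measure} to certify that the witnessing subformulas are strictly \(<^{Size}_{[,], [ \! \langle \! \rangle \! ]}\)-smaller — was already carried out inside the proof of Proposition \ref{prop::ph}, there is no genuine remaining obstacle. The only matter requiring a moment's care is confirming that the abstract induction scheme faithfully matches the \(H\)/\(P\) formulation, i.e.\ that the ordering is indeed bottomed out so that \(P\) is verified everywhere; but the propositional atoms \(p\) sit at the base of the order and are treated as the base case within Proposition \ref{prop::ph}, so the induction is complete and the lemma follows.
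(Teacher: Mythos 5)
Your proposal is correct and matches the paper's approach exactly: the paper derives Proposition \ref{prop::truth} as an immediate consequence of Proposition \ref{prop::ph} (stating only that the latter ``implies the following fact''), which is precisely the well-founded induction on \(<^{Size}_{[,], [ \! \langle \! \rangle \! ]}\) that you spell out, with \(H(\varphi)\) serving as the induction hypothesis. Your write-up simply makes explicit the induction scheme that the paper leaves implicit.
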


Finally, we prove the completeness of \textbf{CoRGAL}.

\begin{proposition}
\label{prop::completeness}
    For all \(\varphi \in \mathcal{L}_{CoRGAL}\), if \(\varphi\) is valid, then \(\varphi \in \mathbf{CoRGAL}\).
\end{proposition}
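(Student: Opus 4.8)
The plan is to prove completeness by contraposition, assembling the canonical-model machinery established above. Concretely, instead of showing directly that validity implies theoremhood, I would show that if $\varphi \notin \mathbf{CoRGAL}$ then $\varphi$ is not valid, by exhibiting a maximal consistent theory at which $\varphi$ fails and invoking the Truth Lemma (Proposition \ref{prop::truth}) to turn set-membership into refutation in the canonical model $M^C$.

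The steps, in order, are as follows. First, suppose $\varphi \notin \mathbf{CoRGAL}$. Since $\mathbf{CoRGAL}$ contains all propositional tautologies and is closed under $R0$, it cannot contain $\neg \neg \varphi$ without also containing $\varphi$; hence $\neg \neg \varphi \notin \mathbf{CoRGAL}$. By Proposition \ref{prop::consistency} (with $\neg\varphi$ in the role of the added formula) this gives that $\mathbf{CoRGAL} + \neg \varphi$ is consistent. Next I would note that $\mathbf{CoRGAL}$ is itself a theory (it contains $\mathbf{CoRGAL}$ and is closed under $R0,R5,R6$ by definition), so by Proposition \ref{prop::mct} the set $\mathbf{CoRGAL} + \neg\varphi$ is a consistent theory. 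Applying the Lindenbaum Lemma (Lemma \ref{prop:lindenbaum}), extend it to a maximal consistent theory $x$. Because $\neg\varphi \rightarrow \neg\varphi$ is a tautology, $\neg\varphi \in \mathbf{CoRGAL} + \neg\varphi \subseteq x$, and by maximal consistency $\varphi \notin x$. Finally, the Truth Lemma yields $\varphi \in x$ iff $(M^C,x) \models \varphi$, so from $\varphi \notin x$ we conclude $(M^C,x) \not\models \varphi$, witnessing that $\varphi$ is not valid. Contraposition then delivers the claim.

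The noteworthy feature of this argument is that it is essentially pure bookkeeping: all of the genuine difficulty has already been absorbed into the preceding results. The principal obstacle in a completeness proof of this kind is ensuring that the Lindenbaum construction can accommodate the infinitary rules $R5$ and $R6$ — that is, that when a negated group- or coalition-announcement formula is added, an appropriate witness $\psi_G$ exists and the accompanying family of formulas over $\chi_{A\setminus G}$ is forced in (precisely the point where the earlier version failed), and that the resulting order $<^{Size}_{[,],[\!\langle\!\rangle\!]}$ is well-founded so that the induction underlying the Truth Lemma terminates. Both of these are discharged by Lemma \ref{prop:lindenbaum} and Proposition \ref{prop::truth}. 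Consequently, at this final stage the only care required is to verify that $\mathbf{CoRGAL} + \neg\varphi$ is a legitimate consistent theory before Lindenbaum is applied and that $\neg\varphi$ genuinely survives into the maximal extension; once these are checked, the refutation in $M^C$ is immediate.
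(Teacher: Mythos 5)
Your proposal is correct and follows essentially the same route as the paper's own proof: both form the consistent theory $\mathbf{CoRGAL} + \neg\varphi$ via Propositions \ref{prop::mct} and \ref{prop::consistency}, extend it to a maximal consistent theory by the Lindenbaum Lemma (Lemma \ref{prop:lindenbaum}), and refute $\varphi$ in the canonical model via the Truth Lemma (Proposition \ref{prop::truth}). The only differences are cosmetic — you argue by contraposition where the paper argues by contradiction, and you spell out the small bookkeeping steps (e.g.\ that $\neg\neg\varphi \notin \mathbf{CoRGAL}$ and that $\neg\varphi$ survives into the maximal extension) that the paper leaves implicit.
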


\begin{proof}
    Towards a contradiction, suppose that \(\varphi\) is valid and \(\varphi \not \in \mathbf{CoRGAL}\). Since \(\mathbf{CoRGAL}\) is a consistent theory, and by Propositions \ref{prop::mct} and \ref{prop::consistency}, we have that \(\mathrm{CoRGAL} + \neg \varphi\) is a consistent theory. Then, by Proposition \ref{prop:lindenbaum}, there exists a maximal consistent theory \(x \supseteq \mathbf{CoRGAL} + \neg \varphi\) such that \(\neg \varphi \in x\). 
    By Proposition \ref{prop::truth}, this means that \((M^C,x) \not \models \varphi\), which contradicts \(\varphi\) being a validity.
\end{proof}

\section{Conclusion} \label{sec:conclusion}

We presented \(\mathbf{CoRGAL}\) and provided a complete axiomatisation for it. 
Validity of \(\langle \! [ G ] \! \rangle \varphi \rightarrow \langle G \rangle [ A \setminus G ] \varphi\) has also been proven. Whether the other direction valid is an open question. Answering it either way, positively, or negatively, will allow us to understand better mutual expressivity of \(\mathbf{CAL}\) and \(\mathbf{GAL}\).
The axiomatisation of \(\mathbf{CoRGAL}\) we presented is infinitary and employs necessity forms. Finding a finitary axiomatisation is yet another open problem. 
An interesting avenue of further research is adding common and distributed knowledge operators to \(\mathbf{CoRGAL}\) in the vein of \cite{agotnesA12}. 
Additionally, since it is known that \(\mathbf{GAL}\), \(\mathbf{CAL}\) \cite{agotnes16}, and hence \(\mathbf{CoRGAL}\), are undecidable, a search for decidable fragments of these logics is another research question. We would also like to investigate applicability of logics with group and coalition announcements to epistemic planning \cite{bolander11}.
Finally, a complete axiomatisation of \(\mathbf{CAL}\) without relativised group announcement operators has not been provided yet, and it is an intriguing direction of further research. 

\section*{Acknowledgements}

We would like to thank three anonymous TARK 2017 reviewers for their insightful suggestions and detailed comments. 

\bibliographystyle{eptcs}
\bibliography{cal}

\begin{thebibliography}{10}
\providecommand{\bibitemdeclare}[2]{}
\providecommand{\surnamestart}{}
\providecommand{\surnameend}{}
\providecommand{\urlprefix}{Available at }
\providecommand{\url}[1]{\texttt{#1}}
\providecommand{\href}[2]{\texttt{#2}}
\providecommand{\urlalt}[2]{\href{#1}{#2}}
\providecommand{\doi}[1]{doi:\urlalt{http://dx.doi.org/#1}{#1}}
\providecommand{\bibinfo}[2]{#2}

\bibitemdeclare{inproceedings}{agotnesA12}
\bibitem{agotnesA12}
\bibinfo{author}{Thomas \surnamestart {\AA}gotnes\surnameend} \&
  \bibinfo{author}{Natasha \surnamestart Alechina\surnameend}
  (\bibinfo{year}{2012}): \emph{\bibinfo{title}{Epistemic coalition logic:
  completeness and complexity}}.
\newblock In \bibinfo{editor}{Wiebe \surnamestart van~der Hoek\surnameend},
  \bibinfo{editor}{Lin \surnamestart Padgham\surnameend},
  \bibinfo{editor}{Vincent \surnamestart Conitzer\surnameend} \&
  \bibinfo{editor}{Michael \surnamestart Winikoff\surnameend}, editors: {\sl
  \bibinfo{booktitle}{International Conference on Autonomous Agents and
  Multiagent Systems, {AAMAS} 2012, Valencia, Spain, June 4-8, 2012 {(3}
  Volumes)}}, \bibinfo{publisher}{{IFAAMAS}}, pp. \bibinfo{pages}{1099--1106}.
\newblock \urlprefix\url{http://dl.acm.org/citation.cfm?id=2343854}.

\bibitemdeclare{article}{agotnes10}
\bibitem{agotnes10}
\bibinfo{author}{Thomas \surnamestart {\AA}gotnes\surnameend},
  \bibinfo{author}{Philippe \surnamestart Balbiani\surnameend},
  \bibinfo{author}{Hans \surnamestart van Ditmarsch\surnameend} \&
  \bibinfo{author}{Pablo \surnamestart Seban\surnameend}
  (\bibinfo{year}{2010}): \emph{\bibinfo{title}{Group announcement logic}}.
\newblock {\sl \bibinfo{journal}{Journal of Applied Logic}}
  \bibinfo{volume}{8}(\bibinfo{number}{1}), pp. \bibinfo{pages}{62--81}.
\newblock \urlprefix\url{https://doi.org/10.1016/j.jal.2008.12.002}.

\bibitemdeclare{inproceedings}{agotnes08}
\bibitem{agotnes08}
\bibinfo{author}{Thomas \surnamestart {\AA}gotnes\surnameend} \&
  \bibinfo{author}{Hans \surnamestart van Ditmarsch\surnameend}
  (\bibinfo{year}{2008}): \emph{\bibinfo{title}{Coalitions and Announcements}}.
\newblock In \bibinfo{editor}{Lin \surnamestart Padgham\surnameend},
  \bibinfo{editor}{David~C. \surnamestart Parkes\surnameend},
  \bibinfo{editor}{J{\"{o}}rg~P. \surnamestart M{\"{u}}ller\surnameend} \&
  \bibinfo{editor}{Simon \surnamestart Parsons\surnameend}, editors: {\sl
  \bibinfo{booktitle}{7th International Joint Conference on Autonomous Agents
  and Multiagent Systems {(AAMAS} 2008), Estoril, Portugal, May 12-16, 2008,
  Volume 2)}}, \bibinfo{publisher}{{IFAAMAS}}, pp. \bibinfo{pages}{673--680},
  \doi{10.1145/1402298.1402318}.

\bibitemdeclare{incollection}{agotnes14}
\bibitem{agotnes14}
\bibinfo{author}{Thomas \surnamestart {\AA}gotnes\surnameend} \&
  \bibinfo{author}{Hans \surnamestart van Ditmarsch\surnameend}
  (\bibinfo{year}{2014}): \emph{\bibinfo{title}{Knowledge Games and Coalitional
  Abilities}}.
\newblock In: {\sl \bibinfo{booktitle}{Johan van Benthem on Logic and
  Information Dynamics}}, \bibinfo{publisher}{Springer}, pp.
  \bibinfo{pages}{451--485}, \doi{10.1007/978-3-319-06025-5_16}.

\bibitemdeclare{article}{agotnes16}
\bibitem{agotnes16}
\bibinfo{author}{Thomas \surnamestart {\AA}gotnes\surnameend},
  \bibinfo{author}{Hans \surnamestart van Ditmarsch\surnameend} \&
  \bibinfo{author}{Timothy~Stewart \surnamestart French\surnameend}
  (\bibinfo{year}{2016}): \emph{\bibinfo{title}{The Undecidability of
  Quantified Announcements}}.
\newblock {\sl \bibinfo{journal}{Studia Logica}}
  \bibinfo{volume}{104}(\bibinfo{number}{4}), pp. \bibinfo{pages}{597--640}.
\newblock \urlprefix\url{https://doi.org/10.1007/s11225-016-9657-0}.

\bibitemdeclare{article}{balbiani15a}
\bibitem{balbiani15a}
\bibinfo{author}{Philippe \surnamestart Balbiani\surnameend}
  (\bibinfo{year}{2015}): \emph{\bibinfo{title}{Putting right the wording and
  the proof of the Truth Lemma for \emph{APAL}}}.
\newblock {\sl \bibinfo{journal}{Journal of Applied Non-Classical Logics}}
  \bibinfo{volume}{25}(\bibinfo{number}{1}), pp. \bibinfo{pages}{2--19},
  \doi{10.1080/11663081.2015.1011489}.

\bibitemdeclare{article}{balbiani08}
\bibitem{balbiani08}
\bibinfo{author}{Philippe \surnamestart Balbiani\surnameend},
  \bibinfo{author}{Alexandru \surnamestart Baltag\surnameend},
  \bibinfo{author}{Hans \surnamestart van Ditmarsch\surnameend},
  \bibinfo{author}{Andreas \surnamestart Herzig\surnameend},
  \bibinfo{author}{Tomohiro \surnamestart Hoshi\surnameend} \&
  \bibinfo{author}{Tiago \surnamestart de~Lima\surnameend}
  (\bibinfo{year}{2008}): \emph{\bibinfo{title}{`{K}nowable' as `known after an
  announcement'}}.
\newblock {\sl \bibinfo{journal}{Review of Symbolic Logic}}
  \bibinfo{volume}{1}(\bibinfo{number}{3}), pp. \bibinfo{pages}{305--334},
  \doi{10.1017/S1755020308080210}.

\bibitemdeclare{inproceedings}{balbiani07}
\bibitem{balbiani07}
\bibinfo{author}{Philippe \surnamestart Balbiani\surnameend},
  \bibinfo{author}{Alexandru \surnamestart Baltag\surnameend},
  \bibinfo{author}{Hans~P. \surnamestart van Ditmarsch\surnameend},
  \bibinfo{author}{Andreas \surnamestart Herzig\surnameend},
  \bibinfo{author}{Tomohiro \surnamestart Hoshi\surnameend} \&
  \bibinfo{author}{Tiago~De \surnamestart Lima\surnameend}
  (\bibinfo{year}{2007}): \emph{\bibinfo{title}{What can we achieve by
  arbitrary announcements?: {A} dynamic take on Fitch's knowability}}.
\newblock In: {\sl \bibinfo{booktitle}{Proceedings of the 11th Conference on
  Theoretical Aspects of Rationality and Knowledge (TARK-2007), Brussels,
  Belgium, June 25-27, 2007}}, pp. \bibinfo{pages}{42--51},
  \doi{10.1145/1324249.1324259}.

\bibitemdeclare{article}{balbiani15}
\bibitem{balbiani15}
\bibinfo{author}{Philippe \surnamestart Balbiani\surnameend} \&
  \bibinfo{author}{Hans \surnamestart van Ditmarsch\surnameend}
  (\bibinfo{year}{2015}): \emph{\bibinfo{title}{A Simple Proof of the
  Completeness of {APAL}}}.
\newblock {\sl \bibinfo{journal}{Studies in Logic}}
  \bibinfo{volume}{8}(\bibinfo{number}{1}), pp. \bibinfo{pages}{65--78}.
\newblock
  \urlprefix\url{http://studiesinlogic.sysu.edu.cn:8080/ljxyj/EN/abstract/abstract211.shtml}.

\bibitemdeclare{article}{bolander11}
\bibitem{bolander11}
\bibinfo{author}{Thomas \surnamestart Bolander\surnameend} \&
  \bibinfo{author}{Mikkel~Birkegaard \surnamestart Andersen\surnameend}
  (\bibinfo{year}{2011}): \emph{\bibinfo{title}{Epistemic planning for single
  and multi-agent systems}}.
\newblock {\sl \bibinfo{journal}{Journal of Applied Non-Classical Logics}}
  \bibinfo{volume}{21}(\bibinfo{number}{1}), pp. \bibinfo{pages}{9--34},
  \doi{10.3166/jancl.21.9-34}.

\bibitemdeclare{inproceedings}{vanditmarsch12}
\bibitem{vanditmarsch12}
\bibinfo{author}{Hans \surnamestart van Ditmarsch\surnameend}
  (\bibinfo{year}{2012}): \emph{\bibinfo{title}{Quantifying Notes}}.
\newblock In: {\sl \bibinfo{booktitle}{Logic, Language, Information and
  Computation - 19th International Workshop, WoLLIC 2012, Buenos Aires,
  Argentina, September 3-6, 2012.Proceedings}}, pp. \bibinfo{pages}{89--109},
  \doi{10.1007/978-3-642-32621-9_8}.

\bibitemdeclare{book}{del}
\bibitem{del}
\bibinfo{author}{Hans \surnamestart van Ditmarsch\surnameend},
  \bibinfo{author}{Wiebe \surnamestart van~der Hoek\surnameend} \&
  \bibinfo{author}{Barteld \surnamestart Kooi\surnameend}
  (\bibinfo{year}{2008}): \emph{\bibinfo{title}{Dynamic Epistemic Logic}}.
\newblock {\sl \bibinfo{series}{Synthese Library}} \bibinfo{volume}{337},
  \bibinfo{publisher}{Springer}, \doi{10.1007/978-1-4020-5839-4}.

\bibitemdeclare{inproceedings}{galimullin17}
\bibitem{galimullin17}
\bibinfo{author}{Rustam \surnamestart Galimullin\surnameend} \&
  \bibinfo{author}{Natasha \surnamestart Alechina\surnameend}
  (\bibinfo{year}{2017}): \emph{\bibinfo{title}{Coalition and Group
  Announcement Logic}}.
\newblock In: {\sl \bibinfo{booktitle}{Proceedings Sixteenth Conference on
  Theoretical Aspects of Rationality and Knowledge, {TARK} 2017, Liverpool, UK,
  24-26 July 2017.}}, pp. \bibinfo{pages}{207--220},
  \doi{10.4204/EPTCS.251.15}.

\bibitemdeclare{book}{goldblatt}
\bibitem{goldblatt}
\bibinfo{author}{Robert \surnamestart Goldblatt\surnameend}
  (\bibinfo{year}{1982}): \emph{\bibinfo{title}{Axiomatising the Logic of
  Computer Programming}}.
\newblock {\sl \bibinfo{series}{Lecture Notes in Computer Science}}
  \bibinfo{volume}{130}, \bibinfo{publisher}{Springer},
  \doi{10.1007/BFb0022481}.

\bibitemdeclare{article}{pauly02}
\bibitem{pauly02}
\bibinfo{author}{Marc \surnamestart Pauly\surnameend} (\bibinfo{year}{2002}):
  \emph{\bibinfo{title}{A Modal Logic for Coalitional Power in Games}}.
\newblock {\sl \bibinfo{journal}{Journal of Logic and Computation}}
  \bibinfo{volume}{12}(\bibinfo{number}{1}), pp. \bibinfo{pages}{149--166},
  \doi{10.1093/logcom/12.1.149}.

\bibitemdeclare{article}{plaza07}
\bibitem{plaza07}
\bibinfo{author}{Jan \surnamestart Plaza\surnameend} (\bibinfo{year}{2007}):
  \emph{\bibinfo{title}{Logics of public communications (reprint of 1989's
  paper)}}.
\newblock {\sl \bibinfo{journal}{Synthese}}
  \bibinfo{volume}{158}(\bibinfo{number}{2}), pp. \bibinfo{pages}{165--179},
  \doi{10.1007/s11229-007-9168-7}.

\bibitemdeclare{inproceedings}{renne09}
\bibitem{renne09}
\bibinfo{author}{Bryan \surnamestart Renne\surnameend}, \bibinfo{author}{Joshua
  \surnamestart Sack\surnameend} \& \bibinfo{author}{Audrey \surnamestart
  Yap\surnameend} (\bibinfo{year}{2009}): \emph{\bibinfo{title}{Dynamic
  Epistemic Temporal Logic}}.
\newblock In: {\sl \bibinfo{booktitle}{Logic, Rationality, and Interaction,
  Second International Workshop, {LORI} 2009, Chongqing, China, October 8-11,
  2009. Proceedings}}, pp. \bibinfo{pages}{263--277},
  \doi{10.1007/978-3-642-04893-7_21}.

\end{thebibliography}

\appendix

\section{Coalition and Group Annoucement Logic Subsumes Coalition Logic}

\begin{definition}

Axiomatisation of \(\mathbf{CL}\) is as follows:

\(
\begin{array}[h]{ll}
(C0) &\textrm{all instantiation of propositional tautologies},\\

(C1) &\neg \langle \! [ G ] \! \rangle \bot, \\

(C2) &\langle \! [ G ] \! \rangle \top, \\

(C3) &\neg \langle \! [ \emptyset ] \! \rangle \neg \varphi \rightarrow \langle \! [ A ] \! \rangle \varphi,\\

(C4) &\langle \! [ G ] \! \rangle (\varphi \wedge \psi) \rightarrow \langle \! [ G ] \! \rangle \varphi, \\      

(C5) &\langle \! [ G ] \! \rangle \varphi \wedge\langle \! [ H ] \! \rangle \psi \rightarrow \langle \! [ G \cup H ] \! \rangle (\varphi \wedge \psi),\textrm{ if } G \cap H = \emptyset, \\

(R0) &\vdash \varphi, \varphi \rightarrow \psi \Rightarrow \, \vdash \psi,\\

(R1) & \vdash \varphi \leftrightarrow \psi \Rightarrow \, \vdash \langle \! [ G ] \! \rangle \varphi \leftrightarrow \langle \! [ G ] \! \rangle \psi.

\end{array}\)

\end{definition}

\begin{proposition}

\(\mathbf{CoRGAL}\) contains \(\mathbf{CL}\).

\end{proposition}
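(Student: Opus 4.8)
The plan is to leverage the fact that $\mathbf{CoRGAL}$ is complete (Proposition \ref{prop::completeness}): to show that $\mathbf{CoRGAL}$ contains $\mathbf{CL}$ it suffices to verify that each $\mathbf{CL}$-axiom $C0$--$C5$ is \emph{valid} (hence a theorem of $\mathbf{CoRGAL}$ by completeness), and that the $\mathbf{CL}$-rules $R0$ and $R1$ send $\mathbf{CoRGAL}$-theorems to $\mathbf{CoRGAL}$-theorems; the claim then follows by induction on $\mathbf{CL}$-derivations. All validity checks proceed by unfolding the semantics of $\langle \! [ G ] \! \rangle$ from Definition \ref{def:corgalsemantics}. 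The easy cases are dispatched directly: $C0$ is already present via $A0$; for $C2$, taking the witness $\bigwedge_{a \in G} K_a \top$ (which is true everywhere) makes $\langle \! [ G ] \! \rangle \top$ valid since $[\theta]\top$ holds trivially; for $C1$, any candidate witness $\psi_G$ for $\langle \! [ G ] \! \rangle \bot$ must be true at $w$, yet choosing the complement announcement $\bigwedge_{a \in A \setminus G} K_a \top$ makes $\psi_G \wedge \chi_{A \setminus G}$ true at $w$, so $[\psi_G \wedge \chi_{A \setminus G}] \bot$ cannot hold, and $\langle \! [ G ] \! \rangle \bot$ is false everywhere; for $C4$, axiom $A7$ lets one drop the conjunct $\psi$ from $[\psi_G \wedge \chi_{A \setminus G}](\varphi \wedge \psi)$, so the same witness $\psi_G$ serves $\langle \! [ G ] \! \rangle \varphi$.

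For $C3$ the observation is that both sides collapse to the same statement. Since $\mathcal{L}_{EL}^\emptyset = \{\top\}$, the semantics gives $(M,w) \models \langle \! [ \emptyset ] \! \rangle \neg\varphi$ iff $\forall \chi_A : (M,w) \models [\chi_A]\neg\varphi$, so $(M,w) \models \neg\langle \! [ \emptyset ] \! \rangle \neg\varphi$ iff $\exists \theta \in \mathcal{L}_{EL}^A : (M,w) \models \langle \theta \rangle \varphi$. On the other side, since $A \setminus A = \emptyset$ and $\chi_\emptyset = \top$, we get $(M,w) \models \langle \! [ A ] \! \rangle \varphi$ iff $\exists \psi_A : (M,w) \models \langle \psi_A \rangle \varphi$, with $\psi_A$ ranging over the same set $\mathcal{L}_{EL}^A$. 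Hence the two are equivalent and $C3$ is valid (indeed as a biconditional).

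The main obstacle is $C5$ (superadditivity), where disjointness $G \cap H = \emptyset$ is essential. Given witnesses $\psi_G$ for $\langle \! [ G ] \! \rangle \varphi$ and $\psi_H$ for $\langle \! [ H ] \! \rangle \psi$, I take $\psi_G \wedge \psi_H$ as the witness for $G \cup H$; disjointness is exactly what guarantees that $\psi_G \wedge \psi_H$ is a legitimate element of $\mathcal{L}_{EL}^{G \cup H}$ (each agent in $G \cup H$ contributes exactly one conjunct). For an arbitrary complement announcement $\chi_{A \setminus (G \cup H)}$, I instantiate the first premise against $G$ using the $(A \setminus G)$-announcement $\psi_H \wedge \chi_{A \setminus (G \cup H)}$ to obtain $[(\psi_G \wedge \psi_H) \wedge \chi_{A \setminus (G \cup H)}]\varphi$, and symmetrically instantiate the second premise against $H$ using $\psi_G \wedge \chi_{A \setminus (G \cup H)}$ to obtain $[(\psi_G \wedge \psi_H) \wedge \chi_{A \setminus (G \cup H)}]\psi$; both joint announcements coincide with $(\psi_G \wedge \psi_H) \wedge \chi_{A \setminus (G \cup H)}$, and $A7$ recombines these into $[(\psi_G \wedge \psi_H) \wedge \chi_{A \setminus (G \cup H)}](\varphi \wedge \psi)$. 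Getting the bookkeeping of which sub-announcement plays the role of the complement's move right is the delicate point.

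Finally, for the rules: $R0$ is literally one of the $\mathbf{CoRGAL}$ rules, so it preserves theoremhood. For $R1$, if $\varphi \leftrightarrow \psi$ is a $\mathbf{CoRGAL}$-theorem then by soundness it is valid, meaning $\varphi$ and $\psi$ agree on every pointed model---in particular on every updated model $(M^\theta, w)$---so $[\theta]\varphi$ and $[\theta]\psi$ are everywhere equivalent; unfolding the semantics of $\langle \! [ G ] \! \rangle$ then yields that $\langle \! [ G ] \! \rangle \varphi \leftrightarrow \langle \! [ G ] \! \rangle \psi$ is valid, whence a $\mathbf{CoRGAL}$-theorem by completeness. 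This closes the induction and establishes that every $\mathbf{CL}$-theorem is a $\mathbf{CoRGAL}$-theorem.
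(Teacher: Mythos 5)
Your proposal is correct, and its semantic core coincides with the paper's: the paper also verifies $C1$--$C5$ and $R1$ by unfolding the semantics of $\langle \! [ G ] \! \rangle$ (your $C5$ argument --- splitting the complement announcement $\chi_{A \setminus G}$ into an $H$-part and an $A \setminus (G \cup H)$-part, instantiating the $H$-part with the other coalition's witness, and recombining via the \textbf{PAL} axiom $A7$ --- is exactly the paper's, as is your $C3$ collapse; for $C1$ the paper argues via the dual $[ \! \langle G \rangle \! ] \top$ rather than refuting $\langle \! [ G ] \! \rangle \bot$ directly, an inessential difference). Where you genuinely diverge is the wrapper. The paper reads ``contains'' as the statement in the main text --- every $\mathbf{CL}$ axiom is \emph{valid} under $\mathbf{CoRGAL}$ semantics and the rules preserve validity --- and stops there; it never invokes completeness. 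You instead prove the literal set-theoretic inclusion of theorems: validity of the axioms plus Proposition \ref{prop::completeness} yields their $\mathbf{CoRGAL}$-theoremhood, soundness and completeness handle $R1$, and an induction on $\mathbf{CL}$-derivations closes the argument. Your reading is arguably the more faithful one, since $\mathbf{CoRGAL}$ is defined as a set of theorems, and your use of completeness is non-circular (Appendix A plays no role in the completeness proof). What the paper's route buys is independence from the heavy machinery: its claim stands on the semantic arguments alone, whereas yours inherits the full weight of the Lindenbaum construction and the infinitary rules; what your route buys is the stronger and cleaner conclusion that every $\mathbf{CL}$-theorem is literally an element of $\mathbf{CoRGAL}$.
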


\begin{proof}

$C0$ and $R0$ are obvious.

$C1$: It holds that $\models \top$, and $\top$ is true in every restriction of a model, i.e. $\models [ \psi ] \top$. In particular, for some model $(M,w)$ and all true formulas $\psi_G$ and $\chi_{A \setminus G}$: $(M,w) \models \langle \psi_G \wedge \chi_{A \setminus G} \rangle \top$. We can relax the requirement of $\psi_G$ being true by adding the formula as an antecedent. Formally, for all (true and false) $\psi_G$ and some (true) $\chi_{A \setminus G}$: $(M,w) \models \psi_G \rightarrow \langle \psi_G \wedge \chi_{A \setminus G} \rangle \top$. The latter is $(M,w) \models [ \! \langle G \rangle \! ] \top$ by the semantics, and this is equivalent to $(M,w) \models \neg \langle \! [ G ] \! \rangle \bot$ by the duality of the coalition announcement operators. 

$C2$: For any pointed model $(M,w)$ and any announcement $\psi_G \wedge \chi_{A \setminus G}$ it holds that $(M,w) \models [\psi_G \wedge \chi_{A \setminus G}] \top$. The latter implies that for some true $\psi_G$ and for all $\chi_{A \setminus G}$: $(M,w) \models \psi_G \wedge [\psi_G \wedge \chi_{A \setminus G}] \top$, which is $(M,w) \models \langle \! [ G ] \! \rangle \top$ by the semantics.

$C3$: Let $\neg \langle \! [ \emptyset ] \! \rangle \neg \varphi$ be true in some arbitrary pointed model $(M,w)$. This is equivalent to $\exists \psi_A$: $(M,w) \models \neg [ \psi_A ] \neg \varphi$, which is $(M,w) \models \langle \! [ A ] \! \rangle \varphi$ by the semantics.

$C4$: Suppose that for some $(M,w)$, $(M,w) \models \langle \! [ G ] \! \rangle (\varphi_1 \wedge \varphi_2)$ holds. By the semantics, $\exists \psi_G, \forall \chi_{A \setminus G}$: $(M,w) \models \psi_G \wedge [\psi_G \wedge \chi_{A \setminus G}] (\varphi_1 \wedge \varphi_2)$. Then, by axiom of \textbf{PAL} $[\psi](\varphi \wedge \chi) \leftrightarrow [\psi] \varphi \wedge [\psi] \chi$, we have $\exists \psi_G, \forall \chi_{A \setminus G}$: $(M,w) \models \psi_G \wedge [\psi_G \wedge \chi_{A \setminus G}] \varphi_1 \wedge  [\psi_G \wedge \chi_{A \setminus G}] \varphi_2$. The latter implies  $\exists \psi_G, \forall \chi_{A \setminus G}$: $(M,w) \models \psi_G \wedge [\psi_G \wedge \chi_{A \setminus G}] \varphi_1$, which is $(M,w) \models \langle \! [ G ] \! \rangle \varphi_1$ by the semantics.

$C5$: Assume that for some $(M,w)$ we have that $(M,w) \models \langle \! [ G ] \! \rangle \varphi_1 \wedge \langle \! [ H ] \! \rangle \varphi_2$. Let us consider the first conjunct $(M,w) \models \langle \! [ G ] \! \rangle \varphi_1$. By the semantics it is equivalent to $\exists \psi_G, \forall \chi_{A \setminus G}$: $(M,w) \models \psi_G \wedge [ \psi_G \wedge \chi_{A \setminus G} ] \varphi_1$. Since $G \cap H = \emptyset$, we can split $\chi_{A \setminus G}$ into $\chi_{H}$ and $\chi_{A \setminus {G \cup H}}$. Thus we have that $\exists \psi_G, \forall \chi_H, \forall \chi_{A \setminus {(G \cup H)}}$: $(M,w) \models \psi_G \wedge [ \psi_G \wedge \chi_{H} \wedge \chi_{A \setminus (G \cup H)} ] \varphi_1$. The same holds for the second conjunct: $\exists \psi_H, \forall \chi_G, \forall \chi_{A \setminus {(G \cup H)}}$: $(M,w) \models \psi_H \wedge [ \psi_H \wedge \chi_{G} \wedge \chi_{A \setminus (G \cup H)} ] \varphi_2$. Since $\chi_H$ ($\chi_G$) quantifies over all formulas known to $H$ ($G$), we can substitute $\chi_H$ ($\chi_G$) with $\psi_H$ ($\psi_G$). Hence we have 
\[\exists \psi_G, \exists \psi_H, \forall \chi_{A \setminus (G \cup H)}:\] \[(M,w) \models \psi_G \wedge \psi_H \wedge [\psi_G \wedge \psi_H \wedge \chi_{A \setminus (G \cup H)}] \varphi_1 \wedge [\psi_G \wedge \psi_H \wedge \chi_{A \setminus G \cup H}] \varphi_2.\] 
By the axiom of \textbf{PAL} $[\psi](\varphi \wedge \chi) \leftrightarrow [\psi] \varphi \wedge [\psi] \chi$, we have that \[\exists \psi_G, \exists \psi_H, \forall \chi_{A \setminus (G \cup H)}: (M,w) \models \psi_G \wedge \psi_H \wedge [\psi_G \wedge \psi_H \wedge \chi_{A \setminus (G \cup H)}] (\varphi_1 \wedge \varphi_2),\] and the latter is equivalent to $(M,w) \models \langle \! [ G \cup H ] \! \rangle (\varphi_1 \wedge \varphi_2)$ by the semantics.

\(R1\): Assume that $\models \varphi \leftrightarrow \psi$. This means that for any pointed model \((M,w)\) the following holds: \((M,w) \models \varphi\) iff \((M,w) \models \psi\) (1). Now suppose that for some pointed model \((M,v)\) it holds that \((M,v) \models \langle \! [ G ] \! \rangle \varphi\). By the semantics, $\exists \psi_G, \forall \chi_{A \setminus G}$: $(M,v) \models \psi_G \wedge [\psi_G \wedge \chi_{A \setminus G}] \varphi$, which is equivalent to the following: $(M,v) \models \psi_G$ and ($(M,v) \models \psi_G \wedge \chi_{A \setminus G}$ implies $(M^{\psi_G \wedge \chi_{A \setminus G}},v) \models \varphi$). By (1) we have that $\exists \psi_G, \forall \chi_{A \setminus G}$: $(M,v) \models \psi_G$ and ($(M,v) \models \psi_G \wedge \chi_{A \setminus G}$ implies $(M^{\psi_G \wedge \chi_{A \setminus G}},v) \models \psi$), which is $(M,v) \models \langle \! [ G ] \! \rangle \psi\) by the semantics. The same argument holds in the other direction.
\end{proof}

\section{Proofs of Propositions}

\begin{proposition}
\label{prop::app1}
    $R3$, $R4$, $R5$, and $R6$ are sound, that is, they preserve validity.
\end{proposition}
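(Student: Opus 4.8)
The plan is to prove each rule preserves validity by reading off Definition \ref{def:corgalsemantics} directly: assume the premises are valid and establish the conclusion at an arbitrary pointed model $(M,w)$. The two finitary necessitation rules are quick. For $R4$, assuming $\models \varphi$, fix any $\psi_G$; if $(M,w) \not\models \psi_G$ the required implication $\psi_G \rightarrow \langle \psi_G \wedge \chi_{A \setminus G} \rangle \varphi$ holds vacuously, and otherwise I take the witness $\chi_{A \setminus G} := \bigwedge_{a \in A \setminus G} K_a \top \in \mathcal{L}_{EL}^{A \setminus G}$, which is true everywhere, so that $\psi_G \wedge \chi_{A \setminus G}$ is truthfully announceable at $w$ and $\varphi$ holds in the update because it is valid; hence $(M,w) \models [ \! \langle G \rangle \! ] \varphi$. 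The universal-over-$\psi_G$ content of $R3$ is equally immediate, since $\models \varphi$ yields $\models [\psi_G \wedge \chi]\varphi$ for every $\psi_G$.

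The substance of the proposition lies in the infinitary rules $R5$ and $R6$, and here I would isolate the engine as a single commutation lemma about necessity forms. First record the two $\forall \psi_G$-characterisations the semantics supplies: $(M,w) \models [G, \chi]\varphi$ iff $(M,w) \models \chi \wedge [\psi_G \wedge \chi]\varphi$ for every $\psi_G$, and $(M,w) \models [ \! \langle G \rangle \! ]\varphi$ iff $(M,w) \models \langle A \setminus G, \psi_G \rangle \varphi$ for every $\psi_G$ (the latter is exactly the alternative clause stated just after Definition \ref{def:corgalsemantics}). The lemma, proved by induction on $\eta$, is that a necessity form commutes with these quantifiers at its hole: $(M,w) \models \eta([G,\chi]\varphi)$ iff $(M,w) \models \eta(\chi \wedge [\psi_G \wedge \chi]\varphi)$ for all $\psi_G$, and similarly with $[ \! \langle G \rangle \! ]\varphi$ and $\langle A \setminus G, \psi_G \rangle \varphi$. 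The base case $\eta = \sharp$ is the characterisation above; for $\eta = \alpha \rightarrow \eta'$, $\eta = K_a \eta'$ and $\eta = [\alpha]\eta'$ the quantifier over $\psi_G$ can be pulled outward because $\alpha$, the relation $\sim_a$, and the update $M^\alpha$ do not depend on $\psi_G$, so the induction hypothesis applies uniformly at the single world or model feeding the hole. Soundness of $R5$ and $R6$ then follows at once: if every premise $\eta(\cdots)$ is valid, then at each $(M,w)$ the right-hand side holds for all $\psi_G$, so the left-hand side $\eta([G,\chi]\varphi)$ (resp. $\eta([ \! \langle G \rangle \! ]\varphi)$) holds, i.e. it is valid.

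I expect the main technical obstacle to be exactly this commutation step: verifying that in each necessity-form constructor the infinite universal over $\psi_G$ genuinely commutes with the context, and in particular that the world reached through $K_a$ and the submodel $M^\alpha$ reached through $[\alpha]$ are uniform in $\psi_G$, which is what licenses inserting the single formula $[G,\chi]\varphi$ rather than treating each instance $\chi \wedge [\psi_G \wedge \chi]\varphi$ separately. A second point I would scrutinise is $R3$ read as unrestricted necessitation, since it sits in tension with Proposition \ref{prop:validity}: because $A10$ makes $[G,\chi]\varphi \rightarrow \chi$ valid, inferring $\models [G,\chi]\varphi$ from $\models \varphi$ for an arbitrary relativising $\chi$ would force $\chi$ itself to be valid (e.g. it would make the unsatisfiable $[G,\bot]\top$ a theorem). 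I would therefore confirm that the validity-preserving content of $R3$ is precisely its universal-over-$\psi_G$ part, discharging the $\chi$-conjunct through $A10$ rather than through plain necessitation.
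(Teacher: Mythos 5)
Your handling of $R4$, $R5$, and $R6$ is correct and is essentially the paper's own argument. The paper proves $R4$ the same way (from $\models \varphi$ infer $\models [\psi_G \wedge \chi_{A \setminus G}]\varphi$ and then discharge the truth precondition — your explicit case split on whether $\psi_G$ holds, with witness $\bigwedge_{a \in A \setminus G} K_a \top$, is just a tidier rendering of its somewhat loosely worded step "for some true $\psi_G$"), and it proves $R5$ and $R6$ by exactly the induction on $\eta$ you describe, with cases $\tau \rightarrow \eta'$, $K_a \eta'$, $[\tau]\eta'$, each justified by the fact that $\tau$, $\sim_a$, and $M^\tau$ are uniform in $\psi_G$. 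Your packaging of that induction as a standalone biconditional commutation lemma, with the $\forall \psi_G$-characterisations of $[G,\chi]\varphi$ and $[ \! \langle G \rangle \! ]\varphi$ as base cases, is marginally cleaner than the paper's induction hypothesis ("the rule preserves validity for all formulas $\eta(\cdots)$"), but the mathematical content is identical.

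Your closing scruple about $R3$ is not a defect of your proposal; it exposes a genuine gap in the paper's proof. The appendix announces "we present only the proof for $R3$" and then gives only the argument labelled $(R4)$, which concerns coalition announcements; no argument for $R3$ ever appears — and none can, because $R3$ as literally stated does not preserve validity. Since the semantics (equivalently, $A10$) makes $[G,\chi]\varphi \rightarrow \chi$ valid, validity of $[G,\chi]\varphi$ forces validity of $\chi$; so from $\models \top$ the rule would yield $\models [G,\bot]\top$, which is unsatisfiable, and inside the proof system $\vdash [G,\bot]\top$ together with $A10$ and $R0$ gives $\vdash \bot$. Your diagnosis — that only the universal-over-$\psi_G$ content of the rule is validity-preserving, the $\chi$-conjunct having to come from the truth of $\chi$ itself (e.g.\ by weakening the conclusion to $\chi \rightarrow [G,\chi]\varphi$) — is exactly right, and you were correct to flag this rather than pretend to prove the rule as stated.
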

\begin{proof}
Proofs of $R3$ and $R4$ are similar, and we present only the proof for $R3$.

($R4$) Assume $\models \varphi$. Since public announcements preserve validity, we have that for any $(M,w)$ and $\psi$, $(M,w) \models [\psi] \varphi$. Since $\psi$ is arbitrary, we have that for all $\psi_G$ and $\chi_{A \setminus G}$ $(M,w) \models [\psi_G \wedge \chi_{A \setminus G}] \varphi$. The latter implies that for some true $\psi_G$ it holds that $(M,w) \models \psi_G \wedge [\psi_G \wedge \chi_{A \setminus G}] \varphi$, which is $(M,w) \models [ \! \langle G \rangle \! ] \varphi$ by the semantics. Since $(M,w)$ was arbitrary, we conclude that $\models [ \! \langle G \rangle \! ] \varphi$.

($R5$) Let $(M,w)$ be an arbitrary pointed model, and let some $\chi$ be given. We proceed by induction on $\eta$.

\textit{Base case}. For all $\psi_G$ we have that $(M,w) \models \chi \wedge [\psi_G \wedge \chi] \varphi$. This is equivalent to $(M,w) \models [G, \chi] \varphi$ by semantics.

\textit{Induction Hypothesis}. Assume that the rule preserves validity for all formulas $\eta(\chi \wedge [\psi_G \wedge \chi] \varphi)$ and all pointed models $(M,w)$.

\textit{Case} $\forall \psi_G$: $\tau \rightarrow \eta(\chi \wedge [\psi_G \wedge \chi] \varphi)$ for some $\tau \in \mathcal{L}_{CoRGAL}$. This means that $(M,w) \models \neg \tau$ or $(M,w) \models \eta(\chi \wedge [\psi_G \wedge \chi] \varphi)$. By the Induction Hypothesis we have that $(M,w) \models \neg \tau$ or $(M,w) \models \eta([ G, \chi ] \varphi)$, which is equivalent to $(M,w) \models \tau \rightarrow \eta([ G, \chi ] \varphi)$.

\textit{Case} $\forall \psi_G$: $K_a \eta(\chi \wedge [\psi_G \wedge \chi] \varphi)$ for some $a \in A$. By semantics we have that for every $v \in W$: $(w,v) \in \sim_a$ implies $(M,v) \models \eta(\chi \wedge [\psi_G \wedge \chi] \varphi)$. By the Induction Hypothesis we conclude that for every $v \in W$: $(w,v) \in \sim_a$ implies $(M,v) \models \eta([G, \chi] \varphi)$, which is equivalent to $(M,w) \models K_a \eta([ G, \chi ] \varphi)$.

\textit{Case} $\forall \psi_G$: $[\tau] \eta(\chi \wedge [\psi_G \wedge \chi] \varphi)$ for some $\tau \in \mathcal{L}_{CoRGAL}$. This means that $(M,w) \models \tau$ implies $(M^\tau,w) \models \eta(\chi \wedge [\psi_G \wedge \chi] \varphi)$. By the Induction Hypothesis we have that $(M,w) \models \tau$ implies $(M^\tau,w) \models \eta([ G, \chi ] \varphi)$, which is equivalent to $(M,w) \models [\tau] \eta([ G, \chi ] \varphi)$.

($R6$) Let $(M,w)$ be an arbitrary pointed model. We proceed by induction on $\eta$.

\textit{Base case}. For all $\psi_G$ we have that $(M,w) \models \langle A \setminus G, \psi_G \rangle \varphi$. This is equivalent to $(M,w) \models [ \! \langle G \rangle \! ] \varphi$ by the alternative semantics using relativised group announcements.

\textit{Induction Hypothesis}. Assume that the rule preserves validity for all formulas $\eta(\langle A \setminus G, \psi_G \rangle \varphi)$ and all pointed models $(M,w)$.

\textit{Case} $\forall \psi_G$: $\tau \rightarrow \eta(\langle A \setminus G, \psi_G \rangle \varphi)$ for some $\tau \in \mathcal{L}_{CoRGAL}$. This means that $(M,w) \models \neg \tau$ or $(M,w) \models \eta(\langle A \setminus G, \psi_G \rangle \varphi)$. By the Induction Hypothesis we have that $(M,w) \models \neg \tau$ or $(M,w) \models \eta([ \! \langle G \rangle \! ] \varphi)$, which is equivalent to $(M,w) \models \tau \rightarrow \eta([ \! \langle G \rangle \! ] \varphi)$.

\textit{Case} $\forall \psi_G$: $K_a \eta(\langle A \setminus G, \psi_G \rangle \varphi)$ for some $a \in A$. By semantics we have that for every $v \in W$: $(w,v) \in \sim_a$ implies $(M,v) \models \eta(\langle A \setminus G, \psi_G \rangle \varphi)$. By the Induction Hypothesis we conclude that for every $v \in W$: $(w,v) \in \sim_a$ implies $(M,v) \models \eta([ \! \langle G \rangle \! ] \varphi)$, which is equivalent to $(M,w) \models K_a \eta([ \! \langle G \rangle \! ] \varphi)$.

\textit{Case} $\forall \psi_G$: $[\tau] \eta(\langle A \setminus G, \psi_G \rangle \varphi)$ for some $\tau \in \mathcal{L}_{CoRGAL}$. This means that $(M,w) \models \tau$ implies $(M^\tau,w) \models \eta(\langle A \setminus G, \psi_G \rangle \varphi)$. By the Induction Hypothesis we have that $(M,w) \models \tau$ implies $(M^\tau,w) \models \eta([ \! \langle G \rangle \! ] \varphi)$, which is equivalent to $(M,w) \models [\tau] \eta([ \! \langle G \rangle \! ] \varphi)$. 
 \end{proof}

\begin{proposition}
\label{prop:oneapp}
$\langle \! [ G ] \! \rangle \varphi \rightarrow \langle \! [ G \cup H ] \! \rangle \varphi $, where $G,H \subseteq A$, is valid.
\end{proposition}

\begin{proof}
Let $(M,w) \models \langle \! [ G ] \! \rangle \varphi$ for some arbitrary $(M,w)$. By the semantics of \textbf{CAL} this is equivalent to \[\exists \psi_G, \forall \chi_{A \setminus G}: (M,w) \models \psi_G \wedge [\psi_G \wedge \chi_{A \setminus G}] \varphi.\]
Let us consider formula $\chi_{A \setminus G}$: since $A \setminus G = A \setminus (G \cup H) \cup H \setminus G$, we can `unpack' the formula into $\chi_{A \setminus (G \cup H)}$ and $\chi_{H \setminus G}$. Hence we have 
\[\exists \psi_G, \forall \chi_{H \setminus G}, \forall \chi_{A \setminus (G \cup H)}: (M,w) \models \psi_G \wedge [\psi_G \wedge \chi_{H \setminus G} \wedge \chi_{A \setminus (G \cup H)}] \varphi.\]
The latter implies 
\[\exists \psi_G, \exists \top_{H \setminus G}, \forall \chi_{A \setminus (G \cup H)}: (M,w) \models \psi_G \wedge \top_{H \setminus G} \wedge [\psi_G \wedge \top_{H \setminus G} \wedge \chi_{A \setminus (G \cup H)}] \varphi,\]
where $\top_{H \setminus G}:= \bigwedge_{a \in H \setminus G} K_a \top$. Combining $\psi_G$ and $\top_{H \setminus G}$ into a single announcement $\psi_{G \cup H}$ by the united coalition $G \cup H$, we conclude that
\[\exists \psi_{G \cup H}, \forall \chi_{A \setminus (G \cup H)}: (M,w) \models \psi_{G \cup H} \wedge [\psi_{G \cup H} \wedge \chi_{A \setminus (G \cup H)}] \varphi.\]
This is equivalent to $(M,w) \models \langle \! [ G \cup H ] \! \rangle \varphi$ by semantics.
\end{proof}

\begin{proposition}
\label{prop:twoapp}
$\langle \! [ G ] \! \rangle \langle \! [ G ] \! \rangle \varphi \rightarrow [ \! \langle A \setminus G \rangle \! ] \varphi$ is valid.
\end{proposition}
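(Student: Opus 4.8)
The plan is to argue directly from the semantics, turning the two-step winning strategy of $G$ witnessed by the antecedent into a single legal group announcement that counters any move of $A\setminus G$. Assume $(M,w)\models\langle \! [ G ] \! \rangle\langle \! [ G ] \! \rangle\varphi$ and unfold the outer operator: fix a witness $\psi_G^1\in\mathcal{L}_{EL}^G$ such that for every $\chi_{A\setminus G}$ we have $(M,w)\models\psi_G^1\wedge[\psi_G^1\wedge\chi_{A\setminus G}]\langle \! [ G ] \! \rangle\varphi$. To obtain the consequent $[ \! \langle A\setminus G\rangle \! ]\varphi$ (whose complementary group is $A\setminus(A\setminus G)=G$), I must produce, for every $\psi_{A\setminus G}$, a response $\chi_G\in\mathcal{L}_{EL}^G$ with $(M,w)\models\psi_{A\setminus G}\rightarrow\langle\psi_{A\setminus G}\wedge\chi_G\rangle\varphi$. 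If $\psi_{A\setminus G}$ is false at $w$ the implication is vacuous (take $\chi_G=\bigwedge_{a\in G}K_a\top$), so I assume $(M,w)\models\psi_{A\setminus G}$.

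First I would extract the inner witness. Instantiating the universally quantified $\chi_{A\setminus G}$ of the antecedent with the given $\psi_{A\setminus G}$, and using that $\psi_G^1$ and $\psi_{A\setminus G}$ both hold at $w$, the announcement $\psi_G^1\wedge\psi_{A\setminus G}$ succeeds, so $(M^{\psi_G^1\wedge\psi_{A\setminus G}},w)\models\langle \! [ G ] \! \rangle\varphi$. Unfolding this inner operator in the updated model and choosing the trivial complement announcement $\bigwedge_{a\in A\setminus G}K_a\top$, I obtain a second witness $\psi_G^2=\bigwedge_{a\in G}K_a\beta_a\in\mathcal{L}_{EL}^G$ with $(M^{\psi_G^1\wedge\psi_{A\setminus G}},w)\models\psi_G^2\wedge[\psi_G^2]\varphi$, i.e. $(M^{\psi_G^1\wedge\psi_{A\setminus G}},w)\models\langle\psi_G^2\rangle\varphi$ by $\models\psi\wedge[\psi]\varphi\leftrightarrow\langle\psi\rangle\varphi$. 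Writing $\alpha:=\psi_G^1\wedge\psi_{A\setminus G}$ and using $(M,w)\models\alpha$, this lifts to $(M,w)\models\langle\alpha\rangle\langle\psi_G^2\rangle\varphi$.

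The core of the argument is then to repackage the sequential announcement ``$\alpha$, then $\psi_G^2$'' as a single move $\psi_{A\setminus G}\wedge\chi_G$ with $\chi_G$ legal for $G$. I would set $\chi_G:=\psi_G^1\wedge\bigwedge_{a\in G}K_a\,t([\alpha]\beta_a)$, where $t$ is the translation of Definition \ref{def:translation} used to keep each conjunct inside $\mathcal{L}_{EL}$, so that $\chi_G$ is (equivalent to) a formula of $\mathcal{L}_{EL}^G$. Two checks are needed. First, $\chi_G$ holds at $w$: $\psi_G^1$ holds, and for each $a\in G$ the formula $K_a[\alpha]\beta_a$ holds because the conjunct $K_a\beta_a$ of $\psi_G^2$ at $(M^\alpha,w)$ says precisely that every $a$-accessible state surviving $\alpha$ satisfies $\beta_a$ afterwards. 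Second, $\psi_{A\setminus G}\wedge\chi_G$ is equivalent to $\alpha\wedge[\alpha]\psi_G^2$: under the conjunct $\alpha$, axiom $A8$ turns each $K_a[\alpha]\beta_a$ into $[\alpha]K_a\beta_a$, and $A7$ collects these into $[\alpha]\psi_G^2$. Finally, by the composition axiom $A9$ in its diamond form, $\langle\alpha\wedge[\alpha]\psi_G^2\rangle\varphi\leftrightarrow\langle\alpha\rangle\langle\psi_G^2\rangle\varphi$, which holds at $w$; hence $(M,w)\models\langle\psi_{A\setminus G}\wedge\chi_G\rangle\varphi$, as required.

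The hard part will be the construction of $\chi_G$ rather than the quantifier bookkeeping. A naive attempt to let $G$ ``replay'' $\psi_G^1$ and then $\psi_G^2$ fails because $\psi_G^2$ is an epistemic formula of the updated model $M^\alpha$ and is not, as it stands, announceable in $M$. The device that resolves this is the conditional commitment $K_a\,t([\alpha]\beta_a)$, which stays within $\mathcal{L}_{EL}^G$ via $t$ and is guaranteed true at $w$ by $A8$ together with the fact that $\psi_G^2$ was already known to $G$ in $M^\alpha$. It is equally essential that in the consequent $A\setminus G$ moves first, so that $\chi_G$ -- and with it $\psi_G^2$ and the $\beta_a$ -- may legitimately depend on $\psi_{A\setminus G}$; this dependence is exactly what licenses instantiating the antecedent's universal $\chi_{A\setminus G}$ with the adversary's move $\psi_{A\setminus G}$.
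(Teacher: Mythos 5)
Your proof is correct and follows essentially the same route as the paper's: unfold the coalition operators, instantiate the inner universally quantified announcement of $A\setminus G$ with the trivial $\top_{A\setminus G}$, and use \textbf{PAL} reasoning ($A8$, $A9$) to repackage the two-step announcement ``$\alpha$, then $\psi_G^2$'' as a single legal move of $G$ against the adversary's $\psi_{A\setminus G}$. The one point worth noting is that where the paper's proof ends by saying it will ``morph $\psi_G$ and $[\psi_G\wedge\chi_{A\setminus G}]\psi'_G$ into a single announcement by $G$,'' you spell this step out explicitly via $\chi_G:=\psi_G^1\wedge\bigwedge_{a\in G}K_a\,t([\alpha]\beta_a)$ and the translation $t$ of Definition \ref{def:translation} (the device the paper itself uses in Proposition \ref{prop:axiom}), which is precisely the care needed to keep the response inside $\mathcal{L}_{EL}^G$.
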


\begin{proof}
Suppose that for some $(M,w)$ it holds that $(M,w) \models \langle \! [ G ] \! \rangle \langle \! [ G ] \! \rangle \varphi$. This is equivalent to 
\[\exists \psi_G, \forall \chi_{A \setminus G}, \exists \psi^\prime_G, \forall \chi^\prime_{A \setminus G}: (M,w) \models \psi_G \wedge [\psi_G \wedge \chi_{A \setminus G}](\psi^\prime_G \wedge [\psi_G^\prime \wedge \chi^\prime_{A \setminus G}] \varphi).\] Since $\chi^\prime_{A \setminus G}$ quantifies over all epistemic formulas known to $A \setminus G$, it also quantifies over $\top_{A \setminus G}:= \bigwedge_{a \in A \setminus G} K_a \top$. Hence it is implied that 
\[\exists \psi_G, \forall \chi_{A \setminus G}, \exists \psi^\prime_G: (M,w) \models \psi_G \wedge [\psi_G \wedge \chi_{A \setminus G}](\psi^\prime_G \wedge [\psi_G^\prime \wedge \top_{A \setminus G}] \varphi),\] which is equivalent to 
\[\exists \psi_G, \forall \chi_{A \setminus G}, \exists \psi^\prime_G: (M,w) \models \psi_G \wedge [\psi_G \wedge \chi_{A \setminus G}] \psi^\prime_G \wedge [\psi_G \wedge \chi_{A \setminus G}][\psi_G^\prime] \varphi.\]

Using \textbf{PAL} validity $[\psi] \chi \wedge [\psi][\chi] \varphi \leftrightarrow [\psi] \chi \wedge [\psi] \langle \chi \rangle \varphi$, we get
\[\exists \psi_G, \forall \chi_{A \setminus G}, \exists \psi^\prime_G: (M,w) \models \psi_G \wedge [\psi_G \wedge \chi_{A \setminus G}] \psi^\prime_G \wedge [\psi_G \wedge \chi_{A \setminus G}] \langle \psi_G^\prime \rangle \varphi.\]

Next, we use \textbf{PAL} validity $[\psi] \varphi \leftrightarrow (\psi \rightarrow \langle \psi \rangle \varphi)$: 
\[\exists \psi_G, \forall \chi_{A \setminus G}, \exists \psi^\prime_G: (M,w) \models \psi_G \wedge [\psi_G \wedge \chi_{A \setminus G}] \psi^\prime_G \wedge (\psi_G \wedge \chi_{A \setminus G} \rightarrow \langle \psi_G \wedge \chi_{A \setminus G} \rangle \langle \psi_G^\prime \rangle \varphi).\]

By propositional reasoning the latter implies 
\[\exists \psi_G, \forall \chi_{A \setminus G}, \exists \psi^\prime_G: (M,w) \models \psi_G \wedge (\psi_G \rightarrow (\chi_{A \setminus G} \rightarrow \langle \psi_G \wedge \chi_{A \setminus G} \rangle \langle \psi_G^\prime \rangle \varphi),\]
and this implies 
\[\exists \psi_G, \forall \chi_{A \setminus G}, \exists \psi^\prime_G: (M,w) \models \chi_{A \setminus G} \rightarrow \langle \psi_G \wedge \chi_{A \setminus G} \rangle \langle \psi_G^\prime \rangle \varphi.\]

Finally, by \textbf{PAL} axiom $\langle \psi \rangle \langle \chi \rangle \varphi \leftrightarrow \langle \psi \wedge [\psi] \chi \rangle \varphi$, we have that 
\[\exists \psi_G, \forall \chi_{A \setminus G}, \exists \psi^\prime_G: (M,w) \models \chi_{A \setminus G} \rightarrow \langle \psi_G \wedge \chi_{A \setminus G} \wedge [\psi_G \wedge \chi_{A \setminus G}] \psi_G^\prime \rangle \varphi.\]

We can move $\exists \psi_G$ within the scope of $\forall \chi_{A \setminus G}$, and morph $\psi_G$ and $[\psi_G \wedge \chi_{A \setminus G}] \psi_G^\prime$ into a single announcement by $G$.

The latter is $(M,w) \models [\! \langle A \setminus G \rangle \! ] \varphi$ by the semantics of \textbf{CAL}.
\end{proof}

\begin{lemma}
\label{lemma::2app}
	Let \(\varphi, \psi \in \mathcal{L}_{CoRGAL}\). If \(\varphi \rightarrow \psi\) is a theorem, then \(\eta (\varphi) \rightarrow \eta (\psi)\) is a theorem as well.
\end{lemma}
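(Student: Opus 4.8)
The plan is to prove the lemma by induction on the structure of the necessity form $\eta$, following exactly the four clauses of its inductive definition $\eta ::= \sharp \mid \chi \rightarrow \eta(\sharp) \mid K_a \eta(\sharp) \mid [\chi]\eta(\sharp)$. The induction hypothesis states that for the immediate subform $\eta'$ the implication $\eta'(\varphi) \rightarrow \eta'(\psi)$ is a theorem whenever $\varphi \rightarrow \psi$ is. For the base case $\eta = \sharp$ we have $\eta(\varphi) = \varphi$ and $\eta(\psi) = \psi$, so $\eta(\varphi) \rightarrow \eta(\psi)$ is literally the hypothesis $\varphi \rightarrow \psi$, a theorem by assumption.

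For the inductive step I would handle each of the three constructors by reducing it to a monotonicity property of the outermost operator. If $\eta(\sharp) = \tau \rightarrow \eta'(\sharp)$, then from the induction hypothesis $\vdash \eta'(\varphi) \rightarrow \eta'(\psi)$ propositional reasoning ($A0$ together with $R0$) immediately yields $\vdash (\tau \rightarrow \eta'(\varphi)) \rightarrow (\tau \rightarrow \eta'(\psi))$, which is the desired $\eta(\varphi) \rightarrow \eta(\psi)$. If $\eta(\sharp) = K_a \eta'(\sharp)$, then applying the knowledge necessitation rule $R1$ to the induction hypothesis gives $\vdash K_a(\eta'(\varphi) \rightarrow \eta'(\psi))$, and the distribution axiom $A1$ followed by $R0$ gives $\vdash K_a \eta'(\varphi) \rightarrow K_a \eta'(\psi)$, as required.

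The remaining, and principal, case is $\eta(\sharp) = [\chi]\eta'(\sharp)$. Here I would first apply the announcement necessitation rule $R2$ to the induction hypothesis to obtain $\vdash [\chi](\eta'(\varphi) \rightarrow \eta'(\psi))$, and then invoke the normality (the $K$-schema) of the public announcement operator, $\vdash [\chi](\alpha \rightarrow \beta) \rightarrow ([\chi]\alpha \rightarrow [\chi]\beta)$, together with $R0$ to conclude $\vdash [\chi]\eta'(\varphi) \rightarrow [\chi]\eta'(\psi)$, which is $\eta(\varphi) \rightarrow \eta(\psi)$. The main obstacle is justifying this normality schema proof-theoretically, since $[\chi]$ is not a primitive normal modality but is governed by the reduction axioms $A5$--$A9$. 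I expect to derive it from $A6$ and $A7$: rewriting $\alpha \rightarrow \beta$ as $\neg(\alpha \wedge \neg\beta)$ and applying $A6$ and $A7$ reduces $[\chi](\alpha \rightarrow \beta)$ to $\chi \rightarrow \neg([\chi]\alpha \wedge (\chi \rightarrow \neg[\chi]\beta))$, from which the $K$-schema follows by propositional reasoning once the vacuity of false announcements, $\vdash \neg\chi \rightarrow [\chi]\beta$, is available; alternatively one simply cites that the announcement operator of $\mathbf{PAL}$ is a normal modality \cite{del}.

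Once the announcement $K$-schema is in hand, the three inductive cases close the induction and the lemma follows. The only delicate point is that every step must yield a genuine \emph{theorem} rather than merely a validity, but this is guaranteed because each case uses only the derivation rules $R0$, $R1$, $R2$ and the axioms $A0$, $A1$, $A6$, $A7$, all of which are part of the calculus, so no appeal to semantics is needed.
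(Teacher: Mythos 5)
Your proposal is correct and follows essentially the same route as the paper's proof: induction on the structure of $\eta$, handling the implication case by propositional reasoning ($A0$, $R0$), the $K_a$ case by $R1$, $A1$, and $R0$, and the $[\tau]$ case by announcement necessitation plus the $K$-schema for $[\chi]$, which the paper likewise discharges by citing that the $\mathbf{PAL}$ announcement operator is normal \cite{del}. Your extra sketch of deriving that schema from $A6$/$A7$ is a harmless (if slightly delicate, since it needs replacement of equivalents under $[\chi]$) alternative to that citation.
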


\begin{proof}
    Assume that $\varphi \rightarrow \psi$ is a theorem. We prove the lemma by induction on $\eta$.
    
    \textit{Base case} $\eta:= \sharp$. Formula $\varphi \rightarrow \psi$ is a theorem by assumption.
    
    \textit{Induction Hypothesis}. Assume that for some $\eta$, $\eta(\varphi) \rightarrow \eta (\psi)$ is a theorem.
    
    \textit{Case} $(\tau \rightarrow \eta (\varphi)) \rightarrow (\tau \rightarrow \eta (\psi))$ for some $\tau \in \mathcal{L}_{CoRGAL}$. Formula $(\eta(\varphi) \rightarrow \eta(\psi)) \rightarrow ((\tau \rightarrow \eta (\varphi)) \rightarrow (\tau \rightarrow \eta (\psi)))$ is a propositional tautology, and, hence, a theorem of \textbf{CoRGAL}. Using the Induction Hypothesis and $R0$, we have that $(\tau \rightarrow \eta (\varphi)) \rightarrow (\tau \rightarrow \eta (\psi))$ is a theorem.
    
\textit{Case} $(K_a \eta (\varphi)) \rightarrow (K_a \eta (\psi))$ for some $a \in A$. Since $\eta(\varphi) \rightarrow \eta(\psi)$ is a theorem by the Induction Hypothesis, $K_a (\eta(\varphi) \rightarrow \eta(\psi))$ is also a theorem by $R1$. Next, $K_a (\eta(\varphi) \rightarrow \eta(\psi)) \rightarrow (K_a \eta (\varphi) \rightarrow K_a \eta (\psi))$ is an instance of $A1$, and, hence, a theorem. Finally, using $R0$ we have that $K_a \eta (\varphi) \rightarrow K_a \eta (\psi)$ is a theorem.
        
\textit{Case} $([\tau] \eta (\varphi)) \rightarrow ([\tau] \eta (\psi))$ for some $\tau \in \mathcal{L}_{CoRGAL}$. Formula $[\tau](\eta(\varphi) \rightarrow \eta(\psi)) \rightarrow ([\tau]\eta(\varphi) \rightarrow [\tau]\eta(\psi))$ is a theorem of \textbf{PAL} (see \cite[Chapter 4]{del}), and hence of \textbf{CoRGAL}. Using the Induction Hypothesis and $R0$ we conclude that $[\tau]\eta(\varphi) \rightarrow [\tau]\eta(\psi)$ is also a theorem of \textbf{CoRGAL}. 
\end{proof}

\begin{proposition}
\label{prop::mctapp}
    Let \(x\) be a theory, \(\varphi, \psi \in \mathcal{L}_{CoRGAL}\), and \(a \in A\). The following are theories: \(x + \varphi = \{\psi: \varphi \rightarrow \psi \in x\}, K_a x = \{\varphi: K_a \varphi \in x\}\), and \([\varphi]x = \{\psi: [\varphi]\psi \in x\}\).
\end{proposition}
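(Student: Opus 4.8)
The plan is to check, for each of the three candidate sets, the three defining conditions of a \emph{theory} (Definition \ref{def:theory}): that it contains every theorem of $\mathbf{CoRGAL}$, and that it is closed under $R0$, $R5$, and $R6$. The unifying observation that makes all three cases uniform is that the three ways of building a new necessity form from an old one in the inductive clause $\eta ::= \sharp \mid \varphi \rightarrow \eta(\sharp) \mid K_a \eta(\sharp) \mid [\varphi]\eta(\sharp)$ correspond exactly to the three set-formers $x + \varphi$, $K_a x$, and $[\varphi]x$. Concretely, if $\eta(\sharp)$ is a necessity form then so are $\varphi \rightarrow \eta(\sharp)$, $K_a \eta(\sharp)$, and $[\varphi]\eta(\sharp)$, and this is precisely what lets closure under the infinitary rules transfer from $x$ to the derived sets. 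To avoid clashes I keep the fixed parameter of each set-former distinct from the bound formula variables.

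For membership of $\mathbf{CoRGAL}$ I would argue as follows. If $\theta$ is a theorem, then $\theta \rightarrow (\varphi \rightarrow \theta)$ is a propositional tautology, so $\varphi \rightarrow \theta \in \mathbf{CoRGAL} \subseteq x$ by $R0$, giving $\theta \in x+\varphi$; for $K_a x$ the rule $R1$ yields $K_a \theta \in \mathbf{CoRGAL} \subseteq x$, hence $\theta \in K_a x$; and for $[\varphi]x$ the rule $R2$ yields $[\varphi]\theta \in \mathbf{CoRGAL} \subseteq x$, hence $\theta \in [\varphi]x$. For closure under $R0$, suppose $\theta$ and $\theta \rightarrow \sigma$ both lie in the set in question. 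In the case of $x+\varphi$ this means $\varphi \rightarrow \theta, \varphi \rightarrow (\theta \rightarrow \sigma) \in x$, and the tautology $(\varphi \rightarrow \theta) \rightarrow ((\varphi \rightarrow (\theta \rightarrow \sigma)) \rightarrow (\varphi \rightarrow \sigma))$ together with two applications of $R0$ inside $x$ gives $\varphi \rightarrow \sigma \in x$, i.e.\ $\sigma \in x+\varphi$. The cases of $K_a x$ and $[\varphi]x$ are analogous, using respectively the normality axiom $A1$ for $K_a$ and the derivable normality principle $[\varphi](\theta \rightarrow \sigma) \rightarrow ([\varphi]\theta \rightarrow [\varphi]\sigma)$ for the announcement modality, again closing off with $R0$ inside $x$.

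Closure under $R5$ is where the necessity-form observation does the real work. Take $x+\varphi$ and suppose that for all $\psi_G$ we have $\eta(\chi \wedge [\psi_G \wedge \chi]\theta) \in x+\varphi$, i.e.\ $\varphi \rightarrow \eta(\chi \wedge [\psi_G \wedge \chi]\theta) \in x$. Setting $\eta'(\sharp) := \varphi \rightarrow \eta(\sharp)$, which is again a necessity form, this reads $\eta'(\chi \wedge [\psi_G \wedge \chi]\theta) \in x$ for all $\psi_G$; since $x$ is closed under $R5$ we obtain $\eta'([G,\chi]\theta) = \varphi \rightarrow \eta([G,\chi]\theta) \in x$, i.e.\ $\eta([G,\chi]\theta) \in x+\varphi$. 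Replacing $\eta'$ by $K_a \eta$ and by $[\varphi]\eta$ handles $K_a x$ and $[\varphi]x$ verbatim, and the identical argument with the $R6$ premise $\eta(\langle A \setminus G, \psi_G \rangle \theta)$ and conclusion $\eta([ \! \langle G \rangle \! ]\theta)$ establishes closure under $R6$ in all three cases.

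The only genuine content is this $R5$/$R6$ step, and the one thing to get right is exactly that prefixing a necessity form by $\varphi \rightarrow (\cdot)$, $K_a(\cdot)$, or $[\varphi](\cdot)$ again produces a necessity form, so that the closure hypotheses on $x$ are applicable to the prefixed forms. I therefore expect no real obstacle beyond this syntactic bookkeeping, together with keeping the bound formula variable of each set-former distinct from its fixed parameter; the membership and $R0$ clauses are then routine propositional and modal reasoning inside $x$.
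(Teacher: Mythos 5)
Your proof is correct and takes essentially the same approach as the paper: the core step---prefixing a necessity form with $\varphi \rightarrow (\cdot)$, $K_a(\cdot)$, or $[\varphi](\cdot)$ so that closure of $x$ under $R5$ and $R6$ transfers to $x+\varphi$, $K_a x$, and $[\varphi]x$---is exactly the paper's argument. The paper only carries out this $R5$/$R6$ step explicitly, delegating the routine parts (containment of $\mathbf{CoRGAL}$ and closure under $R0$) to the cited earlier work of Balbiani et al., which you instead spell out correctly via $R1$/$R2$ applied inside $\mathbf{CoRGAL}$ and the normality principles for $K_a$ and $[\varphi]$.
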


\begin{proof}
The proof is an extension of the one from \cite{balbiani08}. We show that corresponding sets are closed under $R5$ and $R6$. 

\textit{Case} $x + \varphi$.
    Suppose that \(\eta (\chi \wedge [\psi_G \wedge \chi]\tau) \in x + \varphi\) for some given $\chi$, for all $\psi_G$, and for some $\tau \in \mathcal{L}_{CoRGAL}$. 
    This means that \(\varphi \rightarrow \eta (\chi \wedge [\psi_G \wedge \chi]\tau) \in x\) for all $\psi_G$. 
    Since \(\varphi \rightarrow \eta (\chi \wedge [\psi_G \wedge \chi]\tau)\) is a necessity form, and \(x\) is closed under \(R5\) (by Definition \ref{def:theory}), we infer that \(\varphi \rightarrow \eta ([G, \chi] \tau) \in x\), and, consequently, \(\eta ([G, \chi] \tau) \in x + \varphi\). 
    So, \(x + \varphi\) is closed under \(R5\).
    
    Now, let $\forall \psi_G$: $\eta(\langle A \setminus G, \psi_G \rangle \tau) \in x + \varphi$. By the definition of $x+\varphi$ this means that $\varphi \rightarrow \eta(\langle A \setminus G, \psi_G \rangle \tau) \in x$ for all $\psi_G$. Since $\varphi \rightarrow \eta(\langle A \setminus G, \psi_G \rangle \tau$ is a necessity form and $x$ is closed under $R6$, we infer that \(\varphi \rightarrow \eta( [ \! \langle G \rangle \! ] \tau) \in x\), and, consequently, \(\eta( [ \! \langle G \rangle \! ] \tau) \in x + \varphi\). So, \(x + \varphi\) is closed under \(R6\).
    
    \textit{Case} $K_a x$.
    Suppose that \(\eta (\chi \wedge [\psi_G \wedge \chi]\tau) \in K_a x\) for some given $\chi$, for all $\psi_G$, and for some $\tau \in \mathcal{L}_{CoRGAL}$. 
    This means that \(K_a \eta (\chi \wedge [\psi_G \wedge \chi]\tau) \in x\) for all $\psi_G$. 
    Since \(K_a \eta (\chi \wedge [\psi_G \wedge \chi]\tau)\) is a necessity form, and \(x\) is closed under \(R5\) (by Definition \ref{def:theory}), we infer that \(K_a \eta ([G, \chi] \tau) \in x\), and, consequently, \(\eta ([G, \chi] \tau) \in K_a x\). 
    So, \(K_a x\) is closed under \(R5\).
    
    Now, let $\forall \psi_G$: $\eta(\langle A \setminus G, \psi_G \rangle \tau) \in K_a x$. By the definition of $K_a x$ this means that $K_a \eta(\langle A \setminus G, \psi_G \rangle \tau) \in x$ for all $\psi_G$. Since $K_a \eta(\langle A \setminus G, \psi_G \rangle \tau$ is a necessity form and $x$ is closed under $R6$, we infer that \(K_a \eta( [ \! \langle G \rangle \! ] \tau) \in x\), and, consequently, \(\eta( [ \! \langle G \rangle \! ] \tau) \in K_a x\). So, \(K_a x\) is closed under \(R6\).
    
        \textit{Case} $[\varphi] x$.
 Finally, suppose that \(\eta (\chi \wedge [\psi_G \wedge \chi]\tau) \in [\varphi] x\) for some given $\chi$, for all $\psi_G$, and for some $\tau \in \mathcal{L}_{CoRGAL}$. 
    This means that \([\varphi] \eta (\chi \wedge [\psi_G \wedge \chi]\tau) \in x\) for all $\psi_G$. 
    Since \([\varphi] \eta (\chi \wedge [\psi_G \wedge \chi]\tau)\) is a necessity form, and \(x\) is closed under \(R5\) (by Definition \ref{def:theory}), we infer that \([\varphi] \eta ([G, \chi] \tau) \in x\), and, consequently, \(\eta ([G, \chi] \tau) \in [\varphi] x\). 
    So, \([\varphi] x\) is closed under \(R5\).
    
    Now, let $\forall \psi_G$: $\eta(\langle A \setminus G, \psi_G \rangle \tau) \in [\varphi] x$. By the definition of $[\varphi] x$ this means that $[\varphi] \eta(\langle A \setminus G, \psi_G \rangle \tau) \in x$ for all $\psi_G$. Since $[\varphi] \eta(\langle A \setminus G, \psi_G \rangle \tau$ is a necessity form and $x$ is closed under $R6$, we infer that \([\varphi] \eta( [ \! \langle G \rangle \! ] \tau) \in x\), and, consequently, \(\eta( [ \! \langle G \rangle \! ] \tau) \in [\varphi] x\). So, \([\varphi] x\) is closed under \(R6\).
\end{proof}

\begin{proposition}
\label{prop::consistencyapp}
Let \(\varphi \in \mathcal{L}_{CoRGAL}\). Then \(\mathbf{CoRGAL} + \varphi\) is consistent iff \(\neg \varphi \not \in \mathbf{CoRGAL}\).
\end{proposition}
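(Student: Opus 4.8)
The plan is to reduce the entire claim to a single propositional observation inside $\mathbf{CoRGAL}$, namely that $\bot \in \mathbf{CoRGAL} + \varphi$ holds precisely when $\neg\varphi \in \mathbf{CoRGAL}$. Once this equivalence is in hand, the biconditional follows immediately by unfolding the definition of consistency, so there is no need for any model-theoretic detour.

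First I would note that $\mathbf{CoRGAL}$ is itself a theory in the sense of Definition \ref{def:theory}: it contains $\mathbf{CoRGAL}$ trivially and, being the smallest subset of $\mathcal{L}_{CoRGAL}$ closed under $R0$--$R6$, is in particular closed under $R0$, $R5$, and $R6$. Hence, applying Proposition \ref{prop::mct} with $x = \mathbf{CoRGAL}$, the set $\mathbf{CoRGAL} + \varphi = \{\psi : \varphi \to \psi \in \mathbf{CoRGAL}\}$ is again a theory, so that speaking of its consistency via Definition \ref{def:theory} is meaningful.

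Next I would establish the key equivalence. By the definition of $\mathbf{CoRGAL} + \varphi$ we have that $\bot \in \mathbf{CoRGAL} + \varphi$ iff $\varphi \to \bot \in \mathbf{CoRGAL}$. Since both $(\varphi \to \bot) \to \neg\varphi$ and $\neg\varphi \to (\varphi \to \bot)$ are propositional tautologies, they belong to $\mathbf{CoRGAL}$ by $A0$, and closure under $R0$ (modus ponens) then yields $\varphi \to \bot \in \mathbf{CoRGAL}$ iff $\neg\varphi \in \mathbf{CoRGAL}$. Chaining the two biconditionals gives $\bot \in \mathbf{CoRGAL} + \varphi$ iff $\neg\varphi \in \mathbf{CoRGAL}$.

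Finally, unwinding the definition of consistency, $\mathbf{CoRGAL} + \varphi$ is consistent iff $\bot \notin \mathbf{CoRGAL} + \varphi$, which by the equivalence just proved is iff $\neg\varphi \notin \mathbf{CoRGAL}$, as required. I do not anticipate a genuine obstacle: the entire argument is propositional and uses only $A0$ together with closure under $R0$. The one point that warrants care is the appeal to Proposition \ref{prop::mct} guaranteeing that $\mathbf{CoRGAL} + \varphi$ is in fact a theory, since only then does the notion of consistency from Definition \ref{def:theory} properly apply to it.
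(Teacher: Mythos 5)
Your proof is correct and takes essentially the same route as the paper's: both arguments are purely syntactic, reducing the claim to the observation that $\bot \in \mathbf{CoRGAL} + \varphi$ iff $\varphi \to \bot \in \mathbf{CoRGAL}$ iff $\neg\varphi \in \mathbf{CoRGAL}$, using the definition of $x + \varphi$, axiom $A0$, and closure under $R0$. Your version merely packages the two directions (which the paper handles separately, one by contradiction and one by contraposition) into a single chain of biconditionals, and your explicit check via Proposition \ref{prop::mct} that $\mathbf{CoRGAL} + \varphi$ is a theory is a sound point of care.
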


\begin{proof}

  \textit{From left to right}. Suppose to the contrary that \(\mathbf{CoRGAL} + \varphi\) is consistent and \(\neg \varphi \in \mathbf{CoRGAL}\). Then having both \(\varphi\) and \(\neg \varphi\) means that \(\bot \in \mathbf{CoRGAL}+\varphi\), which contradicts to \(\mathbf{CoRGAL}+\varphi\) being consistent. 
  
  \textit{From right to left}. Let us consider the contrapositive: if \(\mathbf{CoRGAL} + \varphi\) is inconsistent, then \(\neg \varphi \in \mathbf{CoRGAL}\). Since \(\mathbf{CoRGAL} + \varphi\) is inconsistent, \(\bot \in \mathbf{CoRGAL} + \varphi\), or, by Proposition \ref{prop::mct}, \(\varphi \rightarrow \bot \in \mathbf{CoRGAL}\). By consistency of \(\mathbf{CoRGAL}\) and propositional reasoning, we have that \(\neg \varphi \in \mathbf{CoRGAL}\).
\end{proof} 

 \begin{proposition}
  \label{prop:measureapp}
  Let $\psi_G$, $G \subseteq A$, and $\chi, \varphi, \tau \in \mathbf{CoRGAL}$.
\begin{enumerate}
\item $\chi \wedge [\psi_G \wedge \chi]\varphi <^{Size}_{[,], [ \! \langle \! \rangle \! ]} [G, \chi]\varphi$,

\item $[\tau] (\chi \wedge [\psi_G \wedge \chi]\varphi) <^{Size}_{[,], [ \! \langle \! \rangle \! ]} [\tau][G, \chi]\varphi$,

\item $\langle A \setminus G, \psi_G \rangle \varphi <^{Size}_{[,], [ \! \langle \! \rangle \! ]} [ \! \langle G \rangle \! ] \varphi$,

\item $[\tau] \langle A \setminus G, \psi_G \rangle \varphi <^{Size}_{[,], [ \! \langle \! \rangle \! ]} [\tau] [ \! \langle G \rangle \! ] \varphi$.
\end{enumerate}
\end{proposition}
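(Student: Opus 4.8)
The plan is to handle each of the four items by directly computing, or bounding, the three measures $d_{[ \! \langle \! \rangle \! ]}$, $d_{[,]}$, and $Size$ on both sides, and then reading off the lexicographic comparison that defines $<^{Size}_{[,], [ \! \langle \! \rangle \! ]}$. The one fact I would isolate first and reuse everywhere is that each $\psi_G \in \mathcal{L}_{EL}^G$ is an epistemic formula, so by the remark following the definition of the order, $d_{[,]}(\psi_G) = d_{[ \! \langle \! \rangle \! ]}(\psi_G) = 0$. This is exactly what prevents the guard $\psi_G \wedge \chi$ from inflating either depth measure beyond that of $\chi$ alone. A pleasant consequence of the argument is that the comparison is always decided at one of the two depth coordinates, so the $Size$ coordinate is never reached.

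For item (1) I would first compute $d_{[ \! \langle \! \rangle \! ]}$ on both sides. Using $d_{[ \! \langle \! \rangle \! ]}(\psi_G)=0$ together with the conjunction and public-announcement clauses, the left-hand side evaluates to $d_{[ \! \langle \! \rangle \! ]}(\chi) + d_{[ \! \langle \! \rangle \! ]}(\varphi)$; by the clause $d_{[ \! \langle \! \rangle \! ]}([G,\chi]\varphi) = d_{[ \! \langle \! \rangle \! ]}(\chi) + d_{[ \! \langle \! \rangle \! ]}(\varphi)$ (which carries no $+1$) the right-hand side evaluates to the same quantity, so the top coordinate ties. I would then compute $d_{[,]}$: the identical computation gives $d_{[,]}(\chi) + d_{[,]}(\varphi)$ on the left, whereas $d_{[,]}([G,\chi]\varphi) = d_{[,]}(\chi) + d_{[,]}(\varphi) + 1$ supplies the decisive extra $+1$ on the right. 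Hence the $d_{[,]}$-coordinate is strictly smaller on the left, which settles (1). Item (2) is then immediate: prefixing both sides with $[\tau]$ adds the same quantity $d_{[ \! \langle \! \rangle \! ]}(\tau)$, respectively $d_{[,]}(\tau)$, to both sides, preserving the tie in the first coordinate and the strict inequality in the second.

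For items (3) and (4) the decisive coordinate is instead $d_{[ \! \langle \! \rangle \! ]}$. Here I would first unfold the dual $\langle A \setminus G, \psi_G \rangle \varphi$ as $\neg [A \setminus G, \psi_G] \neg \varphi$ and apply the negation clauses, which leave all three measures unchanged. Using $d_{[ \! \langle \! \rangle \! ]}(\psi_G)=0$ again, the relativised-group clause yields $d_{[ \! \langle \! \rangle \! ]}(\langle A \setminus G, \psi_G \rangle \varphi) = d_{[ \! \langle \! \rangle \! ]}(\varphi)$, while the coalition clause $d_{[ \! \langle \! \rangle \! ]}([ \! \langle G \rangle \! ]\varphi) = d_{[ \! \langle \! \rangle \! ]}(\varphi) + 1$ is strictly larger. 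Thus (3) holds already at the top coordinate, and (4) follows by exactly the same $[\tau]$-prefixing argument as for (2).

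The computations are routine; the only place demanding genuine care is the bookkeeping in items (3) and (4), where the abbreviation $\langle A\setminus G, \psi_G\rangle$ must be expanded to its $\neg[\,\cdot\,]\neg$ form before the measure clauses apply, and where one must invoke the correct \emph{exceptional} clauses of $d_{[ \! \langle \! \rangle \! ]}$ (no $+1$ for $[G,\chi]$, but $+1$ for $[ \! \langle G \rangle \! ]$) rather than the default ones inherited from $d_{[,]}$. I expect no real obstacle beyond keeping the two depth functions, which differ only on the two announcement operators, cleanly separated.
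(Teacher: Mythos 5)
Your proposal is correct and follows essentially the same route as the paper's proof: computing $d_{[ \! \langle \! \rangle \! ]}$ and $d_{[,]}$ directly on both sides, using that $\psi_G$ is epistemic (hence of zero depth) and that the exceptional clauses add no $+1$ for $[G,\chi]$ under $d_{[ \! \langle \! \rangle \! ]}$ but do for $[ \! \langle G \rangle \! ]$, so that items (1)--(2) are decided at the $d_{[,]}$ coordinate and items (3)--(4) at the $d_{[ \! \langle \! \rangle \! ]}$ coordinate. Your explicit unfolding of $\langle A \setminus G, \psi_G \rangle \varphi$ into its $\neg[\,\cdot\,]\neg$ form is a minor bookkeeping detail the paper leaves implicit, but it does not change the argument.
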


\begin{proof}

\begin{enumerate}
\item Note that $[\! \langle \! \rangle \! ]$-depth for both sides of the inequality is the same and equals $d_{[ \! \langle \! \rangle \! ]} (\chi) + d_{[ \! \langle \! \rangle \! ]} (\varphi)$. In particular,  $d_{[ \! \langle \! \rangle \! ]} (\chi \wedge [\psi_G \wedge \chi]\varphi) =$ $\textrm{max} \{d_{[ \! \langle \! \rangle \! ]} (\chi),$ $d_{[ \! \langle \! \rangle \! ]} ([\psi_G \wedge \chi]\varphi)\} =$ $d_{[ \! \langle \! \rangle \! ]} ([\psi_G \wedge \chi]\varphi)=$ $d_{[ \! \langle \! \rangle \! ]} (\psi_G \wedge \chi) + d_{[ \! \langle \! \rangle \! ]} (\varphi) = $ $d_{[ \! \langle \! \rangle \! ]} (\chi)+$ $d_{[ \! \langle \! \rangle \! ]} (\varphi)$. Depth of the right-hand side formula is $d_{[ \! \langle \! \rangle \! ]} ([G, \chi]\varphi) = $ $d_{[ \! \langle \! \rangle \! ]} (\chi) +$ $d_{[ \! \langle \! \rangle \! ]} (\varphi)$. However, $[,]$-depth is different. Indeed, $d_{[,]} ([\psi_G \wedge \chi] \varphi) = $ $d_{[,]} (\psi_G \wedge \chi) + $ $d_{[,]} (\varphi) = $ $d_{[,]}(\chi) + d_{[,]} (\varphi)$. For the right-hand side formula we have that $d_{[,]} ([G,\chi] \varphi) = $ $d_{[,]} (\chi) + $ $d_{[,]} (\varphi) + 1$. Hence, $\chi \wedge [\psi_G \wedge \chi]\varphi <^{Size}_{[,], [ \! \langle \! \rangle \! ]} [G, \chi]\varphi$. 

\item On the left-hand side we have $d_{[ \! \langle \! \rangle \! ]} ([\tau] (\chi \wedge [\psi_G \wedge \chi]\varphi)) = $ $d_{[ \! \langle \! \rangle \! ]} (\tau) + d_{[ \! \langle \! \rangle \! ]} (\chi \wedge [\psi_G \wedge \chi]\varphi) = $ $d_{[ \! \langle \! \rangle \! ]} (\tau) + d_{[ \! \langle \! \rangle \! ]} (\chi) + $ $d_{[ \! \langle \! \rangle \! ]} (\varphi)$. We have the same $[ \! \langle \! \rangle \! ]$-depth of the right-hand side:  $d_{ [ \! \langle \! \rangle \! ]} ([\tau] [G, \chi] \varphi) = $ $d_{ [ \! \langle \! \rangle \! ]} (\tau) + d_{ [ \! \langle \! \rangle \! ]} ([G, \chi] \varphi) = $ $d_{ [ \! \langle \! \rangle \! ]} (\tau) +$ $d_{[ \! \langle \! \rangle \! ]} (\chi) + $ $d_{ [ \! \langle \! \rangle \! ]} (\varphi)$. However, $[,]$-depth is different: $d_{[,]} (\tau) + d_{[,]} (\chi) + d_{[,]} (\varphi)$ and $d_{[,]} (\tau) + d_{[,]} (\chi) + d_{[,]} (\varphi) + 1$ correspondingly (see the previous case). Hence, $[\tau] (\chi \wedge [\psi_G \wedge \chi]\varphi) <^{Size}_{[,], [ \! \langle \! \rangle \! ]} [\tau][G, \chi]\varphi$. 

\item On the left-hand side we have that  $d_{[ \! \langle \! \rangle \! ]} (\langle A \setminus G, \psi_G \rangle \varphi) =$ $d_{[ \! \langle \! \rangle \! ]} (\varphi)$, and on the right-hand side the depth is $d_{[ \! \langle \! \rangle \! ]} [ \! \langle G \rangle \! ] \varphi = $ $d_{[ \! \langle \! \rangle \! ]} (\varphi) + 1$. Hence, $\langle A \setminus G, \psi_G \rangle \varphi <^{Size}_{[,], [ \! \langle \! \rangle \! ]} [ \! \langle G \rangle \! ] \varphi$.

\item Again, according to the definition of $[ \! \langle \! \rangle \! ]$-depth, $d_{[ \! \langle \! \rangle \! ]} ([\tau] \langle A \setminus G, \psi_G \rangle \varphi) = $ $d_{[ \! \langle \! \rangle \! ]} (\tau) + d_{[ \! \langle \! \rangle \! ]} (\langle A \setminus G, \psi_G \rangle \varphi) = $ $d_{[ \! \langle \! \rangle \! ]} (\tau) + d_{[ \! \langle \! \rangle \! ]} (\varphi)$, whereas $d_{[ \! \langle \! \rangle \! ]} ([\tau] [ \! \langle G \rangle \! ] \varphi) = $ $d_{[ \! \langle \! \rangle \! ]} (\tau) + d_{[ \! \langle \! \rangle \! ]} ([ \! \langle G \rangle \! ] \varphi) = $ $d_{[ \! \langle \! \rangle \! ]} (\tau) + d_{[ \! \langle \! \rangle \! ]} (\varphi) + 1$. Thus, $[\tau] \langle A \setminus G, \psi_G \rangle \varphi <^{Size}_{[,], [ \! \langle \! \rangle \! ]} [\tau] [ \! \langle G \rangle \! ] \varphi$.   
\end{enumerate}
\end{proof}

\end{document}